\documentclass[english]{smfart}

\usepackage{a4wide}

\usepackage[latin1]{inputenc}
\usepackage[T1]{fontenc}
\usepackage{amssymb}
\usepackage{amsmath}
\usepackage{smfthm}
\usepackage{mathrsfs}
\usepackage{color}
\usepackage{url}
\usepackage{tikz}
\usetikzlibrary{shapes}
\usetikzlibrary{patterns}
\usetikzlibrary{decorations.pathreplacing}
\usepackage[noend]{algpseudocode}
\usepackage{enumerate}
\usepackage{hyperref}
\usepackage{stmaryrd}
\usepackage{booktabs}
\usepackage[capitalise]{cleveref}
\usepackage[ruled,vlined]{algorithm2e}

\usepackage{nicefrac}
\usepackage{thm-restate}

\theoremstyle{plain}
\newtheorem{theorem}{Theorem}
{\bfseries}{\itshape}
{\bfseries}{\itshape}
\newtheorem{proposition}[theorem]{Proposition}
\newtheorem{corollary}[theorem]{Corollary}
\newtheorem{lemma}[theorem]{Lemma}

\newtheorem{pb}{Problem}

\theoremstyle{definition}
\newtheorem{definition}[theorem]{Definition}

\theoremstyle{remark}
\newtheorem{remark}[theorem]{Remark}

\makeatletter
\newcounter{restate}

\makeatother

\makeatletter
\renewcommand*\env@matrix[1][*\c@MaxMatrixCols c]{\hskip -\arraycolsep
  \let\@ifnextchar\new@ifnextchar
  \array{#1}}
\makeatother
 
\newcommand{\cB}{\mathcal{B}}

\newcommand{\cF}{\mathcal{F}}

\newcommand{\cS}{\mathcal{S}}

\newcommand{\Rc}{\mathcal{R}}

\newcommand{\av}{{\mathbf{a}}}
\newcommand{\bv}{{\mathbf{b}}}
\newcommand{\cv}{{\mathbf{c}}}

\newcommand{\vv}{{\mathbf{v}}}
\newcommand{\xv}{{\mathbf{x}}}
\newcommand{\yv}{{\mathbf{y}}}

\newcommand{\Am}{{\mathbf{A}}}
\newcommand{\Bm}{{\mathbf{B}}}
\newcommand{\Cm}{{\mathbf{C}}}
\newcommand{\Dm}{{\mathbf{D}}}
\newcommand{\Em}{{\mathbf{E}}}
\newcommand{\Fm}{{\mathbf{F}}}

\newcommand{\Nm}{{\mathbf{N}}}
\newcommand{\Id}{{\mathbf{I}}}
\newcommand{\Rm}{{\mathbf{R}}}
\newcommand{\Xm}{{\mathbf{X}}}

\newcommand{\Mm}{{\mathbf{M}}}
\renewcommand{\Nm}{{\mathbf{N}}}
\newcommand{\Pm}{{\mathbf{P}}}
\newcommand{\Qm}{{\mathbf{Q}}}
\newcommand{\Sm}{{\mathbf{S}}}
\newcommand{\Tm}{{\mathbf{T}}}
\newcommand{\Um}{{\mathbf{U}}}
\newcommand{\Vm}{{\mathbf{V}}}
\newcommand{\Thetam}{{\mathbf{\Theta}}}
\newcommand{\Gammam}{{\mathbf{\Gamma}}}

\newcommand{\codeVec}[1]{\ensuremath{\mathscr{#1}}_{\textrm{Vec}}}
\newcommand{\Ccv}{{\codeVec C}}
\newcommand{\Dcv}{{\codeVec D}}

\newcommand{\codeMat}[1]{\ensuremath{\mathscr{#1}}_{\textrm{Mat}}}

\newcommand{\codeMatTransp}[1]{\ensuremath{\mathscr{#1}}_{\textrm{Mat}}^{\top}}
\newcommand{\Ccm}{{\codeMat C}}

\newcommand{\Dcm}{{\codeMat D}}

\newcommand{\Dcmt}{{\codeMatTransp D}}

\newcommand{\rk}{\textrm{Rk}}

\newcommand{\Row}{\mathbf{Row}}

\newcommand{\F}{\mathbb{F}}
\newcommand{\Fq}{\F_q}

\newcommand{\Fqm}{\F_{q^m}}

\makeatletter
\newcommand*{\transp}{{\mathpalette\@transpose{}}}
\newcommand*{\@transpose}[2]{\raisebox{\depth}{$\m@th#1\intercal$}}
\makeatother
\newcommand{\transpose}[1]{{#1}^{\top}}

\newcommand{\eqdef}{\mathop{=}\limits^{def}}

\newcommand{\Mspace}[3]{\mathcal{M}_{#1, #2}(#3)}
\newcommand{\Mspdef}{\Mspace{m}{n}{\Fq}}
\newcommand{\sqMspace}[2]{\mathcal{M}_{#1}(#2)}

\newcommand{\GL}[2]{\mathbf{GL}_{#1}(#2)}

\newcommand{\GLk}{\mathbf{GL}_{k}(\Fq)}
\newcommand{\GLn}{\mathbf{GL}_{n}(\Fq)}
\newcommand{\tr}{\textrm{Tr}}

\newcommand{\Mat}[3]{\textup{Mat}_{#1,#2}\left(#3\right)}

\newcommand{\stabl}[1]{\textrm{Stab}_{\textrm{left}}(#1)}

\newcommand{\stabr}[1]{\textrm{Stab}_{\textrm{right}}(#1)}

\newcommand{\cond}[2]{\mathbf{Cond}(#1, #2)}

\def\rad{\textrm{Rad}}

\def\MEP{\textsf{ME}}

\def\MCEP{\textsf{MCE}}
\def\MCREP{\textsf{MCRE}}
\def\VCEP{\textsf{V-MCE}}
\def\hVCEP{\textsf{HV-MCE}}
\def\Falgo{\textsf{F}--algorithm}

\newcommand{\map}[4]{
\left\{
    \begin{array}{ccc}
        #1 & \longrightarrow & #2 \\
        #3 & \longmapsto & #4
    \end{array}
\right.
}
\newcommand{\basis}{\mathcal{B}}
\newcommand{\Mb}{M_{\basis}}
\newcommand{\Mbase}[1]{M_{#1}}
\renewcommand{\leq}{\leqslant}
\renewcommand{\geq}{\geqslant}

\newcommand{\ie}{{\em i.e.}} \title[On the hardness of code equivalence problems in rank metric]{On the Hardness of Code
  Equivalence Problems in Rank Metric}

\author{Alain Couvreur}
\address{Inria}
\address{LIX, CNRS UMR 7161, \'Ecole Polytechnique\\
1 rue Honor\'e d'Estienne d'Orves\\
91120 {\sc Palaiseau Cedex}}
\email{alain.couvreur@inria.fr}

\author{Thomas
  Debris--Alazard}
\address{Inria}
\address{LIX, CNRS UMR 7161, \'Ecole Polytechnique\\
1 rue Honor\'e d'Estienne d'Orves\\
91120 {\sc Palaiseau Cedex}}
\email{thomas.debris@inria.fr}

\author{Philippe Gaborit}
\address{Xlim, CNRS, UMR 7252,
  Universit\'e de Limoges\\
  123, avenue Albert Thomas\\
  87060 Limoges Cedex}
  \email{gaborit@unilim.fr}

\begin{abstract}
  In the recent years, the notion of rank metric in the context of
  coding theory has known many interesting developments in terms of
  applications such as space time coding, network coding or public key
  cryptography.  These applications raised the interest of the
  community for theoretical properties of this type of codes, such as
  the hardness of decoding in rank metric. Among classical problems
  associated to codes for a given metric, the notion of code
  equivalence (to decide if two codes are isometric) has always been
  of the greatest interest, for its cryptographic applications or its
  deep connexions to the graph isomorphism problem.

  In this article, we discuss the hardness of the
  code equivalence problem in rank metric for $\Fqm$--linear and
  general rank metric codes. In the $\Fqm$--linear case, we reduce the
  underlying problem to another one called {\em Matrix Codes Right
    Equivalence Problem}. We prove the latter problem to
  be either in $\mathcal{P}$ or in $\mathcal{ZPP}$ depending of the
  ground field size.  This is obtained by designing an algorithm whose
  principal routines are linear algebra and factoring polynomials over
  finite fields.  It turns out that the most difficult instances
  involve codes with non trivial {\em stabilizer algebras}. The
  resolution of the latter case will involve tools related to finite
  dimensional algebras and Wedderburn--Artin theory. It
  is interesting to note that 30 years ago, an important trend in
  theoretical computer science consisted to design algorithms making
  effective major results of this theory. These algorithmic
  results turn out to be particularly useful in the present article.
  
  Finally, for general matrix codes, we prove that the equivalence
  problem (both left and right) is at least as hard as the
  well--studied {\em Monomial Equivalence Problem} for codes endowed
  with the Hamming metric.
\end{abstract}

\begin{document}
\maketitle

\tableofcontents
\newpage
\section*{Introduction}

\subsection*{The code equivalence problem.} Given two codes over a
finite field, the Code Equivalence problem (CE) asks if they are
isometric when embedded with the Hamming metric, namely if they are
image of each other by a permutation. Code equivalence is a
longstanding problem in computer science with many applications to
cryptography and strong connections with theoretical computer science
problems such as the graph isomorphism problem.  In particular, it has
been proved in \cite{PR97} that the graph isomorphism problem reduces
to the permutation equivalence problem. Therefore any solver of CE in
polynomial time would show that the graph isomorphism problem lies in
$\mathcal{P}$, statement for which we do not know if it is true or not
despite many efforts. This result tends to show that we cannot hope to
get a polynomial time algorithm for solving the code equivalence
problem. On the other hand, \cite{PR97} also proved that this problem
is not $\mathcal{NP}$--complete unless the polynomial-time hierarchy
collapses. Therefore CE lies in an intermediary situation for which
its hardness is not clear.

In another line of works, algorithms have  been considered to
solve the code equivalence problem. The first one, proposed by Leon
\cite{L82}, computes a list of minimum weight codewords
of both codes and try to match them (usually with graph techniques) to
recover the permutation. However such approach leads to an exponential
time algorithm in average over the inputs. Latter Sendrier \cite{S00}
proposed a new approach based on the computation of the hulls (the
intersection of a code and its dual) of both codes.  Interestingly
enough Sendrier gave an algorithm (when codes are over the binary
field) whose ``practical'' complexity is exponential in the dimension
of the hull but since their average dimension for random codes is very
low \cite{S97a} it shows that, in average (and not in the worst case),
that for binary codes (and also over $\F_3$ and $\F_4$), equivalence
is easy to decide. On the other hand, the code equivalence problem
remains difficult for $q \geq 5$ \cite{SS13}.
We emphasize that the efficiency of the aforementioned algorithms
rests on heuristics. They may require an exponential time in
some rare worst cases.

Our contribution in this article is to look at the difficulty of the code equivalence problem but when codes are embedded with the rank metric instead of the Hamming weight.
Note that these kind of equivalence problem exists also with the Euclidean metric where  there is a notion of lattice isomorphism. It has been studied
by Haviv and Regev in \cite{HR14}.

\subsection*{Rank metric and its applications.} Besides the well known notions of Hamming distance for error-correcting codes and Euclidean distance for lattices, there is also the 
concept of rank metric which was introduced in 1951 by Loo-Keng Hu
\cite{H51} as ``arithmetic distance'' for matrices over a field
$\Fq$. Given two $n \times n$ matrices $\Am$ and $\Bm$ over a finite
field $\Fq$, the rank distance between $\Am$ and $\Bm$ is defined as
$|\Am-\Bm| =\textup{Rank}(\Am - \Bm)$. Later, in 1978, Delsarte
defined \cite{D78} the notion of rank distance on the space of
bilinear forms and proposed a
construction of optimal matrix codes in bilinear form
representation. A matrix code over $\Fq$ is defined as an
$\Fq$--linear subspace of the space of $m \times n$ matrices over
$\Fq$ endowed with the rank metric. Later, in 1985, Gabidulin
introduced in \cite{G85} the notion of rank metric codes in vector
representation (as opposed to the matrix representation) over a finite
extension field $\F_{q^{m}}$ of $\Fq$. These codes are known as $\Fqm$-linear codes and they form a particular subclass of matrix codes. In the same paper, Gabidulin
also introduced an optimal class of $\Fqm$-linear codes: the so-called Gabidulin codes, which can be regarded as analogues of Reed-Solomon codes but in a rank metric context, where
polynomials are replaced by so--called {\em linearized} polynomials as
introduced by Ore in 1933 in \cite{O33}. Interestingly, Gabidulin codes are almost the only matrix codes for which we known an efficient decoding algorithm. All matrix codes
with an efficient decoding algorithm are $\Fqm$-linear
\cite{G85,GMRZ13} (there is also simple codes \cite{SKK10} which are not $\Fqm$-linear but they have a trivial structure).

\subsection*{The rank metric and the code equivalence problem.}
The notion of code equivalence with the rank metric has been introduced by Berger in
\cite{B03} (see also \cite{M14}) and invariants with respect to this
code equivalence are considered in \cite{NPH20}.  However, contrary to
the Hamming metric case, to our knowledge, neither the algorithmic
resolution of the code equivalence problem in rank metric nor its
theoretical hardness have ever been discussed in the literature. This is
the purpose of the present paper.

\subsection*{Our contributions}
In this article, we discuss the code equivalence problems in rank metric
from a theoretical and algorithmic perspective. Our contributions in this article are three--fold
\begin{enumerate}[1$^{\circ}$)]
\item We show that the right equivalence problem is
  easy {\em in the worst case}. Interestingly, this proof involves many theoretical and
  algorithmic developments from Wedderburn-Artin theory,
\item The $\Fqm$-linear case is proved to be reduced to the previous
  case and hence is easy in the worst case,
\item Finally, the general case is proved to be harder than the code
  equivalence problem in Hamming metric by providing a polynomial time
  reduction.

\end{enumerate}

It is striking to observe that in rank metric we can get worst case polynomial-time algorithms (possibly Las Vegas) to solve some set of sub-instances of the equivalence problem (points $1$ and $2$) while no such Hamming counterpart seems reachable. In addition these sub-instances are the most considered one in the rank metric based literature.

\subsection*{Outline of the article} The present article is organised
as follows. Section~\ref{sec:tools} recalls main objects and tools
that we consider. Section~\ref{sec:equiv_prob} gives background on the
code equivalence problem in Hamming metric and established various
code equivalence problems in rank metric. Our main results are stated
in Section~\ref{sec:main} and their proofs are sketched; detailed on
proofs are given in the following sections. In
Section~\ref{sec:MCREP}, we show how to solve the {\em right
  equivalence problem}. Section~\ref{sec:equiFqm} considers the code
equivalence problem for $\Fqm$-linear codes and finally, in
Section~\ref{sec:reduction}, we present a polynomial--time reduction
from the code equivalence problem in Hamming metric to the (both left
and right) equivalence problem in rank metric.

\subsection*{Acknowledgements}
The authors express their deep gratitude to Hugues Randriambololona
for his very relevant remarks and to Xavier Caruso for pointing
out references from representation theory.

 \section{Basic objects and tools}\label{sec:tools}

Before explaining the main ideas behind our results, we need to recall
basic objects and tools that we consider.
In all the article, we will consider two different kinds of
objects which we will refer to {\em codes}. Namely:
\begin{itemize}
\item {\em vector codes} are usual codes, {\em i.e.} vector subspaces of
  $\Fq^n$. Such codes are usually endowed with the Hamming metric.
  These codes are denoted with calligraphic letter and with the
  subscript ``Vec'' such as $\Ccv$;
\item {\em matrix codes}, are subspaces of $\Mspdef$ ($m \times n$ matrices whose coefficients belong to $\Fq$) endowed with the
  rank metric. They are denoted as $\Ccm$.
\end{itemize}

Nevertheless, there is class of vector codes that we can equip with
the rank distance. There are subspaces of $\Fqm^{n}$. Such codes are
referred to as {\em rank metric $\Fqm$--linear codes} and there are a
particular class of matrix codes which turn out to be the most
commonly studied in the literature. Let us describe them more
precisely.
Given a vector $\vv \in \Fqm^n$ its {\em rank weight}
is defined as
\begin{equation}
|\vv| \eqdef \textup{Span}_{\F_{q}} \left\{ \lambda_1v_1 + \dots + \lambda_n v_n : \lambda_i \in \Fq \right\}. 
\end{equation}
In other words, $|\vv|$ is the dimension of the $\Fq$--linear subspace of $\Fqm$
spanned by the entries of $\vv$.  
It is well--known that any
$\Fqm$--linear code is isometric to a matrix code as follows. Choose
an $\Fq$--basis $\basis = (b_1, \dots, b_m)$ of $\Fqm$, then consider
the map
\begin{equation}\label{eq:matrix_representation}
	\Mb :  \map{\Fqm^n}{\Mspdef}{(v_1,\dots, v_n)}{
      \begin{pmatrix}
        v_{11} & \cdots & v_{1n}\\
        \vdots &        & \vdots \\
        v_{m1} & \cdots & v_{mn}
		\end{pmatrix}
	}
\end{equation}
where for any $j$, $v_{1j},\dots, v_{mj} \in \Fq$ denote the
coefficients of $v_j$ in the basis $\basis$. That is :
$v_j = v_{1j}b_1 + \cdots + v_{mj}b_m$.  One can easily prove that for
any choice of $\Fq$--basis of $\Fqm$, the above map preserves the
metric.
	
	An {\em $\Fqm$-linear code} of dimension
$k$ is through $\Mb$ a $km$--dimensional matrix code of
$\Mspace{m}{n}{\Fq}$. However, it requires only $k$ vectors in
$\Fqm^n$ or equivalently $k$ matrices in $\Mspace{m}{n}{\Fq}$ to be
represented while a general $km$--dimensional matrix code of
$\Mspace{m}{n}{\Fq}$ is represented by a basis of $km$ matrices of
size $m \times n$. Thus, the representation of an $\Fqm$--linear code
requires $m$ times less memory size compared to that of a general
matrix code of the same $\Fq$--dimension. This gain is particularly
interesting for cryptographic applications. This is basically what
explains why in general McEliece cryptosystems based on rank metric
matrix codes have a smaller key size than McEliece cryptosystems based
on the Hamming metric. All of these proposals (see for instance
\cite{GPT91,GO01,G08,GMRZ13,GRSZ14a,ABDGHRTZABBBO19,AABBBDGZ17,ABGHZ19}) are
actually built from matrix codes over $\Fq$ obtained from
$\Fqm$-linear codes.

On the other hand, vector codes which are $\Fqm$-linear can be viewed
as structured 
matrix codes
with some ``extra'' algebraic structure in the same way as, for instance, cyclic linear codes can be viewed as structured
versions of linear codes. In the latter case, the code is globally
invariant by a linear isometric transform on the codewords
corresponding to shifts of a certain length.

Let
$P = \sum_{i=0}^{m-1}a_i X^i + X^m\in \Fq [X]$ be a monic irreducible
polynomial of degree $m$ and $x \in \Fqm$ be a root of $P$. Then
$\cB \eqdef (1,x,\dots,x^{m-1})$ is an $\Fq$--basis of $\Fqm$.  Let
$\Ccv$ be an $\Fqm$-linear code. The $\Fqm$-linearity with the
definition of $\cB$ means that:
	\[
	\forall \cv\in \Ccv, \forall Q\in \Fq[X], \mbox{ } Q(x) \cdot \cv\in \Ccv.
	\]
	In terms of matrices this stability can be expressed as
    follows. It is readily seen that (see
    \eqref{eq:matrix_representation} for the meaning of $\Mb(\cdot)$)
	\[
	\Mb(P(x)\cv) = P(\Cm_x)\Mb(\cv) \quad \mbox{ where} \quad \Cm_x \eqdef \begin{pmatrix} 
	0 & 0 & \cdots & \cdots &  -a_0 \\
	1 & \ddots & \cdots & \cdots & -a_1 \\
	0 & \ddots & \ddots & & \vdots  \\ 
	\vdots & & \ddots  & \ddots  & \vdots \\
	0 & \cdots & \cdots & 1 & -a_{m-1}
	\end{pmatrix} \in \Fq^{m\times m}
  \]
  {\em i.e.} $\Cm_x$ is the {\em companion matrix} of $P$ which
  represents the multiplication by $x$ in the basis $\mathcal B$.  In
  other words, the matrix code
  $\Mb(\Ccv) \eqdef \left\{ \Mb(\cv) : \cv \in \Ccv \right\}$
  associated to $\Ccv$ is stable by left multiplication by an element
  of the algebra generated by the companion matrix $\Cm_x$ of $x$.

  This property of $\Fqm$-linear codes make them as special a case of
  matrix codes: they have a \textit{large left stabilizer algebra}
  (see \cref{def:stab_alg}). As we will see in \cref{sec:equiFqm},
  the easiness of code equivalence problems for $\Fqm$-linear codes
  comes, roughly speaking, from the fact that they have a particular
  stabilizer algebra.
	
 \section{About the code equivalence problem}\label{sec:equiv_prob}

We arrive now to the core of our paper: the code equivalence
problem. We first recall here the definition of this problem (and some
of its variants) for vector codes endowed with the Hamming metric.

	\subsection{In Hamming Metric}

	 \subsubsection{Statement of the problems}
	
    In Hamming metric, the group of linear isometries of $\Fq^n$ is
    the subgroup of $\GL{n}{\Fq}$ (invertible $n\times n$ matrices over $\Fq$) spanned by permutation matrices and
    nonsingular diagonal matrices. Any element of this group can be
    represented as a product $\Dm \Pm$ where $\Dm$ is diagonal and
    $\Pm$ is a permutation matrix. Such a matrix is usually called a
    {\em monomial matrix}. It is a matrix having exactly one non-zero
    entry per row and per column.

    In Hamming metric, one usually considers two code equivalence problems.
    Given two codes 
endowed with the Hamming
    metric, there are two natural questions:
    \begin{itemize}
    \item are they permutation equivalent (image by each other under a permutation)?
    \item are they monomially equivalent (image by each other under a monomial matrix)?
    \end{itemize}

    \paragraph*{Some comments}
    \begin{enumerate}
      
    \item Note that we only consider linear isometries here.  Some
      references in the literature discuss semi-linear isometries,
      {\em i.e.} the composition of a monomial transform and some iteration
      of a component wise Frobenius map (assuming that the ground
      field $\Fq$ is not prime). This would lead to an alternative problem but that can be solved from any solver of the monomial equivalence problem in essentially the same time. Indeed if one can solve this problem in polynomial time, then it is enough to brute--force any iterate of the
      Frobenius.

  \item Note that MacWilliams equivalence theorem \cite{M62,HP03} asserts that if there
    is a linear map $\Ccv \rightarrow \Dcv$ which is an isometry, then
    it extends to the whole ambient space. Therefore, solving the monomial equivalence problem
    is equivalent to decide whether two codes are image of each other
    by a linear isometry with respect to the Hamming distance.
    \end{enumerate}

    \subsubsection{Algorithms}
    Several algorithms appeared in the literature to solve these problems.
    \begin{itemize}
    \item The first work is probably due to Leon \cite{L82}.
His algorithm computes the full list of minimum weight
      codewords of both codes, which requires an
      exponential running time for almost any code.
    \item Later, Sendrier \cite{S00} introduced the so--called {\em
        support--splitting algorithm} to decide if two codes $\Ccv$ and $\Dcv$ are permutation equivalent. Its approach
      consists in comparing the weight enumerators of the {\em hulls}
      of the codes, {\em i.e.} the codes $\Ccv\cap \Ccv^\perp$ and
      $\Dcv\cap \Dcv^\perp$.  The cost of this algorithm is
      heuristically $O(2^{\dim \Ccv \cap \Ccv^\perp} n^\omega)$
      where $\omega$ denotes the complexity exponent of linear
      algebraic operations. In particular, the algorithm solves easily
      the problem if the hulls of the codes have small dimensions, which
      typically holds \cite{S97a}. However Sendrier's algorithm does not extend to solve the monomial equivalence unless $q =3$ and $q=4$
      if one replaces the hull by its counterpart with respect to the
      Hermitian inner product.
    \item In \cite{F09b} is proposed a notion of normal form of a
      class of codes under the action of the monomial group. No
      mention of complexity appears in this reference but the
      computation of this normal form seems to require an exponential
      running time. In addition, a significant speed-up is possible
      for some input codes
      using tree--cutting dynamical programming methods.

	 \item More recently,  \cite[\S 3.2.4]{S17} proposed to decide if two codes are permutation equivalent to solve a polynomial system with Gr\"obner bases techniques. Actually this
	approach 
can be extended to the
      resolution of the monomial equivalence. However, with this use of
      Gr\"obner bases, the average-time complexity analysis is unknown.
    \end{itemize}

    \subsubsection{Theoretical hardness and known reductions}

    From a more theoretical point of view, what can we say about the
    hardness of these problems? 
Their hardness is not so clear, here is what is known:
    \begin{itemize}
    \item It has been proved in \cite{PR97} that the graph isomorphism
      problem reduces polynomially to the permutation equivalence problem. It is also proved in the
      same reference that this problem is not $\mathcal{NP}$--complete unless the polynomial-time hierarchy
      collapses;
    \item Recently, it was 
      proved \cite{BOS19} that deciding if two codes with zero hull are permutation equivalent reduces 
      to the graph isomorphism problem;
    \item Finally, from \cite{SS13}, the monomial equivalence problem reduces polynomially to
      the permutation equivalence problem when the field cardinality $q$ is polynomial in
      $n$. However, it should be noticed that the reduction sends any
      instance into an instance with a large hull,
      hence in the set of instances which seem to be the hardest ones.
    \end{itemize}

    The reductions are summarized in Figure~\ref{fig:reduction1}.
  
  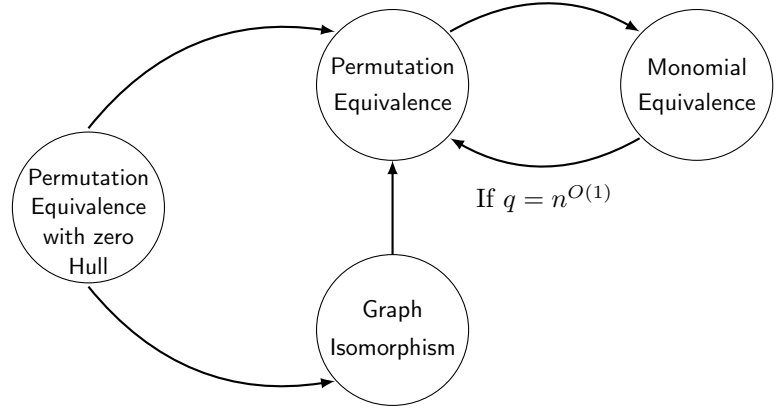
\begin{figure}[h]
    \begin{center}
    	\begin{tikzpicture}
    		\node at (0,0.25) {\scalebox{0.9}{\textsf{Permutation}}};
    		\node at (0,-0.25) {\scalebox{0.9}{\textsf{Equivalence}}};
    		\draw (0,0) circle (1);
    		\node at (0,-3) {\scalebox{0.9}{\textsf{Graph}}};
    		\node at (0,-3.5) {\scalebox{0.9}{\textsf{Isomorphism}}};
    		\draw (0,-3.25) circle (1);
    		\draw[->,thick,>=latex] (0,-2.25) -- (0,-1);
    		\draw (-4,-1.625) circle (1);
    		\node at (-4,-1.225) {\scalebox{0.88}{\textsf{Permutation}}};
    		\node at (-4,-1.625) {\scalebox{0.88}{\textsf{Equivalence}}};
    			\node at (-4,-2) {\scalebox{0.88}{\textsf{with zero}}};
    				\node at (-4,-2.4) {\scalebox{0.88}{\textsf{Hull}}};
    		\draw[->,thick,>=latex] (-4,-0.575)  to[bend left]  (-0.757,0.707);
    			\draw[->,thick,>=latex] (-4,-2.675)  to[bend right]  (-0.757,-3.927);
    		\node at (4,0.25) {\scalebox{0.9}{\textsf{Monomial}}};
    		\node at (4,-0.25) {\scalebox{0.9}{\textsf{Equivalence}}};
    		\draw (4,0) circle (1);
    		\draw[->,thick,>=latex] (0.757,0.707) to[bend left] (3.25,0.707)  ;
    		\draw[<-,thick,>=latex] (0.757,-0.707) to[bend right] (3.25,-0.707);
    		\node at (2,-1.5) {If $q = n^{O(1)}$};
    	\end{tikzpicture}
      \caption{Reductions between various problems. The notation
        ``$\mathsf A \longrightarrow \mathsf B$'' means that ``Problem
        $\mathsf A$ reduces to Problem $\mathsf B$ in polynomial
        time''. It can be translated as ``If one can solve Problem
        $\mathsf B$ in polynomial time, then one can solve Problem
        $\mathsf A$ in polynomial time.}
      \label{fig:reduction1}
	\end{center}
\end{figure}

	\subsection{In Rank Metric}
    In rank metric, the following linear automorphisms of $\Mspace{m}{n}{\Fq}$
    are well--known to preserve the rank:
    \begin{itemize}
    \item left multiplication by an element $\Pm$ of $\GL{m}{\Fq}$ :
      $\Xm \mapsto \Pm \Xm$;
    \item right multiplication by an element $\Qm$ of $\GL{n}{\Fq}$ :
      $\Xm \mapsto \Xm \Qm$;
    \item the transposition map $\Xm \mapsto \Xm^\top$ (only when $m = n$).
    \end{itemize}
    A classical result \cite{H51} of linear algebra
asserts that
    any rank--preserving linear automorphism of $\Mspace{m}{n}{\Fq}$
    is a composition of these three kinds of maps.

    \begin{definition}\label{def:Mat_codes_equiv}
      Two matrix codes $\Ccm,\Dcm\subseteq\Mspace{m}{n}{\Fq}$ are said
      to be {\em equivalent} if there are matrices
      $\Pm \in \GL{m}{\Fq}$ and $\Qm \in \GL{n}{\Fq}$ such that
      $\Ccm = \Pm \Dcm \Qm$.  When $\Pm = \Id_{m}$ (resp.
      $\Qm = \Id_{m}$) codes are said right (resp. left) equivalent.
    \end{definition}

    \begin{remark}
    Note that the previous definition does not involve the possibility
    of a transposition. Hence two square matrix codes
    $\Ccm, \Dcm \subseteq \sqMspace{n}{\Fq}$ are {\em equivalent}
    according to our definition if and only if $\Ccm$ is the image by
    a rank preserving linear map of either $\Dcm$ or $\Dcmt$. In
    practice, if we benefit from an algorithm which decides whether
    two square matrix codes are equivalent (according to our
    definition), then by applying the algorithm successively on the
    pairs $(\Ccm, \Dcm)$ and $(\Ccm, \Dcmt)$, we can decide
    whether the codes are image of each other by a rank preserving
    automorphism. 
    In short, considering a possible transposition will
    only multiply by $2$ the algorithm complexity.
  \end{remark}

  \begin{remark}
    Contrary to the Hamming case there does not seem to exist a rank
    metric McWilliams equivalence theorem \cite[Ex.~2.9]{BG13}. Two
    codes may be image of each other by a linear isometry without
    being equivalent.
  \end{remark}

  In this way, the code equivalence problem for matrix codes endowed
  with the rank metric can be stated as follows.

  \begin{pb}[Matrix Code Equivalence Problem (\MCEP{})]\label{pb:MCEP}
    \mbox{ }
    \begin{itemize}
    \item \textup{Instance:} two matrix codes $\Ccm, \Dcm \subseteq \Mspace{m}{n}{\Fq}$;
    \item \textup{Decision:} there exist matrices $\Pm \in \GL{m}{\Fq}$
      and $\Qm \in \GL{m}{\Fq}$ such that
      \[
        \Ccm = \Pm \Dcm \Qm.
      \]
    \end{itemize}
  \end{pb}

  In addition to this problem, one can be interested in the
  equivalence of the most commonly used rank metric codes, namely
  $\Fqm$--linear codes. If codes are represented with vectors, there
  is no longer equivalence by left multiplying by a non singular
  matrix but right multiplication is still possible. Therefore, 
when regarding vector codes as matrix codes, using the expansion
  operation $\Mb(\cdot)$ defined in (\ref{eq:matrix_representation}),
  equivalence of $\Fqm$-linear codes can be regarded as a restriction of the following problem to
  instances corresponding to matrix representations of $\Fqm$--linear
  codes.

  \begin{pb}[Matrix Codes Right Equivalence Problem (\MCREP{})]
    \begin{itemize}
    \item \textup{Instance:} two matrix codes
      $\Ccm, \Dcm \subseteq \Mspace{m}{n}{\Fq}$;
    \item \textup{Decision:} there exists a matrix $\Qm \in \GL{m}{\Fq}$ such that
      \[
        \Ccm = \Dcm \Qm.
      \]
    \end{itemize}        
  \end{pb}

  Finally, let us introduce a last problem which could be more of
  cryptographic nature. Suppose given two vector codes
  $\Ccv, \Dcv \subseteq \Fqm^n$ and two $\Fq$--bases $\basis, \basis'$
  of $\Fqm$ and consider the matrix codes $\Mbase{\basis}(\Ccv)$ and
  $\Mbase{\basis'}(\Dcv)$. From the data of these matrix codes and
  without knowing the bases $\basis, \basis'$, is deciding equivalence
  easier? This leads to the following problem which is nothing but
  a restriction of \MCEP{} to the subset of instances of matrix codes
  arising from $\Fqm$--linear codes.

  \begin{restatable}[Hidden Vector Matrix Code Equivalence Problem (\hVCEP{})]{pb}{hvCEP}\label{pb:bit_harder}
   	\begin{itemize}
   	\item \textup{Instance:} $\Ccm, \Dcm \subseteq \Mspdef$ be two
   	spaces of matrices representing $\Fqm$--linear codes
   	\item \textup{Decision:} It exists
   	$(\Sm, \Pm) \in \GLk \times \GLn$ such that
   	\[
   	\Ccm = \Sm \Dcm \Pm.
   	\]
   \end{itemize}
  \end{restatable}

 \section{Main results}\label{sec:main}

Our contributions in this article are three--fold
\begin{enumerate}[1$^{\circ}$)]
\item We propose a polynomial time algorithm to solve \MCREP{}. This algorithm
  is deterministic if $q$ is polynomial in $mn$ and Las Vegas for a larger $q$.
\item We prove that \hVCEP{} is easier than \MCEP{} and in particular
  that it naturally reduces to \MCREP{} in polynomial time when $q$ is
  polynomial in $mn$ and via a Las Vegas algorithm for larger $q$.
\item Finally, we prove that the general problem \MCEP{} is at least as hard as
  \MEP{} in Hamming metric by providing a polynomial time reduction.
\end{enumerate}

This leads to the following statements.

\begin{restatable}{theorem}{theoPZPP}\label{theo:PZPP}
  \textup{\MCREP{}} and \textup{\hVCEP{}} are in $\mathcal{P}$ if
  $q = mn^{O(1)}$ and in $\mathcal{ZPP}$ in the general case.
\end{restatable}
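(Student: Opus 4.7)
The plan is to first solve \MCREP{} and then polynomially reduce \hVCEP{} to it. Given $\Ccm, \Dcm \subseteq \Mspdef$ of equal $\Fq$-dimension, two elementary remarks set the stage. First, the set
\[
T \eqdef \{ \Qm \in \sqMat{n}{\Fq} : \Dcm \Qm \subseteq \Ccm \}
\]
is an $\Fq$-linear subspace of $\sqMat{n}{\Fq}$, computable by linear algebra, and $(\Ccm, \Dcm)$ is a yes-instance of \MCREP{} iff $T \cap \GLn \neq \emptyset$. Second, if some $\Qm_0 \in T \cap \GLn$ exists, a direct check shows $T = A \cdot \Qm_0$, where $A \eqdef \stabr{\Dcm}$ is the right stabilizer algebra, itself a finite-dimensional $\Fq$-subalgebra of $\sqMat{n}{\Fq}$ also computable by linear algebra. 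Hence $T$ is then a free left $A$-module of rank one, whose generators are in bijection with the units of $A$; and since $\Fq[Y] \subseteq A$ for any $Y \in A$, these units are exactly the elements of $A$ invertible as $n \times n$ matrices. The computational task thus reduces to: decide whether $T$ is free of rank one over $A$, and if so extract a generator.

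To accomplish this, I appeal to the algorithmic Wedderburn--Artin theory for finite-dimensional $\Fq$-algebras. First compute the Jacobson radical $J \eqdef \rad(A)$ via a standard linear-algebra routine and form the semisimple quotient $A/J$. Decompose $A/J \simeq \prod_i \sqMat{r_i}{K_i}$ for finite extensions $K_i/\Fq$ by splitting the center into primitive idempotents; lift these idempotents through the radical to obtain an explicit block structure on $A$ itself. In this concrete description, units of $A$ can be written down block by block. The same Wedderburn machinery applied to the $A$-module $T$ produces its isotypic decomposition, and lets one decide $T \simeq A$ as $A$-modules by matching multiplicities: if they differ, reject; if they agree, construct an $A$-module isomorphism $A \isomto T$ block by block, and take the image of any unit of $A$ to produce an invertible witness $\Qm_0 \in T$ certifying $\Ccm = \Dcm \Qm_0$.

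The complexity is driven entirely by the polynomial factorizations over $\Fq$ appearing in the Wedderburn decomposition (splitting the center of $A/J$, and handling minimal polynomials when lifting idempotents through the radical). Berlekamp's algorithm handles these deterministically in polynomial time when $q = (mn)^{O(1)}$, giving the $\mathcal{P}$ statement; Cantor--Zassenhaus gives a Las Vegas polynomial algorithm for general $q$, yielding $\mathcal{ZPP}$. The reduction \hVCEP{} $\to$ \MCREP{} then exploits that matrix codes arising from $\Fqm$-linear codes have, by construction, the companion algebra $\Fq[\Cm_x] \simeq \Fqm$ inside their left stabilizer; this ``hidden'' maximal subfield is intrinsically detectable from $\stabl{\Ccm}$, and by Skolem--Noether one aligns the two hidden subfields of $\Ccm$ and $\Dcm$ by an element of $\GLm$. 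The residual left freedom then lies in the centralizer of this common $\Fqm$, which is $\Fqm$ itself and acts trivially on the $\Fqm$-linear code; modulo the $m$ Frobenius-type choices in the normalizer, the two-sided search thus collapses to a pure right equivalence handled by \MCREP{}. The principal obstacle throughout is making the Wedderburn--Artin decomposition and the ancillary module-theoretic tests (freeness, isomorphism, generator extraction) effective in polynomial time over $\Fq$; this is precisely where the paper imports classical algorithmic results on finite-dimensional $\Fq$-algebras (Friedl--R\'onyai, Eberly, and successors), and where polynomial factoring creates the $\mathcal{P}$ versus $\mathcal{ZPP}$ dichotomy.
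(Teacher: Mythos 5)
Your approach to \MCREP{} takes a genuinely different route from the paper. The paper decomposes the identity of $\stabr{\Ccm}$ and $\stabr{\Dcm}$ into minimal orthogonal idempotents, uses these to split the \emph{codes} $\Ccm, \Dcm$ into pieces $\Ccm\Am_i$, $\Dcm\Um_i$, shows each piece has a \emph{local} right stabilizer algebra, and handles the local case directly: any element of $\cond{\Ccm}{\Dcm} \smallsetminus \rad(\stabr{\Ccm})\cond{\Ccm}{\Dcm}$ is automatically nonsingular when the algebra is local. You instead keep the conductor $T$ whole and ask whether it is free of rank one as a left $\stabr{\Dcm}$-module. Both viewpoints are valid; the paper's is more elementary in that the local case is handled by a single lemma about the radical, with no module-isomorphism machinery, while yours is conceptually cleaner (one module-theoretic test instead of a piecewise recombination) but imports more.

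There is, however, a real gap in your module-theoretic step. ``Matching multiplicities'' only has meaning for modules over a \emph{semisimple} algebra; when $A = \stabr{\Dcm}$ has a nonzero radical $J$, two $A$-modules can have the same composition factors with multiplicity without being isomorphic, so an ``isotypic decomposition'' of $T$ is not available. The correct test, which you do not state, is: compare $T/JT$ with $A/J$ as $A/J$-modules \emph{and} check $\dim_{\Fq} T = \dim_{\Fq} A$; if both hold, lift a generator of $T/JT$ to $T$ and invoke Nakayama to conclude $T$ is cyclic, hence (by the dimension count) free of rank one. Without the dimension check, and without Nakayama for the lift, the argument does not go through. Moreover, even once a generator $g$ is found, you must explicitly verify that $g$ is nonsingular — if it is singular, then every element $ag$ of $T$ is singular as well and the codes are not right equivalent, a negative outcome your write-up does not account for (your ``certify'' step implicitly covers it, but the exposition suggests freeness alone suffices, which is false).

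On the \hVCEP{} side your plan matches the paper's, but conflates its two cases. When $\stabl{\Ccm}$ \emph{equals} a representation of $\Fqm$, one conjugates the whole field and then brute-forces the $m$ Frobenius shifts. When $\stabl{\Ccm}$ \emph{strictly contains} a representation of $\Fqm$, the paper shows it is $\sqMat{m/\ell}{\F_{q^\ell}}$; one conjugates only its \emph{centre} $\F_{q^\ell}$ (not the hidden $\Fqm$), after which the full algebras coincide and — by Skolem--Noether plus the fact that the centraliser of the centre is the algebra itself — the left freedom collapses with \emph{no} residual Frobenius ambiguity. Your phrase ``align the two hidden subfields by Skolem--Noether, modulo the $m$ Frobenius-type choices'' mixes the two regimes.
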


\begin{theorem}\label{theo:red}
 The monomial equivalence problem reduces in polynomial time to \textup{\MCEP{}}.
\end{theorem}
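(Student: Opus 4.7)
The plan is to reduce \MEP{} to \MCEP{} via the diagonal embedding. Given two vector codes $\Ccv, \Dcv \subseteq \Fq^n$, I would associate to them the matrix codes $\Ccm \eqdef \{\Diag(\cv) : \cv \in \Ccv\}$ and $\Dcm \eqdef \{\Diag(\dv) : \dv \in \Dcv\}$, both sitting inside $\sqMspace{n}{\Fq}$. This mapping is plainly computable in polynomial time, so the whole content of the theorem boils down to showing that $(\Ccv, \Dcv)$ is a YES-instance of \MEP{} if and only if $(\Ccm, \Dcm)$ is a YES-instance of \MCEP{}.

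The forward direction is a direct computation. If $\Dcv = \Ccv \Mm$ with $\Mm = \Pm_{\pi} \Dm_{\alpha}$ a monomial matrix (product of a permutation matrix and of an invertible diagonal), then one verifies
\[
\Pm_{\pi}^{\top} \Diag(\cv) \Pm_{\pi} \Dm_{\alpha} \;=\; \Diag(\cv \Mm)
\]
for every $\cv \in \Fq^n$. Hence $\Pm \eqdef \Pm_{\pi}^{\top}$ and $\Qm \eqdef \Pm_{\pi} \Dm_{\alpha}$ lie in $\GL{n}{\Fq}$ and satisfy $\Pm \Ccm \Qm = \Dcm$.

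The key step is the converse, and it rests on the identity $\pham{\vv} = \rk(\Diag(\vv))$ for every $\vv \in \Fq^n$; in other words, the diagonal embedding transfers the Hamming weight of a vector to the rank of its image. Assume now that $\Pm \Ccm \Qm = \Dcm$ for some $\Pm, \Qm \in \GL{n}{\Fq}$. Since every element of $\Ccm$ is sent to an element of $\Dcm = \Diag(\Dcv)$, the rule $\Diag(\phi(\cv)) \eqdef \Pm \Diag(\cv) \Qm$ defines an $\Fq$--linear isomorphism $\phi : \Ccv \to \Dcv$. Because left and right multiplication by invertible matrices preserve the rank,
\[
\pham{\phi(\cv)} \;=\; \rk\bigl(\Pm \Diag(\cv) \Qm\bigr) \;=\; \rk(\Diag(\cv)) \;=\; \pham{\cv},
\]
so $\phi$ is a Hamming isometry between $\Ccv$ and $\Dcv$. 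The MacWilliams Equivalence Theorem, recalled in \cref{sec:equiv_prob}, then asserts that $\phi$ extends to a monomial automorphism of $\Fq^n$, proving that $\Ccv$ and $\Dcv$ are monomially equivalent.

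I do not anticipate a serious technical obstacle: every step is either a direct computation or an invocation of MacWilliams. The only conceptual content is the choice of the diagonal embedding, which converts the Hamming metric on $\Fq^n$ into (the restriction to diagonal matrices of) the rank metric on $\sqMspace{n}{\Fq}$. Thanks to this choice, any \MCEP{}-equivalence between $\Ccm$ and $\Dcm$ automatically yields a Hamming isometry between $\Ccv$ and $\Dcv$, and MacWilliams closes the argument by producing the required monomial map.
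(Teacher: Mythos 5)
Your proof is correct and follows essentially the same route as the paper's reduction in Section~3.3 (the one attributed to Randriambololona): you use the diagonal embedding $\cv\mapsto\Diag(\cv)$, which converts Hamming weight into rank, verify the forward direction by a direct matrix computation, and close the converse with the MacWilliams Extension Theorem. The only difference is cosmetic: you build the Hamming isometry $\phi$ explicitly before invoking MacWilliams, whereas the paper argues by contradiction. You may want to be aware that the paper also gives a second, independent reduction in Section~6 that avoids MacWilliams entirely: it builds matrix codes in $\Mspace{k+n}{k}{\Fq}$ from rank-one blocks $\transpose{\av_i}\av_i$ stacked over a row-indicator block $\Row_i(\av_i)$, and a rank argument forces any $(\Um,\Vm)$-equivalence to permute the generators up to scalars. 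That construction has the advantage of directly producing the monomial map from $(\Um,\Vm)$, i.e.\ it handles the search version, whereas the diagonal-embedding argument (both yours and the paper's) only gives existence through MacWilliams and so settles the decision version cleanly.
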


\subsection*{The why of $\mathcal{P}$ v.s. $\mathcal{ZPP}$}
In the sequel, we first propose an algorithm to solve \MCREP{}.  Then,
we give a second algorithm which proves that \hVCEP{} reduces to
\MCREP{}.  The major tools of these algorithms are the resolution of
linear systems and factorization of univariate polynomials over a
finite field. Linear systems are solvable in polynomial time while for
factorization of univariate polynomials, Berlekamp algorithm
\cite{B68} is deterministic polynomial only when the ground field
cardinality $q$ is polynomial in the degree. For larger $q$, one
should use a Las Vegas algorithm such as Cantor Zassenhaus algorithm
\cite{B70,CZ81}.  This is the reason why, for a large $q$, we cannot
assert that the problem is in $\mathcal{P}$ but it is in
$\mathcal{ZPP}$.

In order to avoid statements making every time this difference between small and large $q$ we introduce the notion of {\em \Falgo{}}.  
\begin{definition} An {\em \Falgo{}} is an algorithm which uses an oracle (subroutine) to factor polynomials over finite fields. The cost of a call of this oracle is the length of the input of the call.
\end{definition}

In the sequel we sketch the proofs of these theorems. The detailed
proofs appear in the following sections as follows. In Section
\ref{sec:MCREP} we show how to solve \MCREP{} efficiently. Section
\ref{sec:equiFqm} gives an efficient algorithm which reduces \hVCEP{}
to \MCREP{} and therefore proves Theorem \ref{theo:PZPP}. Finally,
\cref{sec:reduction} is devoted to our reduction, namely the proof of
Theorem \ref{theo:red}.

\subsection{A polynomial--time algorithm to solve \MCREP{}}\label{ss:mcrep_sketch}
The resolution of this problem starts with the computation of an
object called the {\em conductor} of $\Ccm$ into $\Dcm$ and
\begin{definition}[Conductor of matrix codes]
  Let $\Ccm, \Dcm \subseteq \Mspace{m}{n}{\Fq}$ and suppose they have
  the same $\Fq$--dimension\footnote{if they do not, then they cannot be
    equivalent.} and define the {\em conductor of $\Ccm$ into $\Dcm$}
  as the $n \times n$ matrix space:
\[
  \cond{\Ccm}{\Dcm} \eqdef \{\Mm \in \sqMspace{n}{\Fq}~|~ \Ccm \Mm
  \subseteq \Dcm\}.
\]
\end{definition}

\begin{proposition}
If $\Ccm \Qm = \Dcm$ for some $\Qm \in \GL{n}{\Fq}$, that is to say,
if $\Ccm$ and $\Dcm$ are right equivalent, then
$\Qm \in \cond{\Ccm}{\Dcm}$.
Conversely, if $\cond{\Ccm}{\Dcm}$ contains
a nonsingular matrix $\Qm$, then $\Ccm$ and $\Dcm$ are right equivalent.  
\end{proposition}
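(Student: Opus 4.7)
The plan is to verify the two directions directly from the definition of the conductor, noting that the only non-formal ingredient is a dimension count.

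For the forward direction, I would simply unpack the definition: if $\Ccm \Qm = \Dcm$, then in particular $\Ccm \Qm \subseteq \Dcm$, which is exactly the defining condition for membership in $\cond{\Ccm}{\Dcm}$. No properties of $\Qm$ beyond being an element of $\sqMspace{n}{\Fq}$ are needed here; nonsingularity is not even used.

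For the converse, suppose $\Qm \in \cond{\Ccm}{\Dcm}$ is nonsingular. Then by definition $\Ccm \Qm \subseteq \Dcm$. The key observation is that right multiplication by a nonsingular matrix $\Qm \in \GL{n}{\Fq}$ defines an $\Fq$--linear automorphism of the ambient space $\Mspace{m}{n}{\Fq}$ (with inverse given by right multiplication by $\Qm^{-1}$). Consequently, $\Ccm \Qm$ is an $\Fq$--subspace of the same dimension as $\Ccm$. Combining this with the standing assumption that $\dim_{\Fq}\Ccm = \dim_{\Fq}\Dcm$, the inclusion $\Ccm \Qm \subseteq \Dcm$ must actually be an equality, hence $\Ccm$ and $\Dcm$ are right equivalent.

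There is no real obstacle: the statement is essentially a reformulation of the definition together with a one-line dimension argument. The only point to be careful about is to invoke the dimension hypothesis that was folded into the definition of the conductor, so that the inclusion $\Ccm \Qm \subseteq \Dcm$ can be upgraded to an equality.
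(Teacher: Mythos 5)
Your proof is correct, and it supplies exactly the reasoning the paper leaves implicit: the paper states this proposition without proof, since it follows directly from the definition of the conductor together with the dimension hypothesis folded into that definition. Your careful note that right multiplication by $\Qm \in \GL{n}{\Fq}$ is an automorphism of $\Mspace{m}{n}{\Fq}$, so $\dim_{\Fq}(\Ccm\Qm) = \dim_{\Fq}\Ccm = \dim_{\Fq}\Dcm$ upgrades the inclusion to equality, is precisely the one non-trivial step.
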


The conductor can be computed by solving a linear system with $n^2$
unknowns and $K(mn - K)$ equations, where
$K = \dim(\Ccm) = \dim( \Dcm)$.  Indeed, our unknowns are matrices
$\Mm \in \cond{\Ccm}{\Dcm}$ whose entries provide $n^2$ unknowns in
$\Fq$.  For the equations, let $\Cm_1, \dots, \Cm_K$ be a basis of
$\Ccm$ and $\Dm_1, \dots, \Dm_{mn-K}$ be a basis of the space
\[
  \{\Mm \in \Mspace{m}{n}{\Fq} ~|~ \forall \Nm \in \Dcm,\
  \tr (\Mm \transpose{\Nm}) = 0\}.
\]
Then, our equations with unknown $\Mm$ are
\[
  \tr (\Cm_i \Mm \Dm_j^\top)= 0,\quad i \in \{1, \dots, K\},\ j \in \{1, \dots,
  N-K\}.
\]
In many classical situations, $m \approx n$, $K$ is linear in $mn$ and
then the number of equations is of $O(n^4)$ while that of variables is
of $O(n^2)$. Thus, the system is over constraint and one can expect
that one of the two following situations appear:
\begin{itemize}
\item either $\cond{\Ccm}{\Dcm} = \{0\}$ and one can
  assert that the codes are not right equivalent;
\item or $\dim(\cond{\Ccm}{\Dcm}) = 1$ and, it
  suffices to pick out one non-zero matrix $\Mm$ in the conductor and
  check if it is nonsingular. If it is non singular, we conclude that
  they are right equivalent, and provide a matrix $\Mm$ which realizes
  the equivalence. If $\Mm$ is singular, since the conductor has
  dimension $1$, any other element of the conductor will be singular
  too and we conclude that the two codes are not equivalent.
\end{itemize}
In both situations, which are the ones that will ``typically'' happen,
\MCREP{} can be solved in polynomial time.

There remains to treat the case where $\cond{\Ccm}{\Dcm}$ has a larger
dimension and may contain both singular and non singular elements. In this
situation, how to find non singular elements? In the other direction,
how to assert that any element in this space is singular?
Answering these questions requires the introduction of a fundamental
notion: the {\em stabilizer algebras} of a matrix code.
\begin{definition}[Stabilizer Algebra]\label{def:stab_alg} Let
  $\Ccm \subseteq \Mspace{m}{n}{\Fq}$ be a matrix code. Its {\em right
    stabilizer} algebra is defined as:
	\[
	\stabr{\Ccm} \eqdef \{\Mm \in \sqMspace{n}{\Fq} ~|~ \Ccm \Mm
	\subseteq \Ccm\}
	\]
	while its {\em left stabilizer} algebra is defined as:
	\[
	\stabl{\Ccm} \eqdef \{\Mm \in \sqMspace{m}{\Fq} ~|~ \Mm \Ccm 
	\subseteq \Ccm\}.
	\]
  \end{definition}

  \begin{remark}
    In terms of terminology, these algebra are also referred to
    as {\em idealisers} or {\em nuclei}. See for instance
    \cite{LN16,LTZ17}.
  \end{remark}

\begin{remark}\label{rem:computing_stabilizers}
  The right stabilizer algebra is nothing but the conductor of $\Ccm$ into
  itself, namely $\cond{\Ccm}{\Ccm}$. In particular, it can be
  computed by solving a system of linear equations.
\end{remark}

It is easily verified that $\stabl{\Ccm}$ and $\stabr{\Ccm}$ are
sub-algebras of $\sqMspace{m}{\Fq}$ and $\sqMspace{n}{\Fq}$
respectively. In particular, for a given code $\Ccm$, its right
stabilizer algebra contains scalar matrices, i.e. matrices of the form
$\lambda \Id_n$. If the stabilizer does not contain other matrices, it
is said to be {\em trivial}. Note that most of the matrix codes have
trivial stabilizer algebras.  The following statement gives a first
motivation for the introduction of these algebras.

\begin{proposition}\label{prop:triv_stab}
  Let $\Ccm, \Dcm \subseteq \Mspace{m}{n}{\Fq}$ be two right
  equivalent codes; i.e. there exists $\Qm \in \GL{n}{\Fq}$ such that
  $\Ccm \Qm = \Dcm$.  Suppose also that
  $\dim(\cond{\Ccm}{\Dcm}) \geq 2$.
  Then, $\Ccm$ and $\Dcm$ have non trivial right stabilizer algebra.
\end{proposition}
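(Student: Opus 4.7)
The plan is to exhibit an explicit bijection between the conductor $\cond{\Ccm}{\Dcm}$ and the right stabilizer $\stabr{\Ccm}$, which will immediately transport the dimension lower bound from the former to the latter. The key observation is that, once we fix one particular right equivalence $\Qm$ with $\Ccm\Qm = \Dcm$, any other matrix sending $\Ccm$ into $\Dcm$ must differ from $\Qm$ by a multiplier that stabilizes $\Ccm$ on the right.

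First, I would check the inclusion $\cond{\Ccm}{\Dcm} \subseteq \stabr{\Ccm}\cdot \Qm$. Indeed, for $\Mm \in \cond{\Ccm}{\Dcm}$ we have $\Ccm \Mm \subseteq \Dcm = \Ccm \Qm$, hence $\Ccm\,(\Mm \Qm^{-1}) \subseteq \Ccm$, which is exactly the statement $\Mm \Qm^{-1} \in \stabr{\Ccm}$. The reverse inclusion is equally immediate: if $\Nm \in \stabr{\Ccm}$, then $\Ccm (\Nm \Qm) = (\Ccm \Nm)\Qm \subseteq \Ccm \Qm = \Dcm$, so $\Nm \Qm \in \cond{\Ccm}{\Dcm}$. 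Therefore
\[
\cond{\Ccm}{\Dcm} \;=\; \stabr{\Ccm}\cdot \Qm.
\]

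Since $\Qm \in \GL{n}{\Fq}$, right multiplication by $\Qm$ is an $\Fq$-linear bijection of $\sqMspace{n}{\Fq}$, so it preserves dimensions. We deduce
\[
\dim_{\Fq} \stabr{\Ccm} \;=\; \dim_{\Fq} \cond{\Ccm}{\Dcm} \;\geq\; 2.
\]
As $\stabr{\Ccm}$ always contains the one-dimensional subspace of scalar matrices $\Fq \cdot \Id_n$, having dimension at least $2$ means it strictly contains the scalars, i.e.\ it is non trivial.

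Finally, to obtain the analogous statement for $\Dcm$, I would use the conjugation relation $\stabr{\Dcm} = \Qm^{-1} \stabr{\Ccm}\, \Qm$: if $\Nm \in \stabr{\Ccm}$ then $\Dcm\,(\Qm^{-1}\Nm\Qm) = \Ccm \Nm \Qm \subseteq \Ccm\Qm = \Dcm$, and the reverse inclusion follows by swapping the roles of $\Ccm$ and $\Dcm$. Conjugation by an invertible matrix being a linear isomorphism, we conclude $\dim \stabr{\Dcm} = \dim \stabr{\Ccm} \geq 2$, so $\stabr{\Dcm}$ is non trivial as well. There is no real obstacle here; the only point to get right is the direction of the multiplication in the identification of $\cond{\Ccm}{\Dcm}$ with a translate of $\stabr{\Ccm}$.
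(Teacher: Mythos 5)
Your proof is correct and rests on exactly the same mechanism the paper uses, namely the map $\Mm \mapsto \Mm\Qm^{-1}$ which carries the conductor into $\stabr{\Ccm}$. The only difference is presentational: you establish the full identity $\cond{\Ccm}{\Dcm} = \stabr{\Ccm}\cdot\Qm$ (together with the conjugacy $\stabr{\Dcm} = \Qm^{-1}\stabr{\Ccm}\Qm$), which the paper records separately later as Propositions~\ref{prop:conj_stab} and~\ref{prop:technical}, and then deduce the claim by a dimension count; the paper's own proof of this proposition is more minimal, picking a single element of the conductor that is not a scalar multiple of $\Qm$ and checking directly that $\Mm\Qm^{-1}$ and $\Qm^{-1}\Mm$ are non-scalar stabilizer elements.
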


\begin{proof}
  By assumption, there exists $\Mm \in \cond{\Ccm}{\Dcm}$ which is not
  a scalar multiple of $\Qm$ and
  $\Mm \Qm^{-1} \in \stabr{\Ccm} \smallsetminus \{\lambda \Id_n ~|~
  \lambda \in \Fq\}$ and
  $\Qm^{-1}\Mm \in \stabr{\Dcm} \smallsetminus \{\lambda \Id_n ~|~
  \lambda \in \Fq\}$.
\end{proof}

While we observed that, when the conductor has dimension $\leq 1$,
then solving \MCREP{} is easy to solve. The previous statement asserts
that, the cases where the conductor has a higher dimension while the
codes are right equivalent are precisely the cases where stabilizer
algebras are non-trivial

In such case, the problem is more technical to prove and requires a
further study of the stabilizer algebra using Wedderburn--Artin
Theory. We refer the reader to \cite{DK94} for the mathematical
aspects of this theory and to the works of Friedl and R\'onyai
\cite{FR85, R90} for the computational aspects.
The results of the previous references assert the existence
of decompositions
\[
  \Ccm = \Ccm \Em_1 \oplus \cdots \oplus \Ccm \Em_s, \qquad
  \Dcm = \Ccm \Fm_1 \oplus \cdots \oplus \Ccm \Fm_t
\]
where the $\Em_i$'s (resp. the $\Fm_i$'s) are projectors matrices onto
subspaces in direct sum. Moreover, these decompositions are unique up
to conjugation and deciding the right equivalence reduces to decide
equivalences of these small pieces as detailed in
\cref{subsubsec:general_mcrep}.

\subsection{Equivalence of $\Fqm$--linear codes}
Clearly, an instance of \VCEP{} is a particular instance of \MCREP{}
and hence, according to the previous discussion, can be solved in
polynomial time. On the other hand, instances of \hVCEP{} are
particular instances of \MCEP{}. Thus, solving \hVCEP{} requires
{\em \`a priori} to look for left and right equivalence at the same time.
However, it turns out, that the $\Fqm$--linear structure permits to treat
these problems separately.

Basically, we proceed as follows. We start from a pair of matrix codes
$\Ccm, \Dcm \subseteq \Mspace{m}{n}{\Fq}$ obtained by expanding vector
codes $\Ccv, \Dcv \subseteq \Fqm^n$ in (possibly distinct)
$\Fq$--bases of $\Fqm$.  The left stabilizer algebras of these codes
can be computed and both are either isomorphic to $\Fqm$ or contain a
subalgebra isomorphic to $\Fqm$.  The latter case is rather more
technical to study (see \cref{ss:cas_penible}), thus let us suppose we
are in the former case. Classical results of linear algebra permit to
prove that two sub-algebras
$\stabl{\Ccm}, \stabl{\Dcm} \subseteq \sqMspace{m}{\Fq}$ which are
both isomorphic to $\Fqm$ are conjugated and that a matrix
$\Pm \in \GL{m}{\Fq}$, that can be efficiently computed, such that
\[
  \stabl{\Dcm} = \Pm^{-1} \stabl{\Ccm} \Pm.
\]
Next, replacing $\Dcm$ by $\Pm \Dcm$, the two codes turn out to have
the same left stabilizer matrix and then, up to some technical details
related to the action of the Frobenius map (see \cref{ss:cas_Fqm_cool}),
we are then reduced to decide right equivalence on the two codes.

\subsection{Reduction from \MCEP{} to \MEP{}}\label{ss:reduc_Hugues} We prove that
the general equivalence problem for matrix codes, \MCEP{}, is at least
as hard as the monomial equivalence problem \MEP{}. We obtain such a
statement by providing a polynomial time reduction from \MEP{} to
\MCEP{}. This reduction was suggested by Hugues Randriambololona.
Another reduction for the search versions of the problems
is given in \cref{sec:reduction}.

Consider the map
\begin{equation}\label{eq:phimap}
  \Phi: \map{\Fq^n}{\sqMspace{n}{\Fq}}{(x_1, \dots, x_n)}{
    \begin{pmatrix}
      x_1 & & \\
      & \ddots & \\
      & & x_n
    \end{pmatrix}.
  }
\end{equation}
An important feature of this map is that it sends vectors with
Hamming weight $t$ on matrices with rank $t$, hence it is an
isometry. The reduction rests on the following statement.

\begin{proposition}
  The map $\Phi$ of \eqref{eq:phimap} sends
  positive (resp. negative) instances of \MEP{} into positive
  (resp. negative) instances of \MCEP{}.
\end{proposition}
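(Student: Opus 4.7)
The forward direction (positive instances to positive) is an explicit computation. Assume $\Dcv = \Ccv \Mm$ with $\Mm$ monomial, and decompose $\Mm = \mathbf{\Pi}\Dm$, where $\mathbf{\Pi}$ is the permutation matrix of some $\sigma$ and $\Dm = \mathrm{diag}(\delta) \in \GL{n}{\Fq}$. A coordinatewise computation shows
\[
\Phi(c\Mm) = \Dm\, \mathbf{\Pi}^\top\, \Phi(c)\, \mathbf{\Pi}\qquad \forall\, c \in \Fq^n,
\]
because $\mathbf{\Pi}^\top \Phi(c) \mathbf{\Pi}$ permutes the diagonal entries of $\Phi(c)$ by $\sigma$ and left multiplication by $\Dm$ rescales them. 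Since $\Dm\mathbf{\Pi}^\top$ and $\mathbf{\Pi}$ are invertible, this shows that $(\Phi(\Ccv), \Phi(\Dcv))$ is a positive instance of \MCEP{}.

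For the backward direction, assume $\Phi(\Ccv) = \Pm\Phi(\Dcv)\Qm$ with $\Pm, \Qm \in \GL{n}{\Fq}$, and the goal is to produce a monomial $\Mm$ with $\Ccv = \Dcv \Mm$. After discarding coordinates on which either code vanishes identically (which does not affect monomial equivalence classes), I may assume both $\Ccv$ and $\Dcv$ are non-degenerate. The key trick is that $\Phi$ lands in symmetric matrices, so transposing gives $\Phi(\Ccv) = \Qm^\top\Phi(\Dcv)\Pm^\top$, and comparison with the original equation produces
\[
\Sm\,\Phi(\Dcv) = \Phi(\Dcv)\,\Sm^\top, \qquad \Sm := \Qm^{-\top}\Pm.
\]
Entry by entry this reads $\Sm_{ij}d_j = \Sm_{ji}d_i$ for every $d \in \Dcv$ and every pair $(i,j)$, i.e.\ the weight-$\le 2$ vector $\Sm_{ij}e_j - \Sm_{ji}e_i$ lies in $\Dcv^\perp$.

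Define the relation $i \sim j$ by proportionality of the $i$-th and $j$-th coordinate functions on $\Dcv$; non-degeneracy makes $\sim$ an equivalence relation, partitioning $\{1,\dots,n\}$ into blocks $B_1,\dots,B_s$. The previous constraint forces $\Sm$ to be block-diagonal for this partition. A monomial transformation on $\Dcv$ normalizes each block to the constant subcode, and running the same symmetry argument on $\Phi(\Dcv) = \Pm^{-1}\Phi(\Ccv)\Qm^{-1}$ constrains $\Ccv$ identically; a further monomial reduction brings both codes to a \emph{block-reduced} form in which no two distinct coordinates are proportional. In this regime every pair-projection $\pi_{ij}(\Dcv)$ is surjective, so $\Dcv^\perp$ contains no nonzero vector of weight $\le 2$, forcing $\Sm$ to be diagonal.

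Finally, the off-diagonal vanishing of $\Pm\Phi(d)\Qm$ for $d \in \Dcv$ translates into the statement that, for every $i\neq j$, the Hadamard product of the $i$-th row of $\Pm$ and the $j$-th column of $\Qm$ lies in $\Dcv^\perp$. Combined with the diagonality of $\Sm$, a support analysis on the columns of $\Pm$ and the rows of $\Qm$ shows that each must be supported on a single coordinate, with mutually inverse permutations; hence $\Pm$ and $\Qm$ are monomial with compatible permutations, and running the forward direction in reverse yields a monomial $\Mm$ with $\Ccv = \Dcv \Mm$. The main obstacle is precisely this last rigidity step: in the block-reduced regime, deducing monomiality of $\Pm$ and $\Qm$ from the Hadamard-in-dual constraints. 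The key is that any nonzero off-diagonal pattern in $\Pm$ or $\Qm$ would produce a nontrivial proportionality between two coordinates of $\Dcv$, contradicting the block-reduction.
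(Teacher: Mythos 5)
Your forward direction is correct and essentially matches the paper's one-line computation (the paper writes the monomial as $\Dm\Pm$ and observes $\Pm^{-1}\Phi(\Ccv)\Dm\Pm = \Phi(\Dcv)$; your factoring $\Mm = \mathbf{\Pi}\Dm$ gives the same thing).

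Your backward direction, however, takes a genuinely different route from the paper and contains a real gap. The paper's argument is two lines: the map $X \mapsto \Pm^{-1} X \Qm^{-1}$ restricted to $\Phi(\Ccv)$, conjugated by the isometry $\Phi$, gives a linear \emph{Hamming} isometry $\Ccv \to \Dcv$; MacWilliams' extension theorem then says any such isometry extends to a monomial transformation of $\Fq^n$, so the codes are monomially equivalent --- no structural analysis of $\Pm$ or $\Qm$ is ever needed. You instead attempt to constructively force $\Pm$ and $\Qm$ into monomial shape, which amounts to reproving the relevant case of MacWilliams by hand.

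The concrete gap is the ``block-reduction'' step. You observe (correctly, and with a nice symmetry trick) that $\Sm := \Qm^{-\top}\Pm$ is block-diagonal with respect to the partition of coordinates into proportionality classes of $\Dcv$, and then assert that ``a further monomial reduction brings both codes to a block-reduced form in which no two distinct coordinates are proportional.'' This is false: a monomial transformation permutes coordinates and rescales them, so it carries one proportionality class to another \emph{of the same size}; it cannot collapse a block of size $\geq 2$ into singletons. So the claim ``$\Dcv^\perp$ contains no nonzero vector of weight $\leq 2$'' simply cannot be arranged when $\Dcv$ has repeated (proportional) coordinates --- e.g.\ for a repetition code $\Dcv^\perp$ is full of weight-$2$ vectors, and $\Sm$ genuinely need not be diagonal. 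To make your approach work you would have to treat the block structure honestly (show $\Pm,\Qm$ are block-monomial and then argue within each block), and also to justify rather than assert that the initial ``discard degenerate coordinates'' step is compatible with the matrix-code equivalence --- the degenerate rows and columns of $\Phi(\Ccv)$ and $\Phi(\Dcv)$ need not sit in the same positions and must first be matched up via $\Pm,\Qm$. All of this is avoidable by the MacWilliams route.
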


\begin{proof}
  Suppose two vector codes $\Ccv, \Dcv \subseteq \Fq^n$ are monomially
  equivalent, \ie{} $\Ccv \Dm \Pm = \Dcv$ where $\Dm$ is a nonsingular
  diagonal matrix and $\Pm$ is a permutation matrix. Then the codes
  $\Ccm \eqdef \Phi(\Ccv)$ and $\Dcm \eqdef \Phi(\Dcv)$ satisfy
  \[
    \Pm^{-1} \Ccm \Dm \Pm = \Dcm.
  \]
Hence, the matrix codes are equivalent.

  Conversely, suppose that $\Ccv, \Dcv$ are not monomially equivalent
  while $\Ccm, \Dcm$ are equivalent.  This entails that $\Ccm, \Dcm$
  are isometric with respect to the rank metric, and, since $\Phi$ is
  an isometry too, the codes $\Ccv, \Dcv$ are isometric with respect
  to the Hamming metric. But from McWilliams equivalence theorem
  \cite{M62,HP03}, this entails the monomial equivalence, a
  contradiction.
\end{proof}

\bigskip

\centerline{***}

 \newpage
\section{Solving the Matrix Codes Right Equivalence Problem (\MCREP{})}\label{sec:MCREP}

This section is devoted to the resolution of \MCREP{}. As we already
noticed in \cref{ss:mcrep_sketch}, most of the time, the resolution of
the problem is simple since the conductor $\cond{\Ccm}{\Dcm}$ has
typically dimension $\leq 1$. If it does not, while the codes are
right equivalent, then, according to \cref{prop:triv_stab}, the codes
$\Ccm$ and $\Dcm$ should have non trivial stabilizer algebras.  The
following statements provide further relations between the structure
of the conductor with that of the stabilizer algebras, which will be
helpful in what follows.

\begin{proposition}\label{prop:conj_stab}
	If two codes $\Ccm$ and $\Dcm$ are right equivalent, i.e.
	$\Ccm \Qm = \Dcm$ for some $\Qm\in \GL{n}{\Fq}$, then
	their right stabilizer algebras are conjugated under $\Qm$:
	\[
	\stabr{\Ccm} = \Qm \cdot \stabr{\Dcm} \cdot \Qm^{-1}.
	\]
\end{proposition}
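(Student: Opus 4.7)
The plan is to prove the two inclusions $\Qm \cdot \stabr{\Dcm} \cdot \Qm^{-1} \subseteq \stabr{\Ccm}$ and $\stabr{\Ccm} \subseteq \Qm \cdot \stabr{\Dcm} \cdot \Qm^{-1}$ directly from the definition of the right stabilizer algebra and the hypothesis $\Ccm \Qm = \Dcm$. The key algebraic input is the two equivalent forms of the hypothesis, namely $\Dcm = \Ccm \Qm$ and $\Ccm = \Dcm \Qm^{-1}$, which will be substituted inside the defining inclusion $\mathcal{X} \Mm \subseteq \mathcal{X}$ of each stabilizer algebra.

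For the first inclusion, I would take an arbitrary $\Nm \in \stabr{\Dcm}$ (so that $\Dcm \Nm \subseteq \Dcm$) and verify that $\Qm \Nm \Qm^{-1}$ lies in $\stabr{\Ccm}$. This amounts to the chain $\Ccm (\Qm \Nm \Qm^{-1}) = (\Ccm \Qm)\Nm \Qm^{-1} = \Dcm \Nm \Qm^{-1} \subseteq \Dcm \Qm^{-1} = \Ccm$, which only uses associativity, the hypothesis, and the stabilizing property of $\Nm$.

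For the reverse inclusion, I would proceed symmetrically: given $\Mm \in \stabr{\Ccm}$, consider $\Nm \eqdef \Qm^{-1} \Mm \Qm$ and check that $\Dcm \Nm = (\Ccm \Qm) \Qm^{-1} \Mm \Qm = \Ccm \Mm \Qm \subseteq \Ccm \Qm = \Dcm$, so $\Nm \in \stabr{\Dcm}$ and hence $\Mm = \Qm \Nm \Qm^{-1} \in \Qm \cdot \stabr{\Dcm} \cdot \Qm^{-1}$.

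There is no serious obstacle here: the statement is a formal consequence of the definition of $\stabr{\cdot}$ together with the hypothesis, and both inclusions follow from the same kind of one-line manipulation. The only point requiring a small amount of care is to place $\Qm$ and $\Qm^{-1}$ on the correct sides, since the conjugation in the statement is $\Mm \mapsto \Qm \Mm \Qm^{-1}$ (rather than the opposite convention); this is precisely fixed by the fact that the equivalence is right multiplication, so that $\Qm$ must appear on the left of the conjugation for the substitution $\Ccm \Qm = \Dcm$ to cancel cleanly.
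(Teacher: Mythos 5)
Your proof is correct and uses exactly the same one-line chain $\Ccm(\Qm\Nm\Qm^{-1})=\Dcm\Nm\Qm^{-1}\subseteq\Dcm\Qm^{-1}=\Ccm$ as the paper; the only difference is that you spell out the reverse inclusion explicitly, whereas the paper proves one direction and then says it ``easily concludes the proof.''
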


\begin{proof} Let $\Dm\in\stabr{\Dcm}$. By definition we have the
  following computation,
	\[
		\Ccm\Qm\Dm\Qm^{-1} = \Dcm\Dm\Qm^{-1}
			\subseteq\Dcm\Qm^{-1}
			= \Ccm 
	\]
	which shows that $\Qm\Dm\Qm^{-1}\in\stabr{\Ccm}$ and easily concludes the proof. 
\end{proof} 

This proposition enables to relate the conductor of right equivalent
codes to their stabilizer algebras. It asserts in particular that
$\cond{\Ccm}{\Dcm}$ is a left module on $\stabr{\Ccm}$ and a right
module on $\stabr{\Dcm}$.

\begin{proposition}\label{prop:technical}
  Let $\Ccm, \Dcm \subseteq \Mspace{m}{n}{\Fq}$ be matrix codes and
  $\Qm \in \GL{n}{\Fq}$ such that $\Ccm \Qm = \Dcm$. Then,
  \[
    \cond{\Ccm}{\Dcm} = \stabr{\Ccm} \cdot \Qm = \Qm \cdot \stabr{\Dcm}.
  \]
\end{proposition}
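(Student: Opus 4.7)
The plan is to establish both equalities by double inclusion; in fact, once we prove $\cond{\Ccm}{\Dcm} = \stabr{\Ccm} \cdot \Qm$, the second equality is a one--line consequence of the conjugation relation $\stabr{\Ccm} = \Qm \cdot \stabr{\Dcm} \cdot \Qm^{-1}$ established in \cref{prop:conj_stab}: just multiply on the right by $\Qm$ to get $\stabr{\Ccm} \cdot \Qm = \Qm \cdot \stabr{\Dcm}$. So the real content is a single double inclusion, and both directions are elementary manipulations using $\Dcm = \Ccm \Qm$ (equivalently $\Ccm = \Dcm \Qm^{-1}$, which uses that $\Qm$ is invertible).

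For the inclusion $\stabr{\Ccm} \cdot \Qm \subseteq \cond{\Ccm}{\Dcm}$, I would pick $\Sm \in \stabr{\Ccm}$ and simply compute $\Ccm (\Sm \Qm) = (\Ccm \Sm) \Qm \subseteq \Ccm \Qm = \Dcm$, which shows $\Sm \Qm \in \cond{\Ccm}{\Dcm}$ by definition of the conductor. For the reverse inclusion, I would take $\Mm \in \cond{\Ccm}{\Dcm}$, so $\Ccm \Mm \subseteq \Dcm = \Ccm \Qm$, and then use the invertibility of $\Qm$ to write $\Ccm (\Mm \Qm^{-1}) \subseteq \Ccm$, meaning $\Mm \Qm^{-1} \in \stabr{\Ccm}$, hence $\Mm = (\Mm \Qm^{-1}) \Qm \in \stabr{\Ccm} \cdot \Qm$.

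There is no real obstacle: the whole argument is formal and only uses the definitions of the conductor and of the right stabilizer algebra together with invertibility of $\Qm$. If I wanted to prove the second equality directly rather than deducing it from \cref{prop:conj_stab}, I would argue symmetrically by using $\Ccm = \Dcm \Qm^{-1}$: for $\Tm \in \stabr{\Dcm}$, one has $\Ccm (\Qm \Tm) = \Dcm \Tm \subseteq \Dcm$, and conversely for $\Mm \in \cond{\Ccm}{\Dcm}$ one writes $\Dcm (\Qm^{-1} \Mm) = \Ccm \Mm \subseteq \Dcm$, giving $\Qm^{-1} \Mm \in \stabr{\Dcm}$. Either way the proof is immediate.
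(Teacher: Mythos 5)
Your proof is correct and follows essentially the same route as the paper's: the same two elementary inclusions to get $\cond{\Ccm}{\Dcm} = \stabr{\Ccm}\Qm$, and the same appeal to \cref{prop:conj_stab} (multiplied on the right by $\Qm$) to deduce the second equality. The alternative direct argument you sketch for the second equality is also fine, but the paper does not take that path.
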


\begin{proof}
  Let $\Sm \in \stabr{\Ccm}$, then
  \[\Ccm \Sm \Qm \subseteq \Ccm \Qm = \Dcm.\]
  Therefore, $\Sm \Qm \in \cond{\Ccm}{\Dcm}$ and hence $\stabr{\Ccm}\Qm
  \subseteq \cond{\Ccm}{\Dcm}$.
  Conversely, let $\Cm \in \cond{\Ccm}{\Dcm}$, then
  \[
    \Ccm \Cm \Qm^{-1}
    \subseteq \Dcm \Qm^{-1} = \Ccm.
  \]
  Therefore, $\Cm \Qm^{-1} \in \stabr{\Ccm}$
  or equivalently $\Cm \in \stabr{\Ccm} \Qm$.
  This yields
  \[
    \cond{\Ccm}{\Dcm} \subseteq \stabr{\Ccm}\Qm.
  \]
  The equality $\cond{\Ccm}{\Dcm} = \Qm \stabr{\Dcm}$ is a consequence of
  the previous equality together with \cref{prop:conj_stab}.
\end{proof}

\subsection{Finite dimensional algebras}\label{subsec:finite_dim_alg}
As said earlier, the general resolution of \MCREP{} involves
results on finite dimensional algebras and the Artin--Wedderburn
theory.
Here we recall some known results in the literature
that we will apply to the right stabilizer algebra of our
codes. We refer the reader to
\cite{DK94} for further details on the theoretical notions and
to \cite{FR85, R90} for the algorithmic aspects.

A {\em finite dimensional algebra} $\mathcal A$ is finite dimensional
vector space over $\Fq$ which is also a ring (possibly non
commutative) with unit. A way to represent such an
algebra is to use a basis $a_1,\dots, a_\ell$ {\em together} with the
collection of so--called {\em structure constants} with respect to
this basis, which are a sequence
$(\lambda_{ijk})_{i,j,k \in \{1,\dots,\ell\}}$ such that:
\[
  \forall i,j \in \{1, \dots, \ell\}, \quad a_i a_j = \sum_{k=1}^\ell
  \lambda_{ijk} a_k.
\]
Any such algebra can be represented as a subring of
$\sqMspace{\ell}{\Fq}$ (\cite[Thm.~1.3.1]{DK94}).

A finite dimensional algebra is {\em simple} if its only two--sided
ideals are $\{0\}$ and the whole algebra itself.  Artin--Wedderburn
theory asserts that any simple algebra over a finite field $\Fq$ is
isomorphic to $\sqMspace{r}{\F_{q^\ell}}$ for some positive $r, \ell$
(see \cite[Cor.~2.4.5 \& Thm.~5.2.4]{DK94}).  A {\em semi--simple}
algebra is a cartesian product of simple algebras.  Another
fundamental object is the {\em Jacobson radical} of an algebra
$\mathcal A$ which is defined as:
\[
  \rad(\mathcal{A}) \eqdef \left\{ x\in\mathcal{A} \mbox{ } : \mbox{ }
    \forall y \in \mathcal{A}, \mbox{ } xy \mbox{ is nilpotent}
  \right\}.
\]
This is a two--sided ideal and, from \cite[Thm.~3.1.1]{DK94},
$\nicefrac{\mathcal A}{\rad{\mathcal A}}$ is semi--simple.  If this
last quotient is a field, $\mathcal A$ is said to be {\em local}.

Finally, another fundamental notion is that of {\em idempotents}. An
element $e \in \mathcal A$ is {\em idempotent} if $e^2 = e$.  In
matrix algebras, idempotent are nothing but projection matrices. Two
idempotents $e_1$ and $e_2$ are said to be {\em orthogonal} if
$e_1e_2 = e_2e_1 = 0$. An idempotent $e$ is said to be {\em minimal}
if $e$ cannot be decomposed as a sum of two non zero orthogonal
idempotents $e = e_1 + e_2$. An algebra is local if and only if its
only idempotent is the unit.

\subsection{General resolution of \MCREP{}}\label{subsec:solveMCREP}

\subsubsection{Context} Let $\Ccm, \Dcm \subseteq \Mspace{m}{n}{\Fq}$
be two matrix codes.
Once $\cond{\Ccm}{\Dcm}$ is computed, our aim is to decide, and
find if exists, a matrix $\Qm \in \GL{n}{\Fq}$ such that
\[
  \cond{\Ccm}{\Dcm} = \stabr{\Ccm} \cdot \Qm = \Qm \cdot \stabr{\Dcm}.
\]
To solve this problem, we first treat the case where $\stabr{\Ccm}$ is
local. Then, we show how to reduce to that case using idempotents of
the stabilizer algebra.

\subsubsection{When the stabilizer algebras are local}\label{ss:local}
It is easy to decide if $\Ccm$ and $\Dcm$ are right equivalent when
$\stabr{\Ccm}$ is a local algebra. Basically it relies on the
following lemma.

\begin{lemma}\label{lemma:loca}
  Let $\Ccm, \Dcm \subseteq \Mspace{m}{n}{\Fq}$ be two matrix codes that are right equivalent and such that
  \(\stabr{\Ccm}\) { is local}. Let $\Rc \eqdef \textup{Rad}(\stabr{\Ccm})$.
Then any element of
  \[\cond{\Ccm}{\Dcm} \smallsetminus \left( \mathcal R \cond{\Ccm}{\Dcm}\right)\] is
  nonsingular.
\end{lemma}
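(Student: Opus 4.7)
The plan is to combine \cref{prop:technical} with the characterization of units in a local algebra. Let $\Qm \in \GL{n}{\Fq}$ be a matrix realizing the right equivalence $\Ccm \Qm = \Dcm$ (which exists by assumption). By \cref{prop:technical}, the conductor decomposes as
\[
  \cond{\Ccm}{\Dcm} \;=\; \stabr{\Ccm}\cdot \Qm.
\]
Since $\mathcal R = \rad(\stabr{\Ccm})$ is a two--sided ideal of $\stabr{\Ccm}$ containing no identity element but stable under multiplication by $\stabr{\Ccm}$, one has $\mathcal R \cdot \stabr{\Ccm} = \mathcal R$, and therefore
\[
  \mathcal R \cdot \cond{\Ccm}{\Dcm} \;=\; \mathcal R \cdot \stabr{\Ccm} \cdot \Qm \;=\; \mathcal R \cdot \Qm.
\]

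Because $\Qm$ is invertible, right multiplication by $\Qm$ is a bijection on $\sqMspace{n}{\Fq}$, so I can write
\[
  \cond{\Ccm}{\Dcm} \smallsetminus (\mathcal R \cdot \cond{\Ccm}{\Dcm}) \;=\; \bigl(\stabr{\Ccm} \smallsetminus \mathcal R\bigr) \cdot \Qm.
\]
Thus it suffices to show that every $\Sm \in \stabr{\Ccm} \smallsetminus \mathcal R$ is nonsingular as an element of $\sqMspace{n}{\Fq}$: then $\Sm \Qm$ is a product of two invertible matrices and hence invertible.

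For this last step I would invoke the standard fact that, in a local finite--dimensional algebra $\mathcal A$ over $\Fq$, the Jacobson radical coincides with the unique maximal (left, right, or two--sided) ideal, and therefore $\mathcal A \smallsetminus \rad(\mathcal A)$ is exactly the set of units of $\mathcal A$. Indeed, the quotient $\stabr{\Ccm}/\mathcal R$ is a field by \cite[Thm.~3.1.1]{DK94} combined with locality, so any $\Sm \notin \mathcal R$ has a two--sided inverse $\Tm \in \stabr{\Ccm}$ satisfying $\Sm \Tm = \Id_n$; this relation takes place inside $\sqMspace{n}{\Fq}$ as well, so $\Sm$ is nonsingular. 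There is no real obstacle here beyond correctly matching the ring--theoretic statement ``non--radical elements are units'' with the matrix--theoretic statement ``invertible in the subalgebra implies invertible in $\sqMspace{n}{\Fq}$'', and this last implication is immediate from the identity $\Sm \Tm = \Id_n$.
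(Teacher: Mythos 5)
Your proof is correct and uses the same strategy as the paper's: invoke \cref{prop:technical} to write $\cond{\Ccm}{\Dcm} = \stabr{\Ccm}\Qm$, reduce via invertibility of $\Qm$ to showing that elements of $\stabr{\Ccm}$ outside $\mathcal R$ are nonsingular, and appeal to the characterization of units in a finite-dimensional local algebra. Your write-up is in fact slightly more explicit than the paper's in spelling out $\mathcal R\,\cond{\Ccm}{\Dcm} = \mathcal R\,\Qm$, though the step from ``$\bar{\Sm}$ invertible in $\stabr{\Ccm}/\mathcal R$'' to ``$\Sm$ invertible in $\stabr{\Ccm}$'' requires lifting the inverse modulo the radical (using its nilpotence), which is precisely the content of \cite[Thm.~3.2.2]{DK94} that the paper cites directly.
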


\begin{proof}
  By hypothesis, $\Ccm \Qm = \Dcm$ for some $\Qm \in \GL{n}{\Fq}$.
  From \cref{prop:technical}, $\cond{\Ccm}{\Dcm} = \stabr{\Ccm} \Qm$
  and hence any element of $\cond{\Ccm}{\Dcm}$ can be written
  $\Cm \Qm$ where $\Cm\in\stabr{\Ccm}$. Since $\Qm$ is
  nonsingular, $\Cm \Qm$ is singular if and only if
  $\Cm$ is singular. Finally, since $\stabr{\Ccm}$ is
  local, by \cite[Thm.~3.2.2]{DK94}, $\Cm$ is singular if and only if
  it is in $\mathcal R$.
\end{proof}

This situation yields Algorithm \ref{algo:2} which decides the right-equivalence problem in polynomial time when right stabilizer algebras are local.

	\begin{algorithm}
	\DontPrintSemicolon
	\SetKwInOut{Input}{Input}\SetKwInOut{Output}{Output}
	
	\Input{Two codes $\Ccm, \Dcm \subseteq \Mat{m}{n}{\Fq}$} 
	\Output{Codes are right-equivalent or not}\;
	Compute $\cond{\Ccm}{\Dcm}$, $\mathcal R \eqdef \rad(\stabr{\Ccm})$ and $\mathcal R \cond{\Ccm}{\Dcm}$\;
	Pick $\Am \in \cond{\Ccm}{\Dcm} \smallsetminus \mathcal R \cond{\Ccm}{\Dcm}$\;
	\;
	\If{$\Am$ non singular}{\Return Codes are right-equivalent}
	\Else{\Return Codes are not right-equivalent}
	
	\caption{An algorithm to decide the right equivalence problem when stabilizer algebras are local\label{algo:2}}
\end{algorithm}

\begin{theorem}
	Let $\Ccm, \Dcm$ be matrix codes with local stabilizer algebras. Algorithm \ref{algo:2} succeeds to decide if they are right
	equivalent in polynomial time. 
\end{theorem}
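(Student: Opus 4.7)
The plan is to verify the correctness of Algorithm~\ref{algo:2} in two directions and handle a potential edge case, then confirm that every step runs in polynomial time (in the $\Falgo{}$ sense, to accommodate the radical computation).

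\medskip
\noindent\textbf{Correctness, ``yes'' branch.} If the element $\Am$ picked at Step~2 is nonsingular, then by definition of the conductor $\Ccm \Am \subseteq \Dcm$, and since $\Am \in \GL{n}{\Fq}$ and $\dim(\Ccm) = \dim(\Dcm)$ (otherwise the conductor computation is vacuous and we return ``not equivalent'' beforehand), we get $\Ccm \Am = \Dcm$. Hence $\Ccm$ and $\Dcm$ are right equivalent.

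\medskip
\noindent\textbf{Correctness, ``no'' branch.} Assume $\Am$ is singular and, for contradiction, that $\Ccm$ and $\Dcm$ are right equivalent. Then \cref{lemma:loca} applies and asserts that every element of $\cond{\Ccm}{\Dcm} \smallsetminus \mathcal{R}\cond{\Ccm}{\Dcm}$ is nonsingular, contradicting the choice of $\Am$.

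\medskip
\noindent\textbf{The edge case.} The algorithm requires Step~2 to be executable, \emph{i.e.} $\cond{\Ccm}{\Dcm} \smallsetminus \mathcal{R}\cond{\Ccm}{\Dcm} \neq \emptyset$. By \cref{prop:conj_stab} and the remark that follows, $\cond{\Ccm}{\Dcm}$ is a finitely generated left $\stabr{\Ccm}$--module. Since $\stabr{\Ccm}$ is local, $\mathcal{R} = \rad(\stabr{\Ccm})$ is its Jacobson radical and Nakayama's lemma yields that $\mathcal{R}\cond{\Ccm}{\Dcm} = \cond{\Ccm}{\Dcm}$ forces $\cond{\Ccm}{\Dcm} = \{0\}$. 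In that case, no nonsingular matrix lies in $\cond{\Ccm}{\Dcm}$, so the codes cannot be right equivalent, and the algorithm can safely return ``not equivalent'' before reaching Step~2. Thus, we may assume at Step~2 that the complement is nonempty.

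\medskip
\noindent\textbf{Complexity.} Each step reduces to standard subroutines: computing $\cond{\Ccm}{\Dcm}$ and $\stabr{\Ccm} = \cond{\Ccm}{\Ccm}$ amounts to solving a linear system with $O(n^2)$ unknowns and $O(mn \cdot \dim(\Ccm))$ equations, hence polynomial time; computing $\mathcal{R}\cond{\Ccm}{\Dcm}$ is bilinear and done by a polynomial number of matrix products; picking an element outside a proper subspace and testing singularity are both linear algebra in polynomial time. The only nontrivial ingredient is the computation of $\rad(\stabr{\Ccm})$; this is precisely the problem solved by the Friedl--R\'onyai algorithm \cite{FR85,R90}, which runs in polynomial time as an $\Falgo{}$ (the only place calling polynomial factorization over $\Fq$). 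Combining these steps yields the claimed polynomial complexity and concludes the proof.

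\medskip
\noindent\textbf{Main obstacle.} The delicate point is not the direct verification but rather ensuring that Step~2 is well--posed. The argument via Nakayama's lemma applied to $\cond{\Ccm}{\Dcm}$ viewed as a left $\stabr{\Ccm}$--module is what guarantees that the only situation where the complement is empty is the trivial one, and it is crucial that we use the locality hypothesis here.
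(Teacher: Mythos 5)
Your proof is correct and follows essentially the same route as the paper: correctness via \cref{lemma:loca} and complexity via linear algebra plus the Friedl--R\'onyai radical computation. The one genuine addition is your Nakayama argument showing that $\cond{\Ccm}{\Dcm}\smallsetminus \mathcal{R}\cond{\Ccm}{\Dcm}=\emptyset$ only in the degenerate case $\cond{\Ccm}{\Dcm}=\{0\}$; the paper leaves this well-posedness check implicit, so your observation is a useful tightening rather than a different approach.
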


\begin{proof} The correctness of the algorithm is a direct consequence
  of Lemma \ref{lemma:loca}.  For the complexity, the computation of
  the conductor reduces to the resolution of a linear system. The
  computation of the radical can be done in polynomial time thanks to
  \cite{FR85}.
\end{proof}

\subsubsection{The general case}\label{subsubsec:general_mcrep}
In the general case, we aim to reduce the case where stabilizer
algebras of codes are local. For this sake, we are going ``to
decompose'' the unit of the right stabilizer algebras thanks to the
following proposition.

\begin{proposition}\label{propo:decompId}
  Let $\mathcal{A}$ be a $n$-dimensional algebra over $\Fq$. Then, there
  exists a polynomial time \Falgo{}
  computing a
  decomposition of the identity of $\mathcal{A}$ as a sum of minimal
  orthogonal idempotents:
	\[
	1_{\mathcal{A}} = a_1 + \dots + a_{s}.
	\]
	Furthermore this conjugation is unique to conjugation, namely if:
	\[
	1_{\mathcal{A}} = b_1 + \dots + b_t
	\] 
	we have $s = t$ and there exists $g \in \mathcal A^\times$ such that,
	up to re-indexing the $b_i$'s, we have $a_i = gb_ig^{-1}$ for any
	$i \in \{1, \dots, s\}.$
\end{proposition}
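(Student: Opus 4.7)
The plan is to work modulo the Jacobson radical, where Artin--Wedderburn theory provides explicit structural control, and then lift the resulting decomposition back to $\mathcal{A}$.

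First, I would compute $\mathcal{R} \eqdef \rad(\mathcal{A})$ in deterministic polynomial time using the Friedl--R\'onyai algorithm \cite{FR85}, so that $\bar{\mathcal{A}} \eqdef \mathcal{A}/\mathcal{R}$ is semi-simple. Second, I would explicitly decompose $\bar{\mathcal{A}}$ as a product of simple factors. Concretely, compute the center $Z(\bar{\mathcal{A}}) \simeq \prod_{i=1}^r \F_{q^{\ell_i}}$ (a linear algebra computation) and extract its primitive idempotents $\bar f_1, \dots, \bar f_r$; this step requires factoring the minimal polynomial of a generator of $Z(\bar{\mathcal{A}})$, which is where the polynomial factorisation oracle is invoked (hence the \Falgo{} terminology). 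Each $\bar f_i \bar{\mathcal{A}}$ is isomorphic to a matrix algebra $\sqMspace{r_i}{\F_{q^{\ell_i}}}$, inside which a complete system of matrix units can be produced in polynomial time by R\'onyai's algorithm \cite{R90}; the diagonal matrix units give a family $\bar a_{i,1}, \dots, \bar a_{i,r_i}$ of minimal orthogonal idempotents summing to $\bar f_i$.

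Third, I would lift each $\bar a_{i,j}$ to $\mathcal{A}$. Since $\mathcal{R}$ is a nilpotent two-sided ideal, the classical idempotent lifting lemma (see \cite{DK94}) produces, step by step along the filtration $\mathcal{R} \supseteq \mathcal{R}^2 \supseteq \cdots$, an orthogonal family $a_1, \dots, a_s$ of idempotents in $\mathcal{A}$ whose images in $\bar{\mathcal{A}}$ are the $\bar a_{i,j}$ and satisfying $a_1 + \cdots + a_s = 1_{\mathcal{A}}$; each individual lift amounts to solving a linear system. Minimality of the $a_i$ is inherited from minimality of the $\bar a_{i,j}$, since any non-trivial orthogonal splitting of some $a_i$ in $\mathcal{A}$ would project to one in $\bar{\mathcal{A}}$.

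For uniqueness, a decomposition $1 = a_1 + \cdots + a_s$ into minimal orthogonal idempotents corresponds to a decomposition of $\mathcal{A}$ as a right $\mathcal{A}$-module into indecomposable projective right ideals $a_i \mathcal{A}$. Given a second such decomposition $1 = b_1 + \cdots + b_t$, the Krull--Schmidt theorem for Artinian rings (\cite{DK94}) forces $s = t$ and provides a permutation $\sigma$ with $a_i \mathcal{A} \simeq b_{\sigma(i)} \mathcal{A}$ as right modules. A standard argument bundles these isomorphisms into a single right-module automorphism of $\mathcal{A}$, which is necessarily left multiplication by a unit $g \in \mathcal{A}^\times$, yielding $a_i = g\, b_{\sigma(i)}\, g^{-1}$. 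The main obstacle is the effective Artin--Wedderburn decomposition of $\bar{\mathcal{A}}$ in the second step; once this is done, idempotent lifting and the Krull--Schmidt argument are essentially routine.
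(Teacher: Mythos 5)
Your proposal is correct and follows the same overall architecture as the paper's proof: compute $\mathcal{R} = \rad(\mathcal{A})$ via Friedl--R\'onyai, decompose the semi-simple quotient and find minimal orthogonal idempotents there (polynomial factorisation for the central idempotents, R\'onyai's explicit isomorphisms to matrix algebras for the inner refinement), lift back to $\mathcal{A}$, and invoke a Krull--Schmidt--type statement for uniqueness. The one genuine difference is the lifting step. The paper invokes the Wedderburn--Malcev theorem to embed the whole semi-simple quotient $\mathcal{S}$ as a subalgebra of $\mathcal{A}$ (an algebra-section of the quotient map, computed by linear algebra following Bremner \cite{B11c}) and then transports the matrix-unit idempotents along that embedding; you instead lift the idempotents individually by the classical idempotent-lifting lemma along the filtration $\mathcal{R}\supseteq\mathcal{R}^2\supseteq\cdots$, using that a nonzero idempotent cannot land in the nil ideal $\mathcal{R}$ to preserve both nonvanishing and minimality. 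Both routes are sound and polynomial time; Wedderburn--Malcev yields a slightly stronger object (a full algebra splitting, itself unique up to conjugation, which also packages the uniqueness of the decomposition cleanly via \cite[Thm.~3.4.1]{DK94}), while your approach is more elementary and makes the cost explicit as a sequence of linear solves. For uniqueness, your Krull--Schmidt argument through indecomposable projective right ideals $a_i\mathcal{A}$, bundled into a right-module automorphism given by left multiplication by a unit, is exactly the standard proof of the theorem the paper cites, so there is no gap there either.
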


\begin{proof} We proceed as follows. 

\smallskip
  
\noindent {\bf Step 1.} Using Frield--R{\'o}nyai algorithm
\cite[Thm~5.7]{FR85}, compute the Jacobson Radical $\rad (\mathcal A)$
of $\mathcal{A}$. This is done in polynomial time;

\smallskip

\noindent {\bf Step 2.}
Compute the semi--simple algebra
$\mathcal S \eqdef\nicefrac{\mathcal{A}}{\rad (\mathcal{A})}$.  From
\cite[Theorem 3.1]{R90} there is a polynomial time \Falgo{} computing
bases of simple algebras $\mathcal{S}_{i}$ such that
	\[
	\mathcal{S} = \mathcal{S}_1 \oplus \dots \oplus \mathcal{S}_{r}.
  \]
  This permits to compute the canonical decomposition of the
  identity into central idempotents:
  \[
    1_{\mathcal S} = 1_{\mathcal S_1} + \cdots + 1_{\mathcal S_r},
  \]
  but these idempotents are not always minimal (unless the $\mathcal S_i$'s
  are all fields) and should be decomposed into minimal ones.  As
  mentioned in Section~\ref{subsec:finite_dim_alg}, since the
  $\mathcal{S}_{i}$'s are simple, they are isomorphic to
  $\sqMspace{u_i}{\F_{q^{v_i}}}$ for some positive
  $u_i,v_i$. Moreover, \cite[Sec.~5.1]{R90} provides a polynomial time
  \Falgo{} to compute these isomorphisms. In the algebra
  $\sqMspace{u_i}{\F_{q^{v_i}}}$ a minimal decomposition of the identity
  is obtained as the sum of the matrices with only one non-zero
  element which is on the diagonal. Using our explicit isomorphism between
  $\mathcal S_i$ and $\sqMspace{u_i}{\F_{q^{v_i}}}$, we get a decomposition of
  of $1_{\mathcal S_i}$ as a sum of minimal orthogonal
  idempotents and deduce such a minimal decomposition for $\mathcal S$:
	\[
	1_{\mathcal S} = e_1 + \cdots + e_s \quad \mbox{where} \quad s \geq r. 
	\]   
\smallskip
    
\noindent {\bf Step 3.}  Since the ground field is finite, from
Wedderburn--Malcev Theorem \cite[Thm.~6.2.1]{DK94}, there exists an
injective morphism of algebras $\mathcal S \hookrightarrow \mathcal A$
and the image of this morphism is unique up to conjugation. Moreover,
such a lift of $\mathcal S$ can be computed by linear algebra as
explained in \cite[\S~1.12]{B11c}.  Using this lift, we deduce a
decomposition of the identity of $\mathcal A$ by minimal orthogonal
idempotents:
	\[
	1_{\mathcal A}= a_1 + \cdots + a_{s}.
	\]
    According to \cite[Thm~3.4.1]{DK94}, this decomposition is
   unique up to conjugation.
\end{proof}

In order to decide the right equivalence of two given matrix codes
$\Ccm, \Dcm$, we start by computing their right stabilizer algebras
 \(\stabr{\Ccm}\) and \(\stabr{\Dcm}\).
Thanks to Proposition~\ref{propo:decompId}, one can compute minimal idempotent
decompositions of the identity of these algebras
\begin{equation}\label{eq:decomp_id_stab}
  \Id_n = \Em_1 + \cdots + \Em_\ell \qquad \text{and} \qquad \Id_n =
  \Fm_1 + \cdots + \Fm_{\ell'}.
\end{equation}
From Proposition~\ref{propo:decompId} the stabilizer algebras are
conjugated if an only if $\ell = \ell'$ and after a suitable
re-indexing, there is a matrix $\Qm$ such that for any $i$,
$\Qm \Em_i \Qm^{-1} = \Fm_i$.
In particular, if $\ell \neq \ell'$ one can stop the process, the codes are
not equivalent. Hence, from now on, we suppose $\ell = \ell'$.
The idea of the general case is to apply ``piecewise'' the algorithm
designed for codes with local algebras to the terms of the following
codes decompositions:
\[
  \Ccm = \Ccm \Em_1 \oplus \cdots \oplus \Ccm \Em_{\ell} \qquad
  \text{and} \qquad \Dcm = \Dcm \Fm_1 \oplus \cdots \oplus \Dcm
  \Fm_{\ell}.
\]
One technical difficulty to tackle is that terms of the decompositions, are matrix codes whose
span of row spaces are not the full space $\Fq^n$. Indeed,
considering for instance the case of $\Ccm$, the matrices $\Em_i$'s
are projectors whose images are in direct sum and any matrix in the space
$\Ccm \Em_i$ has a row space contained in that of $\Em_i$.

To circumvent this issue, we will use the following statement whose proof
is left to the reader.

\begin{lemma}\label{lem:decomp_proj}
  Let $\Em \in \sqMspace{n}{\Fq}$ be a projector of rank $r$
  then there exist two full--rank matrices $\Am, \Bm \in \Mspace{n}{r}{\Fq}$
  such that \[\Em = \Am \transpose{\Bm} \qquad \text{and} \qquad
    \transpose{\Bm}\Am = \Id_r.\]
\end{lemma}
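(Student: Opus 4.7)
The plan is to exploit that a projector is entirely determined by its image and kernel, and to realise the factorisation via a single well-chosen change of basis.

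First I would observe that the identity $\Em^2 = \Em$ forces the direct-sum decomposition $\Fq^n = \im(\Em) \oplus \ker(\Em)$ with $\dim \im(\Em) = r$ and $\dim \ker(\Em) = n-r$. I would then pick an ordered basis of $\Fq^n$ adapted to this splitting: the first $r$ vectors forming a basis of $\im(\Em)$ and the last $n-r$ vectors forming a basis of $\ker(\Em)$. Assembling them as columns produces an invertible matrix $\Pm \in \GL{n}{\Fq}$ that diagonalises $\Em$, namely
\[
    \Pm^{-1}\Em \Pm \;=\; \begin{pmatrix} \Id_r & 0 \\ 0 & 0 \end{pmatrix}.
\]

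Next, I would define $\Am \in \Mspace{n}{r}{\Fq}$ to be the submatrix formed by the first $r$ columns of $\Pm$, and $\transpose{\Bm} \in \Mspace{r}{n}{\Fq}$ to be the submatrix formed by the first $r$ rows of $\Pm^{-1}$. Both are automatically full-rank since they are extracted from invertible matrices. The equality $\Em = \Am \transpose{\Bm}$ then falls out directly from the block identity above (the product $\Pm \cdot \mathrm{diag}(\Id_r, 0) \cdot \Pm^{-1}$ only involves the selected columns of $\Pm$ and the selected rows of $\Pm^{-1}$). The second relation $\transpose{\Bm}\Am = \Id_r$ is simply the top-left $r\times r$ block of the identity $\Pm^{-1}\Pm = \Id_n$.

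No real obstacle is expected: the argument is routine linear algebra and each ingredient (bases of $\im(\Em)$ and $\ker(\Em)$, assembly and inversion of $\Pm$) is polynomially computable, which is consistent with the algorithmic tone of the surrounding section.
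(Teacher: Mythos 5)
Your proof is correct and complete. The paper in fact omits this proof entirely (``whose proof is left to the reader''), and your diagonalisation argument --- splitting $\Fq^n = \im(\Em)\oplus\ker(\Em)$, assembling $\Pm$, taking $\Am$ as the first $r$ columns of $\Pm$ and $\transpose{\Bm}$ as the first $r$ rows of $\Pm^{-1}$ --- is precisely the standard and most natural route, with both claimed identities falling out of the block computation exactly as you describe.
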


Using this lemma, we decompose the $\Em_i$'s into $\Am_i \transpose{\Bm}_i$
and the $\Fm_i$'s into $\Um_i \transpose{\Vm}_i$ and will focus on the
matrix codes $\Ccm \Am_i$ (resp. $\Dcm \Um_i$) which are contained in $\Mspace{n}{\rk (\Am_i)}{\Fq}$ (resp. $\Mspace{n}{\rk (\Um_i)}{\Fq}$).

\begin{remark}
  Note that replacing $\Ccm$ be a right equivalent code $\Ccm \Qm_0$
  for some $\Qm_0 \in \GL{n}{\Fq}$ and the $\Em_i$'s by their
  conjugates under $\Qm_0$, one could get $\Em_i$'s of the form
  \[
    \Em_i =
    \begin{pmatrix}
    0 &        &   &   &        &   &   &        & \\
      & \ddots &   &   &        &   &   &        & \\
      &        & 0 &   &        &  (0) &   &        & \\
      &        &   & 1 &        &   &   &        & \\
      &        &   &   & \ddots &   &   &        & \\
      &        & (0)  &   &        & 1 &   &        & \\
      &        &   &   &        &   & 0 &        & \\
      &        &   &   &        &   &   & \ddots & \\
      &        &   &   &        &   &   &        & 0
    \end{pmatrix}
  \]
and the code $\Ccm \Am_1$ would be the code $\Ccm$ but keeping only
  the $\rk(\Em_1)$ first columns of each element, the code
  $\Ccm \Am_2$ would be $\Ccm$ but only keeping the $\rk(\Em_2)$
  next columns of each element, and so on.
\end{remark}

Toward the computation of a ``picewise'' right equivalence,
we need the following crucial statements. The first one shows that
right equivalence entails picewise right equivalence.

\begin{proposition}\label{prop:picewise}
  Let $\Ccm, \Dcm$ be two matrix codes in $\Mspace{m}{n}{\Fq}$ such
  that $\Ccm \Qm = \Dcm$ for some $\Qm \in \GL{n}{\Fq}$. Let
  $\Em_1, \dots, \Em_\ell$ and $\Fm_1, \dots, \Fm_\ell$ be minimal
  idempotents of their stabiliser algebras as in (\ref{eq:decomp_id_stab})
  together with their respective decompositions $\Am_i \transpose{\Bm_i}$ and
  $\Um_i \transpose{\Vm_i}$ according to Lemma~\ref{lem:decomp_proj}.
  Then, after a suitable re-indexing, for any $i$ the codes $\Ccm \Am_i$
  and $\Dcm \Am_i$ are right equivalent.
\end{proposition}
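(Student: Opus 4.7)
The plan is to exploit the uniqueness clause of Proposition~\ref{propo:decompId}, applied inside the single algebra $\stabr{\Ccm}$, in order to bring the two families of idempotents into a compatible form at the cost of adjusting $\Qm$ by an invertible element of $\stabr{\Ccm}$.

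First I would observe that, by Proposition~\ref{prop:conj_stab}, conjugation by $\Qm$ carries $\stabr{\Dcm}$ onto $\stabr{\Ccm}$, so that the family $(\Qm \Fm_i \Qm^{-1})_i$ is a second minimal orthogonal idempotent decomposition of $\Id_n$ living inside $\stabr{\Ccm}$. The uniqueness clause of Proposition~\ref{propo:decompId} applied to $\stabr{\Ccm}$ then produces a permutation of the $\Fm_i$'s, along which we re-index, and an element $\Sm \in \stabr{\Ccm}^{\times}$ such that
\[
\Qm \Fm_i \Qm^{-1} \;=\; \Sm\, \Em_i\, \Sm^{-1} \qquad \text{for every } i.
\]
Since $\Sm$ admits an inverse already inside the subalgebra $\stabr{\Ccm}$, it is in particular invertible as an $n \times n$ matrix.

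Next I would set $\Tm \eqdef \Sm^{-1}\Qm \in \GL{n}{\Fq}$. The previous identity rewrites as the intertwining relation $\Em_i \Tm = \Tm \Fm_i$, while right multiplication by $\Sm \in \stabr{\Ccm}^{\times}$ preserves $\Ccm$, giving $\Ccm \Sm = \Ccm$ and hence $\Dcm = \Ccm \Qm = \Ccm \Tm$. So, up to replacing $\Qm$ by $\Tm$, we have found an equivalence between the two codes that genuinely intertwines the two families of idempotents.

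Finally, Lemma~\ref{lem:decomp_proj} guarantees that the columns of $\Am_i$ (resp.\ $\Um_i$) span $\im \Em_i$ (resp.\ $\im \Fm_i$). The intertwining $\Em_i \Tm = \Tm \Fm_i$ forces $\Tm \cdot \im \Fm_i = \im \Em_i$, so that $\Tm \Um_i$ and $\Am_i$ are two full-rank $n \times r_i$ matrices (with $r_i \eqdef \rk \Em_i = \rk \Fm_i$) having the same column span. Setting $\Pm_i \eqdef \transpose{\Bm_i}\, \Tm\, \Um_i$, a direct computation yields
\[
\Am_i \Pm_i \;=\; \Em_i \Tm \Um_i \;=\; \Tm \Fm_i \Um_i \;=\; \Tm \Um_i,
\]
and $\Pm_i$ is invertible as a $r_i \times r_i$ matrix because both sides have rank $r_i$. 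Right-multiplying the equality $\Dcm = \Ccm \Tm$ by $\Um_i$ then produces
\[
\Dcm \Um_i \;=\; \Ccm \Tm \Um_i \;=\; \Ccm \Am_i \Pm_i,
\]
which is the desired right equivalence between $\Ccm \Am_i$ and $\Dcm \Um_i$. The main obstacle in this outline is the bookkeeping around the re-indexing of the $\Fm_i$'s and the verification that the element $\Sm$, invertible \emph{a priori} only inside the subalgebra, is automatically invertible as an ambient matrix; once these two points are settled everything reduces to the direct manipulation of the intertwining relation above.
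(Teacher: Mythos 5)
Your proof is correct and follows essentially the same strategy as the paper's: normalize the equivalence matrix by an invertible element of a stabilizer algebra so that the two idempotent families become conjugate under it, then right-multiply the equation $\Dcm = \Ccm\Tm$ by $\Um_i$ and factor through the decomposition $\Em_i = \Am_i\transpose{\Bm}_i$ to exhibit the connecting matrix $\transpose{\Bm}_i\Tm\Um_i$. The only cosmetic differences are that you absorb the conjugating unit into $\stabr{\Ccm}$ on the left (the paper uses $\stabr{\Dcm}$ on the right) and that you argue invertibility of $\Pm_i$ via $\rk(\Am_i\Pm_i)=\rk(\Tm\Um_i)=r_i$, which is marginally cleaner than the paper's chain $\rk(\transpose{\Bm}_i\Qm\Um_i)\geq\rk(\Em_i\Qm\Fm_i)=r_i$; both are valid.
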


\begin{proof}
  Since the $\Fm_i$'s are unique up to conjugation, after possibly replacing
  $\Qm$ by $\Qm \Sm$ for some $\Sm \in \stabr{\Dcm}$ one may assume that
  for any $i$, $\Fm_i = \Qm^{-1}\Em_i\Qm$.
  Therefore,
  \begin{equation}\label{eq:conj_pieces}
    \forall i \in \{1, \dots, \ell\},\quad
    \Ccm \Em_i \Qm = \Ccm \Qm \Qm^{-1}\Em_i \Qm = \Dcm \Fm_i.
  \end{equation}
  Next, recall that $\Em_i = \Am_i \transpose{\Bm_i}$,
  $\Fm_i = \Um_i \transpose{\Vm}_i$ with
  $\transpose{\Vm}_i \Um_i = \transpose{\Bm}_i \Am_i = 0$ and consider
  (\ref{eq:conj_pieces}) again
  \[
    \forall i,\quad \Ccm \Am_i \transpose{\Bm}_i \Qm = \Dcm \Um_i
    \transpose{\Vm}_i \quad
    \Longrightarrow\quad
    \Ccm \Am_i \left(\transpose{\Bm}_i \Qm \Um_i \right) = \Dcm \Um_i
    \underbrace{\transpose{\Vm}_i \Um_i}_{= \Id_{r_i}} = \Dcm \Um_i,
  \]
  where $r_i$ denotes the rank of $\Em_i$.

  There remains to prove
  that $\transpose{\Bm}_i \Qm \Um_i \in \sqMspace{r_i}{\Fq}$ is
  nonsingular. First, note that
  \[
    \rk (\transpose{\Bm}_i \Qm \Um_i) \geq \rk (\Am_i
    \transpose{\Bm}_i \Qm \Um_i \transpose{\Vm}_i) = \rk (\Em_i \Qm \Fm_i).
  \]
  Since $\Fm_i = \Qm^{-1}\Em_i \Qm$, the right hand side becomes
  $\rk (\Em_i^2 \Qm) = \rk(\Em_i \Qm) = r_i$. Therefore
  $\transpose{\Bm}_i \Qm \Um_i$ is a square matrix with full rank, it
  is invertible.
\end{proof}

Conversely, the following statement asserts that picewise right
equivalence entails right equivalence.

\begin{proposition}\label{prop:piecewise_to_global}
  Let $\Ccm, \Dcm$ and the $\Em_i = \Am_i \transpose{\Bm}_i$ and
  $\Fm_i = \Um_i \transpose{\Vm}_i$ as in
  Proposition~\ref{prop:picewise}.  Suppose that for any
  $i \in \{1, \dots, \ell\}$ there exists $\Qm_i \in \GL{r_i}{\Fq}$,
  where $r_i = \textup{Rk}(\Em_i)$, such that $\Ccm \Am_i \Qm_i = \Dcm
  \Um_i$. Then $\Ccm$ and $\Dcm$ are right equivalent. In particular,
  \[
    \Ccm \Qm = \Dcm \qquad \text{where} \qquad \Qm \eqdef \sum_{i=1}^\ell
    \Am_i \Qm_i \transpose{\Vm}_i
  \]
  and the latter matrix $\Qm$ is nonsingular.
\end{proposition}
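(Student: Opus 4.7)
The plan is to verify three claims separately: (i) the explicit matrix $\Qm = \sum_i \Am_i \Qm_i \transpose{\Vm}_i$ is invertible; (ii) $\Ccm \Qm \subseteq \Dcm$; and (iii) $\dim \Ccm = \dim \Dcm$, so that the inclusion is in fact an equality. The key tool throughout will be the block structure inherited from the orthogonality of the idempotents.

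First I would exploit the orthogonality relations to assemble block matrices. Because $\Em_i \Em_j = 0$ for $i \neq j$ and $\transpose{\Bm}_j \Am_j = \Id_{r_j}$, one deduces $\transpose{\Bm}_i \Am_j = 0$ for $i\neq j$, and symmetrically $\transpose{\Vm}_i \Um_j = 0$. Setting $\Am \eqdef (\Am_1 \mid \cdots \mid \Am_\ell)$, $\Bm \eqdef (\Bm_1 \mid \cdots \mid \Bm_\ell)$, $\Um \eqdef (\Um_1 \mid \cdots \mid \Um_\ell)$, $\Vm \eqdef (\Vm_1 \mid \cdots \mid \Vm_\ell)$, these are $n \times n$ matrices and one checks $\transpose{\Bm}\Am = \Id_n = \transpose{\Vm}\Um$, so $\Am, \Um$ are invertible with inverses $\transpose{\Bm}, \transpose{\Vm}$. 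One can then factor
\[
  \Qm = \Am \cdot \mathrm{diag}(\Qm_1,\ldots,\Qm_\ell) \cdot \transpose{\Vm},
\]
which is a product of three invertible matrices, hence invertible. This takes care of claim (i).

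For claim (ii), take any $\Cm \in \Ccm$. For each $i$, the hypothesis $\Ccm\Am_i \Qm_i = \Dcm \Um_i$ provides some $\Dm_i \in \Dcm$ with $\Cm \Am_i \Qm_i = \Dm_i \Um_i$; right-multiplying by $\transpose{\Vm}_i$ and using $\transpose{\Vm}_i \Um_i = \Id_{r_i}$ in the hypothesis combined with $\Um_i \transpose{\Vm}_i = \Fm_i \in \stabr{\Dcm}$, we obtain
\[
  \Cm \Am_i \Qm_i \transpose{\Vm}_i = \Dm_i \Fm_i \in \Dcm.
\]
Summing over $i$ yields $\Cm \Qm \in \Dcm$, hence $\Ccm \Qm \subseteq \Dcm$.

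Finally, claim (iii) follows from the decompositions $\Ccm = \bigoplus_{i=1}^{\ell}\Ccm\Em_i$ and $\Dcm = \bigoplus_{i=1}^{\ell}\Dcm\Fm_i$, which are direct sums because $\Em_i$, $\Fm_i$ are orthogonal idempotents in the respective stabilizer algebras. Since $\transpose{\Bm}_i$ and $\transpose{\Vm}_i$ have full row rank, right multiplication by them is injective on the relevant spaces, giving $\dim \Ccm\Em_i = \dim \Ccm\Am_i$ and $\dim \Dcm\Fm_i = \dim \Dcm\Um_i$; these dimensions match by the hypothesis that $\Qm_i$ is invertible. Summing yields $\dim \Ccm = \dim \Dcm$, and combining with (i)-(ii) forces $\Ccm \Qm = \Dcm$. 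The only subtle point, which I would flag as the main delicacy, is the bookkeeping between $\Em_i = \Am_i\transpose{\Bm}_i$ and $\Fm_i = \Um_i\transpose{\Vm}_i$ in step (ii): it is the identity $\Um_i\transpose{\Vm}_i = \Fm_i$ that converts the local equivalence $\Ccm\Am_i\Qm_i = \Dcm\Um_i$ into an honest element of $\Dcm$, rather than merely an element of some subspace of $\Mspace{m}{n}{\Fq}$.
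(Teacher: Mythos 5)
Your proof is correct and establishes the claim, but it differs from the paper's argument in two respects worth noting. For invertibility of $\Qm$, the paper exhibits an explicit two-sided inverse $\Qm' = \sum_{i} \Um_i \Qm_i^{-1} \transpose{\Bm}_i$ and verifies $\Qm\Qm' = \Id_n$ by expanding the product and killing the cross terms via $\transpose{\Vm}_i \Um_j = 0$ for $i \neq j$. Your factorization $\Qm = \Am\cdot\mathrm{diag}(\Qm_1,\dots,\Qm_\ell)\cdot\transpose{\Vm}$, with $\Am$ and $\transpose{\Vm}$ invertible because $\transpose{\Bm}\Am = \transpose{\Vm}\Um = \Id_n$, is an equivalent and arguably cleaner way to package the same block orthogonality; it also produces the same inverse $\Qm^{-1} = \Um \cdot \mathrm{diag}(\Qm_i^{-1})\cdot \transpose{\Bm}$ upon unpacking, so the two routes are very close here. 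For the equality $\Ccm\Qm = \Dcm$, the paper simply computes $\Ccm\Qm = \sum_i \Ccm\Am_i\Qm_i\transpose{\Vm}_i = \sum_i \Dcm\Fm_i = \Dcm$; the first equality is a set identity that implicitly uses that $\Cm \mapsto (\Cm\Em_1,\dots,\Cm\Em_\ell)$ is an isomorphism $\Ccm \to \bigoplus_i \Ccm\Em_i$, so that the component codewords $\Cm\Am_i$ can be chosen independently. Your inclusion-plus-dimension-count sidesteps this point entirely: you only need the trivial containment $\Ccm\Qm\subseteq\Dcm$, and then you recover equality from $\dim\Ccm\Em_i = \dim\Ccm\Am_i = \dim\Dcm\Um_i = \dim\Dcm\Fm_i$ (using that $\transpose{\Bm}_i$, $\transpose{\Vm}_i$ have full row rank and that $\Qm_i$ is invertible) together with the invertibility of $\Qm$. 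This is slightly longer but makes every step elementary, avoiding any reliance on the set-level equality in the paper's chain; it is a perfectly sound alternative.
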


\begin{proof}
  We have,
  \[
    \Ccm \Qm = \sum_{i=1}^\ell \Ccm \Am_i \Qm_i \transpose{\Vm}_i=
    \sum_{i=1}^\ell \Dcm \Um_i \transpose{\Vm}_i = \Dcm
    \sum_{i=1}^\ell \Fm_i = \Dcm.
  \]
  There remains to prove that $\Qm$ is nonsingular. For this, we will
  prove that
  \[
    \Qm' \eqdef \sum_{i=1}^\ell \Um_i \Qm_i^{-1} \transpose{\Bm}_i
  \]
  is its inverse. Indeed, consider their product:
  \begin{equation}\label{eq:QQ'}
    \Qm \Qm' = \sum_{i=1}^\ell \Am_i \Qm_i \transpose{\Vm}_i \Um_i
    \Qm_i^{-1} \transpose{\Bm}_i + \sum_{i \neq j} \Am_i \Qm_i
    \transpose{\Vm}_i \Um_j \Qm_j^{-1} \transpose{\Bm}_j.
  \end{equation}
  From Lemma~\ref{lem:decomp_proj}, the left--hand sum of
  (\ref{eq:QQ'}) is
  \[
    \sum_{i=1}^\ell \Am_i \transpose{\Bm}_i = \sum_{i=1}^\ell \Em_i = \Id_n.
  \]
  For the right--hand sum of (\ref{eq:QQ'}), note that, again from
  Lemma~\ref{lem:decomp_proj},
  \[
    \forall i \neq j,\quad \transpose{\Vm}_i \Um_j = \transpose{\Vm}_i
    \underbrace{\Um_i \transpose{\Vm}_i}_{=\Em_i} \underbrace{\Um_j
      \transpose{\Vm}_j}_{=\Em_j}\Um_j = 0.
  \]
  Hence, all the terms of the right--hand sum of (\ref{eq:QQ'}) are zero.
  This concludes the proof.
\end{proof}

Finally, we prove that ``pieces'' $\Ccm \Am_i$ have local right
stabiliser algebras. Hence, equivalence can be decided using
Algorithm~\ref{algo:2}.

\begin{proposition}
  Let $\Ccm$, $\Em_i = \Am_i \transpose{\Bm}_i$ as in
  Proposition~\ref{prop:picewise}.  Then, the code $\Ccm \Am_i$
  has a local right stabiliser algebra.
\end{proposition}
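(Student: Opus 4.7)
The plan is to construct an $\Fq$-algebra isomorphism
\[
\psi: \Em_i \stabr{\Ccm} \Em_i \;\longrightarrow\; \stabr{\Ccm \Am_i},\qquad \Sm \longmapsto \transpose{\Bm}_i\, \Sm\, \Am_i,
\]
and then to invoke the standard fact that $e\mathcal A e$ is local whenever $e$ is a minimal idempotent of a finite dimensional algebra $\mathcal A$. Since \cref{propo:decompId} guarantees that $\Em_i$ is a minimal idempotent of $\stabr{\Ccm}$, this will give the conclusion.

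For the isomorphism, I would propose the inverse $\Nm \mapsto \Am_i \Nm \transpose{\Bm}_i$ and verify that both maps are $\Fq$-linear, multiplicative, unit-preserving and mutually inverse. These checks are routine thanks to the four identities $\transpose{\Bm}_i \Am_i = \Id_{r_i}$, $\Em_i = \Am_i \transpose{\Bm}_i$, $\Em_i \Am_i = \Am_i$ and $\transpose{\Bm}_i \Em_i = \transpose{\Bm}_i$, all derived from \cref{lem:decomp_proj}. The only mildly non-formal points are the two containment verifications. On the one hand, for $\Sm \in \Em_i \stabr{\Ccm} \Em_i$, one writes $\Ccm \Am_i\, \transpose{\Bm}_i \Sm \Am_i = \Ccm\, \Em_i \Sm\, \Am_i \subseteq \Ccm \Am_i$, the inclusion using $\Em_i \Sm \in \stabr{\Ccm}$. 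On the other hand, for $\Nm \in \stabr{\Ccm \Am_i}$ and any $\cv \in \Ccm$, the row $\cv \Am_i$ lies in $\Ccm \Am_i$, hence $\cv\, \Am_i \Nm \transpose{\Bm}_i \in \Ccm \Am_i \transpose{\Bm}_i = \Ccm \Em_i \subseteq \Ccm$; the relations above then show that $\Am_i \Nm \transpose{\Bm}_i$ is fixed by left and right multiplication by $\Em_i$, hence lies in $\Em_i \stabr{\Ccm} \Em_i$.

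For the second step, any nonzero idempotent $f$ in $\Em_i \stabr{\Ccm} \Em_i$ satisfies $f = \Em_i f \Em_i$, so $f \Em_i = \Em_i f = f$, and therefore $f$ and $\Em_i - f$ are orthogonal idempotents of $\stabr{\Ccm}$ summing to $\Em_i$. The minimality of $\Em_i$ forces $\Em_i - f = 0$, so $\Em_i$ is the unique nonzero idempotent of the corner algebra, which is consequently local by the characterisation recalled in \cref{subsec:finite_dim_alg}. The isomorphism $\psi$ then transfers locality to $\stabr{\Ccm \Am_i}$.

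I expect the main obstacle to be precisely the two containment verifications showing that $\psi$ and its candidate inverse land in the correct subalgebras; everything else is straightforward bookkeeping with the factorisation $\Em_i = \Am_i \transpose{\Bm}_i$ and the relation $\transpose{\Bm}_i \Am_i = \Id_{r_i}$.
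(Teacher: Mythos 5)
Your proof is correct and rests on the same key observation as the paper's: the map $\Nm \mapsto \Am_i \Nm \transpose{\Bm}_i$ transports idempotents of $\stabr{\Ccm \Am_i}$ into $\stabr{\Ccm}$, where minimality of $\Em_i$ then forces triviality. You package this more structurally, identifying $\stabr{\Ccm \Am_i}$ with the corner algebra $\Em_i \stabr{\Ccm} \Em_i$ via $\psi: \Sm \mapsto \transpose{\Bm}_i \Sm \Am_i$ and its explicit inverse, and then invoking (with a short self-contained argument) the standard fact that a corner at a minimal idempotent is local. The paper instead proceeds directly by contradiction: it assumes a nontrivial orthogonal idempotent decomposition $\Id_{r_i} = \Em_{i_1} + \Em_{i_2}$ in $\stabr{\Ccm \Am_i}$, transports both summands via $\Nm \mapsto \Am_i \Nm \transpose{\Bm}_i$, and checks they are orthogonal idempotents of $\stabr{\Ccm}$ summing to $\Em_i$, contradicting minimality. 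Both routes are valid; yours carries the marginal advantage of making the algebra isomorphism $\stabr{\Ccm \Am_i} \simeq \Em_i \stabr{\Ccm} \Em_i$ explicit, a reusable structural fact, while the paper's version is leaner since it only needs one direction of the correspondence.
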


\begin{proof}
  Suppose that $\stabr{\Ccm \Am_i}$ is not local, then, there is a
  decomposition of its unit by orthogonal idempotents
  $\Id_{r_i} = \Em_{i_1} + \Em_{i_2}$, where
  $\Em_{i_1}, \Em_{i_2} \in \stabr{\Ccm \Am_i}$ Then,
  \[
    \forall j \in \{1, 2\},\qquad \Ccm \Am_i \Em_{i_j} \transpose{\Bm}_i
    \subseteq \Ccm \Am_i \transpose{\Bm}_i = \Ccm \Em_i \subseteq \Ccm.
  \]
  Therefore, $\Am_i \Em_{i_1} \transpose{\Bm}_i$ and
  $\Am_i \Em_{i_2} \transpose{\Bm}_i$ both lie in
  $\stabr{\Ccm}$. Next, using Lemma~\ref{lem:decomp_proj}, one can
  prove that they are orthogonal idempotents whose sum equals $\Em_i$,
  which contradicts the minimality of $\Em_i$ as an idempotent of
  $\stabr{\Ccm}$.
\end{proof}

	\begin{algorithm}
	\DontPrintSemicolon
	\SetKwInOut{Input}{Input}\SetKwInOut{Output}{Output}
	
	\Input{Two codes $\Ccm, \Dcm \subseteq \Mat{m}{n}{\Fq}$} 
	\Output{Codes are right-equivalent or not}\;
Compute their right stabiliser algebras and minimal decompositions
of their units by orthogonal idempotents:
\[
\Id_n = \Em_1 + \cdots + \Em_\ell \qquad \text{and} \qquad \Id_n =
\Fm_1 + \cdots + \Fm_{\ell'}.
\]\;
\If{$\ell \neq \ell'$}
{\Return Codes are not right equivalent}
\Else{Find a permutation $\sigma\in  \mathfrak S_\ell$ such that  $\forall i \in \{1, \dots, \ell\}$ it exists a non-singular $\Qm_i$ with $\Ccm \Am_i \Qm_i = \Dcm \Um_{\sigma(i)}$ (using
	notation of Proposition~\ref{prop:picewise} ).\;

\If{\textup{Computation fails}}{\Return Codes are not right equivalent}
\Else{\Return Codes are right Equivalent}
}
	
	\caption{An algorithm to solve $\MCREP$ \label{algo:4}}
\end{algorithm}

	The previous material permits to solve \MCREP{} with Algorithm \ref{algo:4}.

\begin{theorem}
	Algorithm \ref{algo:4} is a polynomial time \Falgo{} solving \MCREP{}. 
\end{theorem}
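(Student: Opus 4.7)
The plan is to establish correctness using the structural results preceding Algorithm~\ref{algo:4}, namely Propositions~\ref{prop:picewise} and \ref{prop:piecewise_to_global} together with the last proposition of the section (asserting that each piece $\Ccm \Am_i$ has a \emph{local} right stabiliser algebra), and then to verify that every step is either linear algebra, an invocation of the polynomial-factoring oracle used in Proposition~\ref{propo:decompId}, or a polynomial-time combinatorial subroutine.

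For correctness, both right stabiliser algebras are obtained by linear algebra (Remark~\ref{rem:computing_stabilizers}), and Proposition~\ref{propo:decompId} produces minimal orthogonal idempotent decompositions
\[
\Id_n = \Em_1 + \cdots + \Em_\ell = \Fm_1 + \cdots + \Fm_{\ell'}.
\]
If $\Ccm$ and $\Dcm$ are right equivalent, Proposition~\ref{prop:conj_stab} shows that their right stabilisers are conjugate, and the uniqueness clause in Proposition~\ref{propo:decompId} then forces $\ell = \ell'$; hence the first branch of the algorithm is correct. When $\ell = \ell'$, Proposition~\ref{prop:picewise} converts a global right equivalence into the existence of some permutation $\sigma$ and matrices $\Qm_i \in \GL{r_i}{\Fq}$ with $\Ccm \Am_i \Qm_i = \Dcm \Um_{\sigma(i)}$, while Proposition~\ref{prop:piecewise_to_global} (applied after reindexing the $\Fm_i$'s via $\sigma$) explicitly reassembles such $\Qm_i$'s into a global $\Qm \in \GL{n}{\Fq}$ with $\Ccm \Qm = \Dcm$. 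Thus the decision of right equivalence reduces exactly to the existence of such a $\sigma$.

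The search for $\sigma$ is the only genuinely combinatorial ingredient. I would build the bipartite graph $G$ on $\{1,\ldots,\ell\} \sqcup \{1,\ldots,\ell\}$ in which $(i,j)$ is an edge whenever $r_i = \rk \Em_i$ equals $r'_j = \rk \Fm_j$ and the pieces $\Ccm \Am_i$ and $\Dcm \Um_j$ are right equivalent. The local-stabiliser proposition guarantees that each such pairwise equivalence test can be answered in polynomial time by Algorithm~\ref{algo:2}, and a valid $\sigma$ exists if and only if $G$ admits a perfect matching, which is decidable in polynomial time by standard techniques. Whenever the matching succeeds, the individual $\Qm_i$'s output by Algorithm~\ref{algo:2} are combined into a global $\Qm$ through the closed-form formula of Proposition~\ref{prop:piecewise_to_global}.

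For complexity, every linear-algebra step costs $\mathrm{poly}(m,n)$; the idempotent decomposition is itself a polynomial-time \Falgo{} by Proposition~\ref{propo:decompId}; the matching phase requires at most $\ell^2 \leq n^2$ calls to Algorithm~\ref{algo:2}, each a polynomial \Falgo{}, plus one polynomial bipartite-matching computation. The main obstacle is precisely this search step: a naive enumeration over $\mathfrak{S}_\ell$ would be exponential, and it is only the locality of the piece-stabilisers that decouples the problem into $\ell^2$ \emph{independent} pairwise tests, turning an \emph{a priori} exponential search into a tractable bipartite matching problem.
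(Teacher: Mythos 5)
Your proof is correct and follows the paper's route exactly: compute the right stabiliser algebras, decompose their units into minimal orthogonal idempotents via Proposition~\ref{propo:decompId}, reject if $\ell \neq \ell'$, reduce the decision to piecewise right-equivalence of the codes $\Ccm\Am_i$ and $\Dcm\Um_j$ (which have local stabilisers) through Propositions~\ref{prop:picewise} and~\ref{prop:piecewise_to_global}, and reassemble. The one place where the paper is terse---asserting only that the search for $\sigma$ ``can be done in polynomial time using a search version of Algorithm~\ref{algo:2}''---you make fully precise by reducing it to a bipartite perfect-matching problem over the $\ell^2 \leq n^2$ pairwise tests, which is a correct and welcome elaboration of a step the paper leaves implicit.
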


\begin{proof}
	The computation of right stabilizer algebras is done in polynomial time. According to Proposition \ref{propo:decompId} the computation of minimal decompositions of their units by orthogonal idempotents can be done with a polynomial time \Falgo.

	Next step (clearly if $\ell \neq \ell'$ codes are not right equivalent) of the computation can be done in polynomial time using a search version of Algorithm~\ref{algo:2}. Furthermore, if there was a permutation $\sigma \in \mathfrak S_\ell$
	and nonsingular matrices $\Qm_i$ such that
	\[
	\forall i,\quad \Ccm \Am_i \Qm_i = \Dcm \Um_{\sigma(i)},
	\]
	then according to Proposition~\ref{prop:piecewise_to_global}
	the codes are equivalent (and reciprocally) and the equivalence is asserted by the matrix
	$\Qm \eqdef \sum_i \Am_i \Qm_i \transpose{\Vm}_i$ which concludes the proof. 
\end{proof}

 \section{The Code Equivalence Problem for $\Fqm$--linear
  codes : an Easy Problem}
	\label{sec:equiFqm}

    When described as matrix codes, $\Fqm$--linear codes are nothing
    but codes $\Ccm \subseteq \Mspace{m}{n}{\Fqm}$ whose left
    stabilizer algebra $\stabl{\Ccm}\subseteq \sqMspace{m}{\Fq}$
    contains a subalgebra isomorphic to $\Fqm$ as in the following
    definition.

    \begin{definition}[Representation of $\Fqm$]
      A sub-algebra $\cF \subseteq \sqMspace{m}{\Fq}$ is {\em a representation}
      of $\Fqm$ if it satisfies one of the following equivalent conditions:
      \begin{itemize}
      \item It is isomorphic to $\Fqm$ as an $\Fq$--algebra;
      \item It spanned (as an algebra) by a matrix
        $\Am \in \sqMspace{m}{\Fq}$ whose characteristic polynomial is
        $\Fq$--irreducible.
      \end{itemize}  
    \end{definition}

    In particular, taking any monic irreducible polynomial
    $P \in \Fq[X]$ and $\Cm_P$ be its companion matrix, then
    $\Fq[\Cm_P]$ is a representation of $\Fqm$.

    \subsection{Statement of equivalence problems for
    $\Fqm$--linear codes}
  Consider an instance of \cref{pb:bit_harder} (\hVCEP{}) \ie, two
  matrix codes $\Ccm, \Dcm \subseteq \Mspace{m}{n}{\Fq}$ of which are
  known to be matrix representations of $\Fqm$--linear codes but
  possibly represented in two distinct and unknown bases. To reduce to an
  instance of \MCREP{}, our objective is to find a matrix
  $\Pm \in \GL{m}{\Fq}$ such that $\Pm \Ccm$ is right equivalent to
  $\Dcm$.  Such a $\Pm$ will be deduced from the left stabilizer
  algebras of the codes. Recall that these left stabilizer algebras
  $\stabl{\Ccm}, \stabl{\Dcm}$ can be computed by solving a linear
  system in a very similar manner as the computation of right
  stabilizer algebras detailed in \cref{sec:MCREP} (see
  \cref{rem:computing_stabilizers}).  In addition, similarly to
  \cref{prop:conj_stab}, we have that
\[
  \stabl{\Ccm} = \Pm^{-1} \stabl{\Dcm} \Pm.
\]
Thus our objective is, from the knowledge of the stabilizer algebras
$\stabl{\Ccm}$ and $\stabl{\Dcm}$, to solve an algebra conjugacy
problem. It will be possible since these algebras contain a
representation of $\Fqm$, and hence are non-trivial
We treat the problem by considering separately two cases:
\begin{enumerate}[(i)]
\item The left stabilizer algebras of $\Ccm, \Dcm$ are representations
  of $\Fqm$;
\item The left stabilizer algebras of $\Ccm, \Dcm$ contain representations
  of $\Fqm$ as proper sub-algebras
\end{enumerate}

\subsection{When the left stabilizer algebras are representations
of $\Fqm$}\label{ss:cas_Fqm_cool}

In the sequel, we need some very classical results in linear algebra
and representations of $\Fqm$ which are summarized in the following
technical statement.

\begin{lemma}\label{lem:cyclic_morphism}
  Let $\Am \in \sqMspace{m}{\Fq}$ be a matrix whose characteristic
  polynomial $\chi_{\Am}$ is irreducible (for instance the companion
  matrix of any monic irreducible polynomial of degree $m$). Then
  $\Fq[\Am] \simeq \Fqm$ and
  \begin{enumerate}[(i)]
  \item\label{item:commutant} a matrix $\Cm\in \sqMspace{m}{\Fq}$
    commutes with any element of $\Fq[\Am]$ if and only if it is an
    element of $\Fq[\Am]$;
  \item\label{item:transitive} for any
    $\xv, \yv \in \Fq^m \setminus \{0\}$ there exists
    $P(\Am)\in \Fq[\Am]$ such that $P(\Am)\xv^\top = \yv^\top$;
  \item\label{item:Frobenius} There exists
    $\Thetam \in \GL{m}{\Fq}$ such that
    \[
      \Thetam^m = \Id_m \quad {\rm and} \quad \forall P(\Am) \in
      \Fq[\Am],\quad \Thetam P(\Am) \Thetam^{-1} = P(\Am)^q.
    \]
  \item\label{item:2nd_Frobenius}
    Any matrix $\Gammam \in \GL{m}{\Fq}$
    satisfying $\Gammam \Fq[\Am] \Gammam^{-1} = \Fq[\Am]$
    is of the form $\Gammam = \Thetam^j P(\Am)$
    for some $P(\Am) \in \Fq[\Am]^\times$ and some
    $j \in \{0, \dots, m-1\}$.
  \item\label{item:conjugate} any matrix $\Bm \in \sqMspace{m}{\Fq}$
    with the same characteristic polynomial as $\Am$, then $\Am$ and
    $\Bm$ are conjugated, {\em i.e.}  there exists
    $\Pm\in \GL{m}{\Fq}$ such that $\Am = \Pm^{-1}\Bm\Pm$ and $\Pm$
    can be computed in polynomial time.
  \end{enumerate}
\end{lemma}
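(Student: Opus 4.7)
The plan is to treat the five items in order, exploiting the fact that the hypothesis on $\chi_\Am$ forces $\Am$ to be a cyclic matrix and turns $\Fq^m$ into a one--dimensional vector space over the field $\Fq[\Am]\simeq\Fqm$.

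For item~\eqref{item:commutant}, I would first observe that $\chi_\Am$ irreducible implies it equals the minimal polynomial of $\Am$, so $\Fq[\Am]$ has dimension $m$ over $\Fq$. On the other hand a classical result on the commutant of a cyclic matrix (or equivalently, a dimension count using the rational canonical form) shows that the commutant of $\Am$ in $\sqMspace{m}{\Fq}$ also has dimension exactly $m$. Since $\Fq[\Am]$ is visibly contained in the commutant, the two coincide. For item~\eqref{item:transitive}, once $\Fq^m$ is viewed as an $\Fqm$--vector space via the isomorphism $\Fq[\Am]\simeq\Fqm$, an $\Fq$--dimension count forces it to have $\Fqm$--dimension $1$. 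Hence $\Fq[\Am]$ acts transitively on $\Fq^m\setminus\{0\}$, and an explicit $P$ is obtained by solving a linear system.

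For item~\eqref{item:Frobenius}, the idea is to realise the Frobenius automorphism $x\mapsto x^q$ of $\Fqm$ concretely as a matrix. Identifying $\Fq^m$ with $\Fqm$ through a choice of $\Fq$--basis aligned with the $\Fqm$--module structure above, the Frobenius is an $\Fq$--linear endomorphism of $\Fqm$ and so corresponds to a matrix $\Thetam\in\GL{m}{\Fq}$ satisfying $\Thetam^m=\Id_m$ since the Frobenius has order $m$. A direct computation on the module structure then gives $\Thetam P(\Am)\Thetam^{-1}=P(\Am)^q$. (Alternatively, Skolem--Noether gives some conjugator implementing the Frobenius, and one then adjusts by an element of $\Fq[\Am]^\times$, using surjectivity of the norm map over finite fields, to ensure $\Thetam^m=\Id_m$; but the explicit construction via the module structure is cleaner.)

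For item~\eqref{item:2nd_Frobenius}, conjugation by $\Gammam$ is an $\Fq$--algebra automorphism of $\Fq[\Am]\simeq\Fqm$. The automorphism group of $\Fqm/\Fq$ is cyclic generated by Frobenius, so this conjugation coincides with the $j$--th power of Frobenius for some $j\in\{0,\dots,m-1\}$. Consequently, $\Gammam\Thetam^{-j}$ commutes with every element of $\Fq[\Am]$, and item~\eqref{item:commutant} forces $\Gammam\Thetam^{-j}\in\Fq[\Am]$, which is invertible since $\Gammam$ and $\Thetam$ are. Finally, for item~\eqref{item:conjugate}, the irreducibility of $\chi_\Bm=\chi_\Am$ forces $\Bm$ to also be cyclic, so both $\Am$ and $\Bm$ are similar to the companion matrix of $\chi_\Am$. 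The transition matrix $\Pm$ is obtained by computing a cyclic vector $\vv$ for $\Bm$ (any non--zero vector works, thanks to the analogue of item~\eqref{item:transitive} for $\Bm$), forming the basis $(\vv,\Bm\vv,\dots,\Bm^{m-1}\vv)$, and doing the same for $\Am$; each of these steps is a linear algebra operation of polynomial cost. The only mildly subtle point in the whole lemma is the construction of $\Thetam$ with the sharp condition $\Thetam^m=\Id_m$, which is where I would spend the most care.
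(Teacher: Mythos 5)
Your proposal is correct and follows essentially the same route as the paper: item (i) by identifying $\Fq[\Am]$ with the commutant (the paper appeals to distinct eigenvalues over $\Fqbar$ where you use the dimension count for a cyclic matrix, but these are interchangeable standard arguments), item (ii) via the one--dimensional $\Fqm$--module structure, item (iii) by representing the Frobenius of $\Fqm$ as a matrix in the same cyclic basis, item (iv) by noting that conjugation induces an element of $\mathrm{Gal}(\Fqm/\Fq)$ and invoking (i) and (iii), and item (v) by passing both $\Am$ and $\Bm$ to the common companion-matrix form via a cyclic vector. The Skolem--Noether-plus-norm alternative you sketch for (iii) is also sound, and you rightly flag $\Thetam^m=\Id_m$ as the point deserving care.
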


\begin{proof}
  Clearly $\Fq[\Am]$ is commutative. Conversely, if $\Cm$
    commutes with any element of $\Fq[\Am]$ then it commutes with
    $\Am$.  Since the characteristic polynomial $\chi_{\Am}$ of $\Am$
    is irreducible and $\Fq$ is perfect, then $\Am$ has distinct
    eigenvalues and it is a well--known fact in linear algebra that
    this latter features entails that $\Cm$ should be a polynomial in $\Am$.
    This proves (\ref{item:commutant}).

    \medskip

    Consider the space $F \eqdef \{P(\Am) \xv^\top ~|~ P \in
    \Fq[X]\} \subseteq \Fq^m$. This space is isomorphic to
    $\nicefrac{\Fq[X]}{(\pi_{\Am, \xv})}$ where $\pi_{\Am, \xv}$ is
    the monic polynomial $P$ of lowest degree satisfying
    $P(\Am)\xv^\top = 0$. This polynomial divides the minimal polynomial of
    $\Am$ which itself is known to divide $\chi_{\Am}$. Since, by hypothesis
    $\chi_{\Am}$ is irreducible, then $\pi_{\Am, \xv} = \chi_{\Am}$ and
    $\dim_{\Fq} F = \dim_{\Fq} \nicefrac{\Fq[X]}{(\chi_{\Am})} = m$.
    Therefore, $F = \Fq^m$ and hence contains $\yv$, which
    proves (\ref{item:transitive}).

    \medskip
    
    Using the fact that $\Fq^m$ is isomorphic as a vector space
    with $\Fqm$, the matrices in $\Fq[A]$ represent the multiplications
    by elements of $\Fqm$ in some given basis. The matrix $\Thetam$
    is nothing but a matrix representation of the
    Frobenius map $x \mapsto x^q$
    in this basis. This proves (\ref{item:Frobenius}).

    \medskip
    
    Let $\Gammam$ be a matrix such that $\Fq[\Am]$ is globally
    invariant by conjugation under $\Gammam$.
    This conjugation map induces a non-trivial automorphism of the field
    $\Fq[\Am] = \Fqm$, which is nothing but an iterate of the Frobenius.
    Therefore,
    \[
      \forall P(\Am) \in \Fq[\Am], \quad
      \Gammam P(\Am) \Gammam^{-1} = P(\Am)^{q^j}
    \]
    for some $j \in \{0, \dots, m-1\}$.
    Now, ${\Gammam}^{-1} \Thetam^j$ commutes
    with any element of $\Fq[\Am]$ and then (\ref{item:2nd_Frobenius})
    can be deduced from (\ref{item:commutant}).
    
    \medskip
    
    Let $\xv \in \Fq^m \setminus \{0\}$, reasoning
    as in the proof of (\ref{item:transitive}) we see that
    $(\xv^\top, \Am \xv^\top, \dots, \Am^{m-1}\xv^\top)$ is a basis of
    $\Fq^m$. Let $\Pm_0$ be the transition matrix from the canonical
    basis to this basis. This matrix can be computed in polynomial time
    and Then, $\Pm_0^{-1} \Am \Pm_0$ is nothing but
    the companion matrix of $\chi_\Am$. Similarly, one can compute
    $\Pm_1$ such that $\Pm_1^{-1} \Bm \Pm_1$ equals this companion
    matrix of $\chi_{\Am}$. The matrix $\Pm \eqdef \Pm_1\Pm_0^{-1}$
    yields ($\ref{item:conjugate}$).
\end{proof}

This technical lemma has the following consequence on
the matrix representations of $\Fqm$.

\begin{corollary}\label{cor:conjug_algebras}
  Let $\cS_1, \cS_2 \subseteq \sqMspace{m}{\Fq}$ be two matrix
  representations of $\Fqm$. Then there exists a matrix
  $\Pm \in \GL{m}{\Fq}$ such that
  $\mathcal{S}_1 = \Pm^{-1} \mathcal{S}_2 \Pm$.
  In addition, $\Pm$ can be computed using a polynomial time
  \Falgo{}.
\end{corollary}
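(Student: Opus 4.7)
The plan is to reduce the problem to Lemma~\ref{lem:cyclic_morphism}~(\ref{item:conjugate}) by exhibiting in each $\cS_i$ a generator whose characteristic polynomial is a prescribed irreducible polynomial of degree $m$. Indeed, once we have $\Am_1 \in \cS_1$ and $\Am_2 \in \cS_2$ sharing the same irreducible characteristic polynomial and each generating its ambient algebra as an $\Fq$-algebra, any $\Pm \in \GL{m}{\Fq}$ conjugating $\Am_2$ to $\Am_1$ will automatically satisfy
\[
\Pm^{-1} \cS_2 \Pm = \Pm^{-1} \Fq[\Am_2] \Pm = \Fq[\Pm^{-1} \Am_2 \Pm] = \Fq[\Am_1] = \cS_1,
\]
which is the desired conclusion.

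First, from an $\Fq$-basis of $\cS_1$ I would compute a generator $\Am_1$ of $\cS_1$, that is, an element whose characteristic polynomial $\chi_1 \in \Fq[X]$ is irreducible of degree $m$. Such an element is found quickly by random sampling in $\cS_1$, because the non-generators of $\Fqm$ over $\Fq$ lie in proper subfields and form a small fraction of $\cS_1 \simeq \Fqm$; irreducibility of $\chi_1$ is tested via a call to the factoring oracle. The same procedure applied to $\cS_2$ produces a generator $\Am_2^{(0)}$ with irreducible characteristic polynomial $\chi_2$.

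If $\chi_1 = \chi_2$ we simply set $\Am_2 \eqdef \Am_2^{(0)}$ and go to the last step. Otherwise, we must replace $\Am_2^{(0)}$ by an element $\Am_2 \in \cS_2$ whose characteristic polynomial is $\chi_1$. This is the heart of the argument: identify $\cS_2 = \Fq[\Am_2^{(0)}]$ with the field $K \eqdef \Fq[X]/(\chi_2) \simeq \Fqm$, and factor $\chi_1(Y) \in K[Y]$ using the factoring oracle. Since $K \simeq \Fqm$ is the splitting field of every irreducible polynomial of degree $m$ over $\Fq$, the polynomial $\chi_1$ splits into $m$ linear factors over $K$; pick any root $\bar Q \in K$, lift it to $Q \in \Fq[X]$ of degree at most $m-1$, and set $\Am_2 \eqdef Q(\Am_2^{(0)}) \in \cS_2$. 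The relation $\chi_1(\Am_2) = 0$ forces the minimal polynomial of $\Am_2$ to divide $\chi_1$; since $\chi_1$ is irreducible of degree $m \geq 2$ (the case $m=1$ being trivial) and $\Am_2$ cannot be a scalar matrix (because $\chi_1$ has no root in $\Fq$), this minimal polynomial equals $\chi_1$, and so does the characteristic polynomial of $\Am_2$. In particular, $\Fq[\Am_2]$ has $\Fq$-dimension $m = \dim_{\Fq} \cS_2$, so $\Am_2$ generates $\cS_2$.

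Finally, apply Lemma~\ref{lem:cyclic_morphism}~(\ref{item:conjugate}) to the pair $(\Am_1, \Am_2)$: it provides in polynomial time a matrix $\Pm \in \GL{m}{\Fq}$ with $\Pm^{-1} \Am_2 \Pm = \Am_1$, and the conclusion follows by the displayed identity of the first paragraph. The main obstacle is the factorization of the degree-$m$ polynomial $\chi_1$ over the extension field $K \simeq \Fqm$ carried out in the third paragraph: this is precisely what prevents the procedure from being a genuine polynomial-time algorithm for arbitrary $q$ and forces us to work within the framework of an \Falgo{}.
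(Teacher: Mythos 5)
Your proof is correct and follows essentially the same route as the paper's: find generators $\Am_1, \Am_2$ of $\cS_1, \cS_2$ with irreducible characteristic polynomials, use the factoring oracle to produce an element of one algebra whose characteristic polynomial matches the other, and invoke Lemma~\ref{lem:cyclic_morphism}~(\ref{item:conjugate}). The only cosmetic difference is that you factor $\chi_1$ over $\cS_2 \simeq \Fq[X]/(\chi_2)$ and modify $\Am_2$, while the paper factors $\chi_{\Am_2}$ over $\cS_1 = \Fq[\Am_1]$ to get $P(\Am_1)$; the two choices are symmetric.
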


\begin{proof}
  First, compute matrices $\Am_1, \Am_2 \in \sqMspace{m}{\Fq}$ such
  that $\cS_1 = \Fq[\Am_1]$ and $\cS_2 = \Fq[\Am_2]$.  Such matrices
  can be computed in polynomial time.
Denoting by $\chi_{\Am_1}, \chi_{\Am_2}$ their respective
  characteristic polynomials we have
  \[
    \cS_1= \Fq[\Am_1] \simeq \nicefrac{\Fq[X]}{(\chi_{\Am_1})}
    \quad {\rm and} \quad
    \cS_2 = \Fq[\Am_2] \simeq \nicefrac{\Fq[X]}{(\chi_{\Am_2})}.
  \]
  Using an \Falgo{}, one can compute a root of $\chi_{\Am_2}$ in
  $\Fq[\Am_1] \simeq \nicefrac{\Fq[X]}{\chi_{\Am_1}}$, and denote it by
  $P(\Am_1)$. Then $P(\Am_1)$ has the same characteristic polynomial
  as $\Am_2$ and, from
  \cref{lem:cyclic_morphism}($\ref{item:conjugate}$), there is a
  matrix $\Pm$ which can be computed in polynomial time such that,
  $\Am_2 = \Pm^{-1} P(\Am_1)\Pm$. Moreover, for dimensional reasons,
  $\Fq[\Am_1] = \Fq[P(\Am_1)]$ and hence
  \[
    \cS_2 = \Fq[\Am_2] = \Pm^{-1} \Fq[\Am_1] \Pm = \Pm^{-1} \cS_1 \Pm.
  \]
\end{proof}

So, the first part of our algorithm will consist in computing
the left stabilizer algebras of ${\Ccm}$ and ${\Dcm}$ (which are
assumed to be representations of $\Fqm$). Next, according
to Corollary~\ref{cor:conjug_algebras}, the stabilizer
algebras are conjugated and one can compute (in polynomial time if $q = m^{O(1)}$ otherwise with a Las Vegas method) $\Pm \in \GL{m}{\Fq}$ such that:
\[
  \stabl{\Dcm} = \Pm^{-1} \stabl{\Ccm} \Pm.
\]
Replacing $\Dcm$ by the left equivalent code $\Pm \Dcm$, then the
codes turn out to have the same left stabilizer algebra.
Once we are reduced to the case where the two codes have the same
left stabilizer algebra, the following statement asserts
that we are almost reduced to solving and instance of \MCREP{}.

\begin{proposition}\label{prop:Frob_left_equivalence}
  Let $\Ccm, \Dcm \subseteq \Mspace{m}{n}{\Fq}$ be two left equivalent
  matrix codes codes whose left stabilizer algebras are equal and are
  representations of $\Fqm$, then there exists
  $j \in \{0, \dots, m-1\}$ such that
  \[
    \Ccm = \Thetam^j \Dcm
  \]
  where $\Thetam$ is a matrix defined in
  \ref{lem:cyclic_morphism}~(\ref{item:Frobenius}).
\end{proposition}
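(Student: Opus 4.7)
The plan is to exploit the left--equivalence assumption together with the fact that the common left stabilizer algebra is a representation of $\Fqm$, and then to invoke the Frobenius description of the normalizer given in \cref{lem:cyclic_morphism}(\ref{item:2nd_Frobenius}).

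First, by left equivalence there exists $\Sm \in \GL{m}{\Fq}$ with $\Ccm = \Sm \Dcm$. Exactly as in \cref{prop:conj_stab} (but for the left stabilizer; the argument is symmetric and will be written out once for completeness), we have
\[
\stabl{\Ccm} = \Sm \cdot \stabl{\Dcm} \cdot \Sm^{-1}.
\]
Since by assumption $\stabl{\Ccm} = \stabl{\Dcm}$ and this algebra is a representation of $\Fqm$, there is a matrix $\Am \in \sqMspace{m}{\Fq}$ with irreducible characteristic polynomial such that $\stabl{\Ccm} = \stabl{\Dcm} = \Fq[\Am]$. The previous identity then asserts that $\Sm$ lies in the normalizer of $\Fq[\Am]$ in $\GL{m}{\Fq}$.

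At this point I would apply \cref{lem:cyclic_morphism}(\ref{item:2nd_Frobenius}), which describes exactly this normalizer: there exist $j \in \{0, \dots, m-1\}$ and $P(\Am) \in \Fq[\Am]^{\times}$ such that $\Sm = \Thetam^{j} P(\Am)$. Because $\Fq[\Am] \simeq \Fqm$ is a field, the invertibility of $P(\Am)$ is automatic from $P(\Am) \neq 0$, which itself follows from the invertibility of $\Sm$ and $\Thetam$.

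It remains to use that $P(\Am) \in \stabl{\Dcm}$: by definition of the left stabilizer, $P(\Am) \Dcm \subseteq \Dcm$, and the reverse inclusion follows from the invertibility of $P(\Am)$, so $P(\Am) \Dcm = \Dcm$. Substituting into $\Ccm = \Sm \Dcm$ yields
\[
\Ccm = \Thetam^{j} P(\Am) \Dcm = \Thetam^{j} \Dcm,
\]
which is the claim. The only mildly technical ingredient is \cref{lem:cyclic_morphism}(\ref{item:2nd_Frobenius}), which has already been established; all other steps are immediate once the left analogue of \cref{prop:conj_stab} is in hand.
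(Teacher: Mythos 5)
Your proposal is correct and follows essentially the same route as the paper: establish the left analogue of \cref{prop:conj_stab} to see that $\Sm$ normalizes the common stabilizer algebra $\Fq[\Am]$, invoke \cref{lem:cyclic_morphism}(\ref{item:2nd_Frobenius}) to write $\Sm = \Thetam^j P(\Am)$, and then absorb $P(\Am)$ into $\Dcm$ using the stabilizer property. Your proof is slightly more careful at the final step (explicitly showing $P(\Am)\Dcm = \Dcm$ rather than just $\subseteq$), but the argument is otherwise identical.
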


\begin{proof}
  The codes are suppose to be left equivalent, then
  \( \Ccm = \Sm \Dcm \) for some $\Sm \in \GL{n}{\Fq}$. In addition,
  if they have the same left stabilizer algebra $\cS$,
  which is supposed to be of the form $\Fq[\Am] \simeq \Fqm$
  for some $\Am \in \sqMspace{m}{\Fq}$ with an irreducible characteristic
  polynomial. Then, from a ``left version'' of
  Proposition \ref{prop:conj_stab}, $\cS$ is stable by conjugation by $\Sm$. Thus,
  from \cref{lem:cyclic_morphism}($\ref{item:2nd_Frobenius}$),
  we have
  \[
    \Sm = \Thetam^j P(\Am)
  \]
  for some $j \in \{0, \dots, m-1\}$ and $P(\Am) \in \Fq[\Am]^\times$.
  Therefore, since $P(\Am) \in \cS = \stabl{\Dcm}$, then $\Sm \Dcm=
  \Thetam^j \Dcm$, which concludes the proof.
\end{proof}

\begin{corollary}\label{cor:frob_left_equivalence}
  Let $\Ccm, \Dcm \subseteq \Mspace{m}{n}{\Fq}$ which are equivalent,
  {\em i.e.} there exists $\Sm \in \GL{m}{\Fq}$ and
  $\Tm \in \GL{n}{\Fq}$ such that $\Ccm = \Sm \Dcm \Tm$. Suppose that
  the two codes have the same left stabilizer algebra which is a
  representation of $\Fqm$. Then, denoting by $\Thetam$ the matrix of
  \cref{lem:cyclic_morphism}~($\ref{item:Frobenius}$), $\Ccm$ is right
  equivalent to $\Thetam^j \Dcm$ for some $j \in \{0, \dots, m-1\}$.
\end{corollary}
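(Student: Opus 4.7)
The plan is to reduce this corollary directly to \cref{prop:Frob_left_equivalence} by absorbing the right factor $\Tm$ into the code $\Ccm$ and working with the auxiliary code $\Ccm' \eqdef \Ccm \Tm^{-1} = \Sm \Dcm$. By construction, $\Ccm'$ is purely left equivalent to $\Dcm$, so I only need to check that $\Ccm'$ inherits the same left stabilizer algebra as $\Ccm$ (hence as $\Dcm$), and then \cref{prop:Frob_left_equivalence} applies verbatim.

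The first step is the observation that right multiplication by an invertible matrix preserves the left stabilizer algebra. Indeed, for any $\Mm \in \sqMspace{m}{\Fq}$ we have $\Mm \Ccm' \subseteq \Ccm'$ iff $\Mm \Ccm \Tm^{-1} \subseteq \Ccm \Tm^{-1}$ iff $\Mm \Ccm \subseteq \Ccm$, which is a one-line computation. Hence $\stabl{\Ccm'} = \stabl{\Ccm} = \stabl{\Dcm}$, and this common algebra is a representation of $\Fqm$ by assumption.

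The second step is to apply \cref{prop:Frob_left_equivalence} to the pair $(\Ccm', \Dcm)$: the codes are left equivalent via $\Sm$, and they share the same left stabilizer algebra which is a representation of $\Fqm$. The conclusion yields an integer $j \in \{0, \dots, m-1\}$ such that $\Ccm' = \Thetam^j \Dcm$. Substituting back, $\Ccm \Tm^{-1} = \Thetam^j \Dcm$, i.e. $\Ccm = (\Thetam^j \Dcm) \Tm$, which is precisely the statement that $\Ccm$ is right equivalent to $\Thetam^j \Dcm$ via $\Tm \in \GL{n}{\Fq}$.

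There is essentially no obstacle here beyond the invariance of the left stabilizer under right multiplication; the corollary is really just a book-keeping reformulation of \cref{prop:Frob_left_equivalence} that separates the ``left part'' of an arbitrary equivalence from the ``right part''. The interest of the statement is algorithmic: once the codes are normalised to have a common left stabilizer algebra representing $\Fqm$ (as done in \cref{ss:cas_Fqm_cool}), the task of deciding equivalence of the $\Fqm$--linear instance reduces to the $m$ instances of \MCREP{} obtained by testing right equivalence of $\Ccm$ against $\Thetam^j \Dcm$ for $j = 0, \dots, m-1$.
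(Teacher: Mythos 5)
Your proof is correct and matches the paper's approach: the paper simply applies \cref{prop:Frob_left_equivalence} to the pair $(\Ccm, \Dcm\Tm)$, which is the same manoeuvre you perform on $(\Ccm\Tm^{-1}, \Dcm)$, and both rest on the (correct, and correctly argued) invariance of the left stabilizer algebra under right multiplication by an invertible matrix. You have merely spelled out the one-line verification that the paper leaves implicit.
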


\begin{proof}
  Apply \cref{prop:Frob_left_equivalence} to the pair $(\Ccm, \Dcm \Tm)$.
\end{proof}

These results yield to Algorithm \ref{algo:3} which decides \hVCEP{} in polynomial efficiently when left stabilizer algebras are equal to $\Fqm$.

\begin{algorithm}
	\DontPrintSemicolon
	\SetKwInOut{Input}{Input}\SetKwInOut{Output}{Output}
	
	\Input{Two codes $\Ccm, \Dcm \subseteq \Mat{m}{n}{\Fq}$} 
	\Output{Codes are equivalent or not}\;
	Compute  the left stabilizer algebras of $\Ccm, \Dcm$ and $\Pm$
	such that they are conjugated under $\Pm$;\;
	$\Dcm \leftarrow \Pm \Dcm$ \;

Compute a matrix $\Thetam$ as in \cref{lem:cyclic_morphism}(\ref{item:Frobenius}) \;
	\For{$j \in \{1,\cdots,m\}$}{
	{Compute $\Thetam^j \Dcm$\;}
	\If{$\Ccm$ and $\Thetam^t \Dcm$ are right equivalent}{\Return Codes are equivalent}}
\Return Codes are not equivalent
	\caption{An algorithm to decide the equivalence problem for $\Fqm$-linear code with left stabilizer algebras equal to $\Fqm$.\label{algo:3}}
\end{algorithm}

\begin{theorem}
  Let $\Ccm, \Dcm$ be two spaces of matrices representing
  $\Fqm$-linear codes and whose left stabilizer algebra is a
  representation of $\Fqm$. Algorithm \ref{algo:3} is a polynomial time
  \Falgo{} which succeeds to decide
  if they are equivalent.
\end{theorem}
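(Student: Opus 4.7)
The plan is to verify correctness and complexity of \cref{algo:3} by invoking the structural results established just above.

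\textbf{Correctness.} First, the conjugation step is justified by \cref{cor:conjug_algebras}: since both $\stabl{\Ccm}$ and $\stabl{\Dcm}$ are representations of $\Fqm$, they are conjugate in $\GL{m}{\Fq}$, and a conjugating matrix $\Pm$ can effectively be produced. After replacing $\Dcm$ by $\Pm \Dcm$, the two codes share the same left stabilizer algebra (using that $\stabl{\Pm \Dcm} = \Pm \stabl{\Dcm} \Pm^{-1}$). At this point, \cref{cor:frob_left_equivalence} exactly states that if the original pair $(\Ccm, \Dcm)$ was equivalent, then there exists some $j \in \{0, \dots, m-1\}$ such that $\Ccm$ is right equivalent to $\Thetam^j \Dcm$, where $\Thetam$ is the Frobenius matrix of \cref{lem:cyclic_morphism}($\ref{item:Frobenius}$). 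Conversely, if the algorithm returns ``equivalent'' for some $j$, then $\Ccm = (\Thetam^j \Pm) \Dcm_{\text{orig}} \Qm$ for some $\Qm$ produced by the \MCREP{} subroutine, so the codes are genuinely equivalent. Thus the loop faithfully explores the only possibilities, and correctness follows.

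\textbf{Complexity.} Each step is to be shown polynomial (possibly with oracle calls to polynomial factorization):
\begin{itemize}
\item The left stabilizer algebras are computed by linear algebra, just as in \cref{rem:computing_stabilizers} transposed to the left action, so this is polynomial.
\item The conjugating matrix $\Pm$ is produced by a polynomial time \Falgo{} according to \cref{cor:conjug_algebras}.
\item The Frobenius matrix $\Thetam$ of \cref{lem:cyclic_morphism}($\ref{item:Frobenius}$) is obtained by writing $\stabl{\Ccm} = \Fq[\Am]$ for some $\Am$ with irreducible characteristic polynomial, then expressing the $q$-power map on $\Fq[\Am] \simeq \Fqm$ in the canonical basis; this is polynomial linear algebra.
\item The loop runs $m$ times, and each iteration calls the \MCREP{} solver of \cref{sec:MCREP}, which is itself a polynomial time \Falgo{}.
\end{itemize}
Composing these bounds, the overall algorithm is a polynomial time \Falgo{}.

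\textbf{Main obstacle.} The only non-routine point is ensuring that the reduction to \MCREP{} is both sound and complete after the left-stabilizer normalization: we need both directions of \cref{cor:frob_left_equivalence}, namely that (a) a true equivalence forces the residual left-action to be a power of the Frobenius (this is where \cref{lem:cyclic_morphism}($\ref{item:2nd_Frobenius}$) is crucial, since it pins down the normalizer of $\Fq[\Am]$ in $\GL{m}{\Fq}$ as $\langle \Thetam \rangle \cdot \Fq[\Am]^\times$), and (b) any right equivalence found between $\Ccm$ and $\Thetam^j \Dcm$ does yield a global equivalence of the original pair. Once these two observations are in place, the algorithm simply mechanizes the finite enumeration of the $m$ Frobenius twists.
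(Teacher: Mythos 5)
Your proof is correct and follows the same route as the paper: normalize the left stabilizer algebras by conjugation via \cref{cor:conjug_algebras}, then loop over the $m$ Frobenius twists $\Thetam^j$ and call the \MCREP{} solver, with soundness/completeness supplied by \cref{cor:frob_left_equivalence}. You actually make the correctness argument more explicit than the paper's very terse proof, which only spells out the complexity claims; your identification of \cref{lem:cyclic_morphism}(\ref{item:2nd_Frobenius}) as the step pinning the normalizer of $\Fq[\Am]$ to $\langle \Thetam\rangle\cdot\Fq[\Am]^\times$ is exactly the hidden ingredient behind \cref{cor:frob_left_equivalence}.
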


\begin{proof} The {\bf for} loop is made at most $m$ times, which is
  polynomial in the size of the input.  According to
  Proposition~\ref{cor:conjug_algebras}, there is a polynomial time
  \Falgo{} computing $\Pm$. Next, using the polynomial time \Falgo{}
  described in \cref{subsec:solveMCREP} one can decide
  the right equivalence of $\Ccm$ and $\Thetam^t \Dcm$.
\end{proof}

\subsection{When the left stabilizer algebras strictly contain
representations of $\Fqm$}\label{ss:cas_penible}
In the general case, it is possible that the left stabilizer algebras
contain representations of $\Fqm$ as proper subspaces.
To understand the way to proceed, we first have to classify
sub-algebras of $\sqMspace{m}{\Fq}$ containing a representation
of $\Fqm$,
which is the point of the next statement.

\begin{proposition}\label{prop:key_prop_for_general_fqm-lin}
  Let $\cS \subseteq \sqMspace{m}{\Fq}$ be an algebra such that
  \[
    \Fq[\Am] \simeq  \Fqm \varsubsetneq \cS \subseteq \sqMspace{m}{\Fq},
  \]
  then, $\cS$ is isomorphic to a matrix algebra of the form
  $\sqMspace{\frac m \ell}{\F_{q^{\ell}}}$ for some $\ell$
  dividing $m$. In addition, any $\Cm \in \sqMspace{m}{\Fq}$
  which commutes with any element of the centre of $\cS$, then $\Cm
  \in \cS$.
\end{proposition}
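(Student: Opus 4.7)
The plan is to extract the structure of $\cS$ from its action on $V \eqdef \Fq^m$ via Jacobson density, using the Wedderburn-type rigidity that finite fields impose. First, I view $V$ as a left $\cS$-module through the matrix action. Since $\Fq[\Am] \simeq \Fqm \subseteq \cS$, $V$ inherits the structure of an $\Fqm$-vector space, and because the characteristic polynomial of $\Am$ is irreducible of degree $m$, this module is $1$-dimensional over $\Fqm$. Any $\cS$-submodule of $V$ is in particular stable under $\Fqm$, hence is either $\{0\}$ or $V$ itself; thus $V$ is a simple $\cS$-module.

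Next, set $D \eqdef \End_{\cS}(V)$, the commutant of $\cS$ in $\sqMspace{m}{\Fq}$. By Schur's lemma $D$ is a division $\Fq$-algebra and, being finite, it is in fact a field by Wedderburn's little theorem. The inclusion $\Fqm \subseteq \cS$ forces $D$ to lie inside the commutant of $\Fqm$ in $\sqMspace{m}{\Fq}$, which by \cref{lem:cyclic_morphism}\,(\ref{item:commutant}) equals $\Fqm$ itself. Hence $D$ is a subfield of $\Fqm$, i.e.\ $D \simeq \F_{q^\ell}$ for some $\ell \mid m$. Applying Jacobson density in finite dimension to the simple $\cS$-module $V$ with $\End_{\cS}(V) = \F_{q^\ell}$ yields the equality $\cS = \End_{\F_{q^\ell}}(V) \simeq \sqMspace{m/\ell}{\F_{q^\ell}}$, which is the first assertion.

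For the second claim, I observe that the center $Z(\cS)$ of $\cS \simeq \sqMspace{m/\ell}{\F_{q^\ell}}$ is isomorphic to $\F_{q^\ell}$; under the embedding $\cS \hookrightarrow \sqMspace{m}{\Fq}$ it endows $V$ with the structure of an $\F_{q^\ell}$-vector space of dimension $m/\ell$. The commutant of $Z(\cS)$ in $\sqMspace{m}{\Fq}$ is therefore $\End_{\F_{q^\ell}}(V) \simeq \sqMspace{m/\ell}{\F_{q^\ell}}$, which has $\Fq$-dimension $(m/\ell)^2 \cdot \ell = m^2/\ell$, exactly that of $\cS$. Since $\cS$ is automatically contained in the commutant of its own center, equality follows by dimension count.

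The main obstacle is conceptual rather than computational: one has to select the right slice of Wedderburn--Artin theory and the appropriate form of Jacobson density. The key piece of leverage is that finite fields have no nontrivial central division algebras (Wedderburn's little theorem) combined with the self-commuting property of $\Fqm$ in $\sqMspace{m}{\Fq}$, which together pin $D$ down as a subfield of $\Fqm$ and thereby fix the matrix parameters.
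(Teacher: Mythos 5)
Your proof is correct, and it takes a genuinely different route from the one in the paper. The paper's argument is structural and proceeds in stages: first it shows the Jacobson radical of $\cS$ vanishes by producing, for any candidate radical element $\Nm$, a matrix $P(\Am)\Nm\in\cS$ with a nonzero fixed vector (using part~(\ref{item:transitive}) of \cref{lem:cyclic_morphism}); then it shows $\cS$ is simple by observing that the centre of $\cS$ sits inside $\Fq[\Am]$ (part~(\ref{item:commutant})), hence is a field; finally it pins down $\cS\simeq\sqMspace{m/\ell}{\F_{q^\ell}}$ and the commutant claim via the double centraliser theorem. You instead work with the natural module $V=\Fq^m$ from the start: its simplicity over $\cS$ falls out of the $\Fqm$--structure (one--dimensionality over $\Fqm$ is just a dimension count), Schur's lemma and Wedderburn's little theorem make $D=\End_{\cS}(V)$ a finite field, the inclusion $D\subseteq\Fq[\Am]$ comes again from part~(\ref{item:commutant}), and Jacobson density together with faithfulness of the inclusion $\cS\hookrightarrow\sqMspace{m}{\Fq}$ gives $\cS=\End_D(V)$ outright. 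This bypasses any separate semisimplicity or simplicity argument and delivers both conclusions from the single identification $\cS=\End_D(V)$: the isomorphism type by reading off $\dim_D V$, and the commutant statement by the dimension count you give (or, even more directly, by noting that $Z(\cS)=D$ inside $\sqMspace{m}{\Fq}$, so the commutant of $Z(\cS)$ is exactly $\End_D(V)=\cS$). The paper's route has the merit of matching the algorithmic toolkit (radical computation, Wedderburn--Artin decomposition) that it deploys elsewhere; yours is the cleaner purely theoretical derivation of the same fact.
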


\begin{proof}
  First, let us prove that $\cS$ is semi-simple, namely that its
  Jacobson radical is trivial. Indeed, let
  $\Nm \in \cS \setminus \{0\}$ be such that for any $\Mm \in \cS$,
  $\Nm \Mm$ is nilpotent. Then, there exists $\xv \in \Fqm$ such that
  $\Nm \xv^\top \neq 0$ and, from
  Lemma~\ref{lem:cyclic_morphism}~(\ref{item:transitive}), there
  exists $P(\Am) \in \Fq[\Am]$ such that
  $P(\Am)\Nm \xv^\top = \xv^\top$.  Thus, $P(\Am)\Nm \in \cS$ and has
  a non-zero fixed point which contradicts its nilpotence. Therefore,
  $\rad (\cS) = \{0\}$ and hence $\cS$ is semi-simple.

  Now let us prove that $\cS$ is simple. For that,
  using \cite[Cor.~1.7.8]{DK94}, we only need to prove that its centre
  is simple. But, since $\cS$ contains $\Fq[\Am]$, from
  \cref{lem:cyclic_morphism}~(\ref{item:commutant}), its centre
  is contained in $\Fq[\Am]\simeq \Fqm$ and hence is isomorphic
  to $\F_{q^\ell}$ for some $\ell$ dividing $m$. In particular, it
  is simple and so is $\cS$. Therefore, $\cS$ is isomorphic
  to $\sqMspace{r}{\F_{q^\ell}}$ for some positive $r$ and there
  remains to prove that $r = \frac m \ell$.

  From \cref{lem:cyclic_morphism}($\ref{item:commutant}$) any
  element $\Fq[\Am] \simeq \Fqm$ in $\cS$ commuting with any element
  of $\Fq[\Am]$ is in $\Fq[\Am]$. Using \cite[Thm.~4.4.6(3)]{DK94} we
  deduce that the $\F_{q^\ell}$--dimension of $\cS$ is
  $(\frac{m}{\ell})^2$.  This proves that $r = \frac m \ell$.
  The former reference permits also to prove that a
  matrix in $\sqMspace{m}{\Fq}$ commuting with any element of $\cS$
  is in $\cS$.
\end{proof}

Before describing our algorithm to decide equivalence when the left
stabilizer algebras strictly contain a representation of $\Fqm$, we
need the following statement, which is an analogue of
\ref{cor:frob_left_equivalence} in the current situation. Note that
the current situation is in some sense more favorable since we no
longer need to take care of a possible action of the Frobenius
morphism.

\begin{proposition}\label{prop:same_big_stab}
  Let $\Ccm, \Dcm \subseteq \Mspace{m}{n}{\Fq}$ be two left
  equalvalent matrix codes, \ie{} such that
  \[
    \exists \Pm \in \GL{m}{\Fq},\ \Pm \Ccm = \Dcm.
  \]
  Suppose also that
  \[
    \stabl{\Ccm} = \stabl{\Dcm} \simeq \sqMspace{\frac m \ell}{\F_{q^\ell}},
  \]
  for some positive $\ell$ dividing $m$.
  Then, $\Ccm = \Dcm$.
\end{proposition}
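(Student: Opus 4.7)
The plan is to deduce $\Pm \in \cS = \stabl{\Ccm}$, whereupon $\Pm\Ccm \subseteq \Ccm$; and since $\Pm$ is invertible and the two codes have equal $\Fq$-dimension, one obtains $\Ccm = \Pm\Ccm = \Dcm$ at once.

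First, by the left analogue of \cref{prop:conj_stab} (whose proof carries over verbatim after switching sides), the hypothesis $\Pm\Ccm = \Dcm$ yields $\stabl{\Dcm} = \Pm\,\stabl{\Ccm}\,\Pm^{-1}$. Combined with $\stabl{\Ccm}=\stabl{\Dcm}=\cS$, this gives $\Pm\,\cS\,\Pm^{-1} = \cS$, so $\Pm$ lies in the normalizer of $\cS$ in $\GL{m}{\Fq}$.

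Next, conjugation by $\Pm$ defines an $\Fq$-algebra automorphism of $\cS\simeq\sqMspace{m/\ell}{\F_{q^\ell}}$; this automorphism must preserve the centre $Z(\cS)\simeq\F_{q^\ell}$, where it necessarily acts as some Frobenius power $\phi^j$ with $j\in\{0,\dots,\ell-1\}$. By the last assertion of \cref{prop:key_prop_for_general_fqm-lin}, any matrix in $\sqMspace{m}{\Fq}$ commuting with $Z(\cS)$ already lies in $\cS$; hence it suffices to establish $j=0$, which immediately yields $\Pm\in\cS$ and concludes the argument.

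The heart of the proof is therefore to rule out $j\neq 0$. My plan is to exploit the $\cS$-module structure of $\sqMspace{m}{n}{\Fq}$ together with the exactness of the stabilizer equalities. Since $\cS$ is simple with a unique simple module $V$ of $\Fq$-dimension $m$, one has $\sqMspace{m}{n}{\Fq}\simeq V^{\oplus n}$ as left $\cS$-modules, and $\cS$-submodules correspond bijectively to $\F_{q^\ell}$-subspaces of the multiplicity space $\mathrm{Hom}_{\cS}(V,V^{\oplus n})\simeq \F_{q^\ell}^{\,n}$. A matrix in the normalizer of $\cS$ that induces $\phi^j$ on $Z(\cS)$ acts $\F_{q^\ell}$-semilinearly on $V^{\oplus n}$, sending the multiplicity subspace attached to $\Ccm$ to its $\phi^j$-image. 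Careful bookkeeping --- using that $\stabl{\Ccm}$ is \emph{exactly} $\cS$ (and not any strictly larger semilinear extension), together with the fact that $\Ccm$ and $\Dcm$ share this same stabilizer --- should then force the Frobenius twist to be trivial, i.e.\ $j=0$, whence $\Pm\in\cS$ by \cref{prop:key_prop_for_general_fqm-lin} and the proposition follows. I expect this last step to be the main technical obstacle: it is where the rigidity provided by the maximal-stabilizer assumption has to be used most delicately.
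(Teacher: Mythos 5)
Your first two paragraphs are correct and match the paper's own argument: $\Pm$ normalizes $\cS$, the conjugation restricted to $Z(\cS)\simeq\F_{q^\ell}$ is some Frobenius power $\phi^j$, and $j=0$ would make $\Pm$ centralize $Z(\cS)$, whence $\Pm\in\cS$ by the last sentence of \cref{prop:key_prop_for_general_fqm-lin} and $\Ccm=\Pm\Ccm=\Dcm$.

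The step you leave as the ``main technical obstacle,'' however, is not merely an obstacle: $j=0$ cannot be deduced from the stated hypotheses, and indeed the proposition as written fails for $j\neq 0$. Your module picture already shows why. \emph{Every} $\Fq$-algebra automorphism $\alpha$ of $\cS$, Galois twists included, is realized by conjugation by some normalizing $\Thetam\in\GL{m}{\Fq}$: twisting the $\cS$-action on $\Fq^m$ by $\alpha$ gives another simple $\cS$-module, necessarily isomorphic to the original one, and the intertwiner is the desired $\Thetam$. For such a $\Thetam$ with $j\neq 0$ one has $\stabl{\Thetam\Ccm}=\Thetam\,\stabl{\Ccm}\,\Thetam^{-1}=\cS$, yet $\Thetam$ acts $\phi^j$-semilinearly on the multiplicity space $\F_{q^\ell}^n$ and so sends the $\F_{q^\ell}$-subspace attached to $\Ccm$ to its Frobenius twist, which is generically a different subspace, so $\Thetam\Ccm\neq\Ccm$. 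A concrete instance: $q=2$, $\ell=2$, $m=4$, $n=2$, $V=\F_4^2$, $\cS=\sqMspace{2}{\F_4}\subset\sqMspace{4}{\F_2}$, $\omega$ a generator of $\F_4^\times$, $\Ccm$ the set of $4\times 2$ matrices with columns $v$ and $\omega v$ for $v\in V$, and $\Thetam$ the coordinatewise Frobenius on $V$; then $\stabl{\Ccm}=\stabl{\Thetam\Ccm}=\cS$ but $\Thetam\Ccm\cap\Ccm=\{0\}$. The paper's proof glides over the same point: it invokes Skolem--Noether to conclude that conjugation by $\Pm$ agrees on $\cS$ with conjugation by some $\Rm\in\cS$, but Skolem--Noether only governs $\F_{q^\ell}$-\emph{linear} automorphisms of the central simple $\F_{q^\ell}$-algebra $\cS$ --- exactly the case $j=0$ --- and that linearity is never established. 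What the hypotheses actually yield is the analogue of \cref{prop:Frob_left_equivalence}, namely $\Ccm=\Thetam^j\Dcm$ for some $j\in\{0,\dots,\ell-1\}$, so the algorithm of \cref{ss:cas_penible} should include a loop over Frobenius powers as in \cref{algo:3}.
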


\begin{proof}
  Let $\cS \eqdef \stabl{\Ccm} = \stabl{\Dcm}$
  Using a left version of \ref{prop:conj_stab}, we deduce that
  $\cS$ is stable by conjugation by $\Pm$. Since $\cS$ is a central simple
  algebra over $\F_{q^\ell}$, from
  Skolem--Noether Theorem \cite[Cor.~4.4.3]{DK94}, its automorphisms are inner
  and hence there exists $\Rm \in \cS$ such that
  \[
    \forall \Sm \in \cS,\ \Pm^{-1}\Sm \Pm = \Rm^{-1}\Sm \Rm.
  \]
  Therefore, the matrix $\Rm\Pm^{-1}$ is in the {\em centraliser} of
  $\cS$ in $\sqMspace{m}{\Fq}$, (\ie{} the subalgebra of
  $\sqMspace{m}{\Fq}$ of matrices commuting with any element of
  $\cS$).  The centraliser contains the centre of $\cS$, which is
  isomorphic to $\F_{q^\ell}$.  Using \cite[Thm.~4.4.6(3)]{DK94}, we
  deduce that the centraliser and the centre have the same dimension
  and hence are equal. Consequently $\Rm\Pm^{-1}$ is in $\cS = \stabl{\Ccm}$ and
  hence so is $\Pm$. Thus, $\Ccm = \Pm \Ccm = \Dcm$.
\end{proof}

With \cref{prop:key_prop_for_general_fqm-lin} in hand,
we can deduce an algorithm for solving this general case. We describe it as follows.

\begin{enumerate}
\item Compute the left stabilizer algebras $\stabl{\Ccm}, \stabl{\Dcm}$;
\item Compute their centres
  $Z(\stabl{\Ccm}), Z(\stabl{\Dcm})$ which are isometric to a same $\F_{q^\ell}$ for some $\ell > 0$, otherwise codes are not equivalent;
\item These algebras are respectively isomorphic to
  $\Fq[\Am_1], \Fq[\Am_2]$ for some
  $\Am_1, \Am_2 \in \sqMspace{m}{\Fq}$ whose minimal polynomial is
  irreducible of degree $\ell$. A very similar argument as
  \cref{cor:frob_left_equivalence} permits to assert that the algebras
  are conjugated and one can compute a matrix $\Pm \in \GL{m}{\Fq}$
  such that $\Fq[\Am_1] = \Pm \Fq[\Am_2] \Pm^{-1}$.
\item Replacing $\Dcm$ by $\Pm \Dcm$ so that the stabilizer algebras
  of the codes have the same centre $Z$. But, from
  \cref{prop:key_prop_for_general_fqm-lin}, the stabilizer algebras are
  then the same since they both equal to the commutator of $Z$ in
  $\sqMspace{m}{\Fq}$.
\item Now, since the two codes have the same left stabiliser algebra,
  from \cref{prop:same_big_stab}, they are equivalent if and only if
  they are right equivalent and we can decide this right equivalence
  using the algorithm of Section~\ref{sec:MCREP}.
\end{enumerate}

This enables to prove the following theorem.

\begin{theorem}
	\hVCEP{} is in $\mathcal{P}$ if $q = (mn)^{O(1)}$ and in $\mathcal{ZPP}$ in the general case. 
\end{theorem}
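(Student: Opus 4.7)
The plan is to assemble the full decision procedure for \hVCEP{} out of the two case analyses developed in \cref{ss:cas_Fqm_cool,ss:cas_penible}, using the \MCREP{} solver of \cref{sec:MCREP} as a black box, and then check that every subroutine runs in polynomial time as an \Falgo{}.

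Given an instance $(\Ccm, \Dcm)$ of \hVCEP{}, I would proceed as follows. First compute the left stabilizer algebras $\stabl{\Ccm}$ and $\stabl{\Dcm}$ by solving a linear system, as in \cref{rem:computing_stabilizers}. Both must contain a representation of $\Fqm$ since each input code is an $\Fqm$--linear code in some unknown basis. Next, compute their centres in order to detect which regime we are in. If the two stabilizer algebras are representations of $\Fqm$, apply \cref{algo:3}: conjugate the algebras into coincidence via \cref{cor:conjug_algebras}, replace $\Dcm$ by the corresponding conjugate, and, invoking \cref{cor:frob_left_equivalence}, call the \MCREP{} solver on each of the $m$ pairs $(\Ccm, \Thetam^{j} \Dcm)$ for $j \in \{0, \dots, m-1\}$; the two codes are equivalent if and only if one such call returns a positive answer.

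If instead the stabilizer algebras strictly contain a representation of $\Fqm$, then by \cref{prop:key_prop_for_general_fqm-lin} both must be isomorphic to $\sqMspace{m/\ell}{\F_{q^{\ell}}}$ for a common divisor $\ell$ of $m$; if the two values of $\ell$ disagree the codes are immediately inequivalent. Otherwise, compute centres, conjugate them into coincidence via \cref{cor:conjug_algebras}, and replace $\Dcm$ accordingly. By \cref{prop:key_prop_for_general_fqm-lin}, matching centres already force the full stabilizer algebras to match, so \cref{prop:same_big_stab} guarantees that left equivalence is automatic; one single call to the \MCREP{} solver decides the remaining right equivalence. Every primitive in the procedure is either linear algebra over $\Fq$, a Wedderburn--Artin decomposition via the algorithms of Friedl and R\'onyai \cite{FR85,R90}, or the \MCREP{} solver of \cref{sec:MCREP}, each of which is a polynomial time \Falgo{}.

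The only non-linear-algebraic primitive invoked anywhere in this chain is the factorisation of univariate polynomials over $\Fq$, needed for instance to realise the explicit isomorphisms between abstract and matrix representations of $\Fqm$. By Berlekamp's algorithm this is deterministic polynomial when $q = (mn)^{O(1)}$, and by Cantor--Zassenhaus it is Las Vegas polynomial in general, which yields the stated $\mathcal{P}$ versus $\mathcal{ZPP}$ dichotomy. The main obstacle, already dispatched in \cref{ss:cas_penible}, is the second case: without the Wedderburn--Artin classification together with Skolem--Noether one cannot conclude that aligning the centres aligns the full stabilizer algebras, and a Frobenius-twist loop analogous to that of case~1 would still be necessary; \cref{prop:key_prop_for_general_fqm-lin,prop:same_big_stab} remove this loop and make the reduction to \MCREP{} immediate.
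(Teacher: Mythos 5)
Your proposal reconstructs exactly the paper's strategy: compute left stabilizer algebras, split on whether they are a representation of $\Fqm$ or strictly contain one, handle the former with the Frobenius-twist loop of Algorithm~\ref{algo:3} (Corollaries \ref{cor:conjug_algebras} and \ref{cor:frob_left_equivalence} giving $m$ calls to the \MCREP{} solver), handle the latter by aligning centres and invoking Propositions \ref{prop:key_prop_for_general_fqm-lin} and \ref{prop:same_big_stab} to reduce to a single \MCREP{} call, and trace the $\mathcal{P}$ versus $\mathcal{ZPP}$ split to the Berlekamp/Cantor--Zassenhaus dichotomy inside the \Falgo{} oracle. This matches the paper's proof in structure and content; the only cosmetic imprecision is that ``left equivalence is automatic'' should be read as ``any left factor is absorbed into the common stabilizer, so equivalence reduces to right equivalence,'' which is how Proposition~\ref{prop:same_big_stab} is actually applied.
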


 \section{Reduction from the Code Equivalence Problem in
Hamming Metric}\label{sec:reduction}
The objective of the present section is to prove that the Matrix Code
Equivalence Problem \MCEP{} (\cref{pb:MCEP}) for rank metric codes is
at least as hard as the Monomial Equivalence Problem. Compared to the
reduction presented in \cref{ss:reduc_Hugues}, the following one
treats search versions of the problem and is somehow more explicit.

From
the generator matrices point of view, to decide if two codes are monomially equivalent may be
formulated as:

\begin{itemize}
\item \textup{Instance :} Two matrices $\Am,\Bm\in\Mspace{k}{n}{\Fq}$.
\item \textup{Decision :} it exists $\Sm\in\GL{k}{\Fq}$,
  $\Pm\in\sqMspace{n}{\Fq}$ be a permutation matrix and
  a diagonal matrix $\Dm \in \GL{n}{\Fq}$ such that:
  \[
  \Am = \Sm\Bm\Dm\Pm. 
  \]
\end{itemize}

Without loss of generality we can assume that the columns of $\Am$ (resp.
$\Bm$) are pairwise linearly independent.
If not, we can remove the redundant columns
without changing the answer of the problem.

Let
$\av_1^\top,\cdots,\av_n^\top, \bv_1^\top,\cdots,\bv_n^\top \in \Fq^k$
denote the columns of $\Am$ and $\Bm$ respectively. Similarly to the
previous section, we keep on considering that vectors are row
matrices, which explains why we apply the transposition operator to
these column vectors.
For any vector $\xv\in\Fq^{k}$ and any
$1\leq i \leq n$, we define the $n\times k$ matrix $\Row_{i}(\xv)$ over $\Fq$
whose only non-zero row is the $i$--th one which equals $\xv$:

\begin{center}
	\begin{tikzpicture}
	\node at (0,0) {$\Row_{i}(\xv) \eqdef \begin{pmatrix}
		0 & \cdots & 0 \\ 
		& \xv & \\
		0 & \cdots & 0 \\ 
		\end{pmatrix}$};
	\draw[<-] (1.7,0) -- (2.25,0);
	\node at (2.5,0) {$i$.};
\end{tikzpicture} 
\end{center}
We now build in polynomial time in the size of $\Am$ and $\Bm$ the
following matrix codes of $\Mspace{k+n}{k}{\Fq}$:
\begin{align*}
\Ccm &\eqdef \left\{ \sum_{i=1}^{n}\lambda_i\begin{pmatrix} \transpose{\av}_i \av_{i} \\
\Row_{i}({\av}_i)\\
\end{pmatrix}  : \lambda_i \in \Fq \right\}, \\
\Dcm &\eqdef \left\{\sum_{i=1}^{n}\lambda_i\begin{pmatrix} \transpose{\bv}_i\bv_{i} \\
  \Row_{i}({\bv}_i) \\
\end{pmatrix} : \lambda_i \in \Fq \right\}.
\end{align*} 
The following lemma is crucial for our reduction, it justifies
our construction of $\Ccm$ and $\Dcm$.

\begin{lemma}\label{lemma:int} Let $\Um\in\GL{k+n}{\Fq}$ and
  $\Vm\in \GL{k}{\Fq}$ which verify:
	\[
	\Ccm = \Um\Dcm\Vm. 
	\]
	Then, there exists a permutation $\sigma \in \mathfrak S_n$
    and $\alpha_1, \dots, \alpha_n \in \Fq^\times$ such that:
	\[
	\begin{pmatrix} \transpose{\av}_i\av_{i} \\
      \Row_{i}({\av}_i)\\
	\end{pmatrix} =  \alpha_{\sigma (i)} \Um \begin{pmatrix} \transpose{\bv}_{\sigma (i)} \bv_{\sigma (i)} \\
      \Row_{\sigma (i)}({\bv}_{\sigma (i)}) \\
	\end{pmatrix} \Vm.
	\]
\end{lemma}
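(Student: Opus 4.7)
The plan is to characterize the rank-$1$ elements of $\Ccm$ and $\Dcm$ as exactly the $\Fq^\times$-multiples of the generators $\Cm_i \eqdef \begin{pmatrix}\transpose{\av}_i\av_i \\ \Row_i(\av_i)\end{pmatrix}$ and $\Dm_i \eqdef \begin{pmatrix}\transpose{\bv}_i\bv_i \\ \Row_i(\bv_i)\end{pmatrix}$, and then to use the fact that $\Dm \mapsto \Um\Dm\Vm$ is a rank-preserving linear bijection from $\Dcm$ onto $\Ccm$.

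First I would verify that each generator $\Cm_i$ has rank exactly one. Writing $\av_i = (a_{i,1}, \dots, a_{i,k})$, the $j$-th row of the top block is $a_{i,j}\av_i$, while the bottom block has a single non-zero row, namely $\av_i$ in position $k+i$. All rows of $\Cm_i$ are therefore scalar multiples of $\av_i$, which is non-zero by the pairwise linear independence of the columns of $\Am$. Inspection of the bottom blocks also shows that $\Cm_1, \dots, \Cm_n$ are linearly independent, so $\dim\Ccm = n$; the same holds for $\Dcm$.

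The key step is to show that every rank-$1$ element $\Mm = \sum_{i=1}^n \lambda_i \Cm_i$ of $\Ccm$ is a scalar multiple of a single $\Cm_i$. The crucial observation is that the $(k+i)$-th row of $\Mm$ equals $\lambda_i \av_i$. If two coefficients $\lambda_i$ and $\lambda_{i'}$ with $i \neq i'$ were simultaneously non-zero, the two corresponding rows of $\Mm$ would be non-zero, and the rank-$1$ condition would force them to be proportional, thereby forcing $\av_i$ and $\av_{i'}$ to be proportional, contradicting the pairwise linear independence of the columns of $\Am$. Hence at most one $\lambda_i$ is non-zero, and the analogous statement for $\Dcm$ follows from the same argument applied to the $\bv_i$'s.

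Finally, since $\Um\Dcm\Vm = \Ccm$ with $\Um,\Vm$ invertible, the map $\Dm \mapsto \Um\Dm\Vm$ is a rank-preserving linear bijection from $\Dcm$ onto $\Ccm$. For each $i$, the element $\Cm_i \in \Ccm$ has rank $1$, hence its preimage $\Um^{-1}\Cm_i\Vm^{-1} \in \Dcm$ also has rank $1$, and by the characterization above it equals $\alpha_{\sigma(i)}\Dm_{\sigma(i)}$ for some index $\sigma(i) \in \{1,\dots,n\}$ and some scalar $\alpha_{\sigma(i)} \in \Fq^\times$, which gives the claimed identity. To check that $\sigma$ is a permutation it suffices to verify injectivity: if $\sigma(i) = \sigma(i')$ with $i \neq i'$, then $\Cm_i$ and $\Cm_{i'}$ would be proportional, which is impossible as their bottom blocks have non-zero entries in distinct row indices. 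The only non-formal step is the rank-$1$ characterization in the second paragraph; everything else follows from the bijectivity and rank-preservation of the transformation $\Dm \mapsto \Um\Dm\Vm$.
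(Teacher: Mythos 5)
Your proof is correct and rests on the same key observation as the paper's: a linear combination $\sum_j \lambda_j \Dm_j$ has rank at least $2$ as soon as two of the $\lambda_j$'s are nonzero, because the bottom $n\times k$ block then has two nonzero rows, $\lambda_j\bv_j$ and $\lambda_{j'}\bv_{j'}$, which are independent by the pairwise independence of the columns of $\Bm$. You package this as a characterization of the rank-one elements of each code and then invoke rank-preservation of $\Xm \mapsto \Um\Xm\Vm$, whereas the paper expands $\Cm_i$ directly in the generators $\Um\Dm_j\Vm$ of $\Um\Dcm\Vm$ and applies the rank argument to the coefficient vector; the two are the same argument up to presentation.

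One place where your write-up is actually cleaner: to conclude that $\sigma$ is a permutation you prove injectivity, observing that $\sigma(i)=\sigma(i')$ would force $\Cm_i$ and $\Cm_{i'}$ to be proportional, which is impossible because their bottom blocks have their unique nonzero row in distinct positions $k+i$ and $k+i'$. The paper's argument at the corresponding point instead re-derives, for a fixed $i$, that the index $j$ is unique (which the rank bound had already given), and as written does not address injectivity across different $i$'s; your version is the one that actually closes the permutation claim. Your extra check that the $\Cm_i$'s are linearly independent is not strictly needed for the existence statement in the lemma but does no harm.
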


\begin{proof}
  By definition, for any $1 \leq i \leq n$ there exist
  $\alpha_1,\dots,\alpha_{n} \in \Fq$ such that:
	\[
	\begin{pmatrix} \transpose{\av}_i\av_{i} \\
	\Row_{i}({\av}_i)\\
  \end{pmatrix} = \sum_{j=1}^{n} \alpha_{j} \Um \begin{pmatrix} \transpose{\bv}_j\bv_{j} \\
	\Row_{j}({\bv}_j) \\
	\end{pmatrix} \Vm = \Um\left( \sum_{j=1}^{n} \alpha_{j} \begin{pmatrix} \transpose{\bv}_j \bv_{j} \\
	\Row_{j}({\bv}_j) \\
	\end{pmatrix}\right) \Vm.
    \]
	But now, since $\Um$ and $\Vm$ are non-singular,
	\[
	\rk\left( \Um\left( \sum_{j=1}^{n} \alpha_{j} \begin{pmatrix} \transpose{\bv}_i\bv_{i} \\
	\Row_{i}({\bv}_i) \\
	\end{pmatrix}\right) \Vm\right) =  \rk\left( \sum_{j=1}^{n} \alpha_{j} \begin{pmatrix} \transpose{\bv}_j\bv_{j} \\
	\Row_{j}({\bv}_j) \\
	\end{pmatrix}\right).
	\]	
	Recall that the rows of $\Row_{j}(\bv_j)$ are all zero but the
    $j$--th one which equals $\bv_{j}$. Here, by hypothesis, the
    $\bv_{j}$'s are pairwise linearly independent.  Therefore, if
    there were at least two non zero $\alpha_j$'s, then we would have:
	\[
	\rk\left( \sum_{j=1}^{n} \alpha_{j} \begin{pmatrix} \bv_{j}\transpose{\bv}_j \\
	\Row_{j}(\transpose{\bv}_j) \\
	\end{pmatrix}\right) \geq 2. 
\]
Indeed, the considered matrix would have two independent rows in its
lower part.  A contradiction since \(
      \begin{pmatrix} \av_{i}\transpose{\av}_i \\
        \Row_{i}(\transpose{\av}_i) \\ \end{pmatrix} 
      \)
      has rank $1$,
    which proves
    the first part of the lemma.
    
    There remains to prove that the map $i \mapsto j$
    is a permutation of $\{1,\dots,n\}$.
    Suppose that
	\[
	\begin{pmatrix} \transpose{\av}_i \av_{i}\\
	\Row_{i}({\av}_i)\\
	\end{pmatrix} =  \alpha_{u} \Um \begin{pmatrix} \transpose{\bv}_{u}\bv_{u} \\
	\Row_{u}({\bv}_{u}) \\
  \end{pmatrix} \Vm =  \alpha_{v} \Um \begin{pmatrix} \transpose{\bv}_{v}\bv_{v} \\
	\Row_{v}({\bv}_{v}) \\
	\end{pmatrix} \Vm
	\]
	for two distinct non-zero $\alpha_u$ and $\alpha_v$. Since $\Um$
    and $\Vm$ are non-singular, the previous equality
    implies that:
	\[
      \alpha_u
	\begin{pmatrix}  \transpose{\bv}_{u} \bv_{u} \\
	\Row_{u}({\bv}_{u}) \\
  \end{pmatrix} = \alpha_v
  \begin{pmatrix} \transpose{\bv}_{v} \bv_{v} \\
	\Row_{v}({\bv}_{v}) \end{pmatrix}.
	\]
    This contradicts the definition of $\Row_{u}({\bv}_{u})$ and
    $\Row_{v}({\bv}_{v})$
\end{proof}

Let us now prove that $(\Ccv,\Dcv)$ (codes of generator matrices $\Am$ and $\Bm$) is a  positive instance of the monomial equivalence problem if and only if $(\Ccm,\Dcm)$ is a positive instance of
\MCEP{}.

\subsection{If $(\Ccv, \Dcv)$
	is a positive instance of the monomial equivalence problem then $(\Ccm, \Dcm)$ is a positive
	instance of \MCEP{}.}

It exists
$\Sm\in\GLk$, $\Pm \in \GL{n}{\Fq}$ a permutation matrix and
$\Dm \in \GL{n}{\Fq}$ a diagonal matrix such that:
\[
\Am = \Sm\Bm\Dm\Pm.
\]
Denote by $\alpha_1, \dots, \alpha_n \in \Fq^\times$ the diagonal entries
of $\Dm$.
Then, for any $1 \leq i \leq n$, it exists $1 \leq j \leq n$
(image of $i$ by the permutation given by $\Pm$) such that,
\[
  \av_i^\top = \alpha_{j}\Sm\bv_j^\top \quad \mbox{and} \quad
  \transpose{\av}_{i} \av_i
  =\alpha_{j}^{2}\Sm\transpose{\bv}_{j}\bv_{j}\transpose{\Sm}
\]
This gives:
\[
\Ccm = \Um\Dcm\Vm, 
\]
where
\[
\Um \eqdef \begin{pmatrix}
\Sm & \mathbf{0} \\ 
\mathbf{0} & \Pm
\end{pmatrix} \in \sqMspace{k+n}{\Fq} \quad \mbox{and} \quad \Vm
\eqdef \transpose{\Sm} \in \sqMspace{k}{\Fq},
\]
which are nonsingular matrices. Therefore, $(\Ccm,\Dcm)$ is a
 positive instance of \MCEP{}.

\subsection{If $(\Ccv, \Dcv)$
	is a positive instance of \MCEP{} then $(\Ccm, \Dcm)$ is a positive
	instance of the monomial equivalence problem.}

By definition, there exist two non-singular matrices $\Um$ and $\Vm$
such that:
\[
\Ccm = \Um\Dcm\Vm.
\]
By Lemma \ref{lemma:int}, for any $1 \leq i \leq n$, it exists
$1 \leq \sigma (i) \leq n$ and $\alpha_{\sigma (i)}\in\Fq^{\times}$ such that:
\begin{align}\label{eq:linearForm}
\begin{pmatrix} \transpose{\av}_i\av_{i} \\
\Row_{i}({\av}_i)\\
\end{pmatrix} &=  \alpha_{\sigma (i)} \Um \begin{pmatrix}\transpose{\bv}_{\sigma (i)} \bv_{\sigma (i)} \\
  \Row_{\sigma (i)}({\bv}_{\sigma (i)}) \\
\end{pmatrix} \Vm \nonumber \\
&= \alpha_{\sigma (i)}\Um\begin{pmatrix}
\transpose{\bv}_{\sigma (i)} \bv_{\sigma (i)} \Vm \\
\Row_{\sigma (i)}({\bv}_{\sigma (i)}\Vm)
\end{pmatrix}.
\end{align}
The above matrices are of rank $1$ and their row spaces are generated
by $\av_{i}$ and $\bv_{i}$. Therefore as they are equal, $\av_{i}$ and
$\bv_{\sigma(i)}\Vm$ are collinear. This yields the monomial
equivalence of $\Ccv$ and $\Dcv$.

 \section*{Conclusion}

This work has presented the equivalence problem of matrix codes
endowed with the rank metric as well as some of its natural
variants. Our contribution is threefold. First, as summarized in
Figure \ref{fig:reduction2}, we have shown that the code equivalence
problem for matrix codes (\MCEP{}) is harder than the monomial
equivalence problem. This last problem has been pursued for
many years and works on it tend to show its hardness. Thus, the
equivalence problem for matrix codes is not likely to be easy and
could for instance be considered for cryptographic applications.
  \begin{figure}[h]
	\begin{center}
		\begin{tikzpicture}
			\node at (0,0.25) {\scalebox{0.9}{\textsf{Permutation}}};
			\node at (0,-0.25) {\scalebox{0.9}{\textsf{Equivalence}}};
			\draw (0,0) circle (1);
			\node at (0,-3) {\scalebox{0.9}{\textsf{Graph}}};
			\node at (0,-3.5) {\scalebox{0.9}{\textsf{Isomorphism}}};
			\draw (0,-3.25) circle (1);
			\draw[->,thick,>=latex] (0,-2.25) -- (0,-1);
			\draw (-4,-1.625) circle (1);
			\node at (-4,-1.225) {\scalebox{0.88}{\textsf{Permutation}}};
			\node at (-4,-1.625) {\scalebox{0.88}{\textsf{Equivalence}}};
			\node at (-4,-2) {\scalebox{0.88}{\textsf{with zero}}};
			\node at (-4,-2.4) {\scalebox{0.88}{\textsf{Hull}}};
			\draw[->,thick,>=latex] (-4,-0.575)  to[bend left]  (-0.757,0.707);
			\draw[->,thick,>=latex] (-4,-2.675)  to[bend right]  (-0.757,-3.927);
			\node at (4,0.25) {\scalebox{0.9}{\textsf{Monomial}}};
			\node at (4,-0.25) {\scalebox{0.9}{\textsf{Equivalence}}};
			\draw (4,0) circle (1);
			\draw[->,thick,>=latex] (0.757,0.707) to[bend left] (3.25,0.707)  ;
			\draw[<-,thick,>=latex] (0.757,-0.707) to[bend right] (3.25,-0.707);
			\node at (2,-1.5) {If $q = n^{O(1)}$};
			
			\node at (4,-3) {\scalebox{0.9}{\textsf{Matrix Code}}};
			\node at (4,-3.5) {\scalebox{0.9}{\textsf{Equivalence}}};
			\draw[dashed] (4,-3.25) circle (1);
			\draw[<-,dashed,thick,>=latex] (4,-2.25) -- (4,-1);
			\node at (5.5,-1.725) {{\em This work}};
			
\end{tikzpicture}
		\caption{Reductions around the code equivalence problem with our contribution where notation
			``$\mathsf A \longrightarrow \mathsf B$'' means that ``Problem
			$\mathsf A$ reduces to Problem $\mathsf B$ in polynomial
			time''. }
\label{fig:reduction2}
	\end{center}
\end{figure}
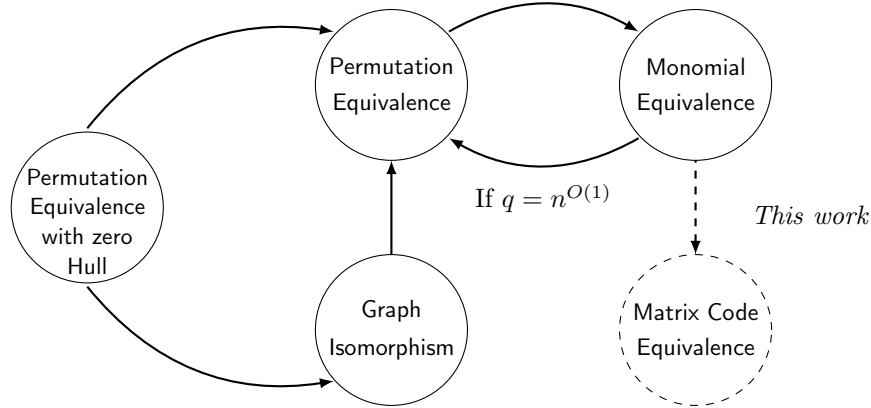
On the other hand, if one restricts instances in the equivalence
problem to $\Fqm$-linear codes (particular matrix codes that are
appealing for cryptographic applications) or only considers the
right-equivalence then it leads to an easy problem {\em in the worst case}. Using algorithms
from the mathematical field of algebras we have shown that these
problems fall down in $\mathcal{P}$ or $\mathcal{ZPP}$ depending of
the ground field size $q$.  To our opinion, this work may have
extensions that are of interest.  \newline

{\noindent \bf Classification of matrix codes.}  Our algorithm to
decide if two matrix codes are right equivalent takes advantage of the
fact that right-equivalence only shifts (multiplicatively) right
stabilizer algebras of codes by a non-singular matrix. In this way,
stabilizer algebras share, roughly speaking, the same algebraic
structure and as we have shown, by decomposing them we can decide if
two codes are right equivalent. Interestingly, our algorithm
decomposes right stabilizer algebras as ``minimal atoms''. Then it
punctures each code according to these atoms and tries to put them in
correspondence.

This may remind techniques that are used in \textit{Support Splitting}
approaches \cite{L82,S00}. It consists for two codes endowed with the
Hamming metric to puncture them and to look if they both satisfy the
same ``invariants''. In \cite{L82} is proposed to look at the weight
distribution of the punctured codes while in \cite{S00} it is this
distribution but for hulls. In this light, decomposition of right
stabilizer algebras can be seen as a good invariant for matrix
codes. By {\em good} we mean that it is an efficient tool to
solve equivalence problems. On the other hand, such invariant
remains weakly discriminant since for arbitrary codes, the
stabilizer algebra will be trivial.

This decomposition of stabilizer algebras could for instance be used
to classify matrix codes. Furthermore, stabilizer algebras may help us
to identify some properties that could be useful for decoding
algorithms. This would be particularly interesting as we know very few
families of matrix codes that we can decode efficiently. Except simple
codes \cite{SKK10} (that have a trivial structure), all matrix codes
with an efficient decoding algorithm are $\Fqm$-linear
\cite{G85,GMRZ13} which exactly corresponds to the case where codes
have a rich left stabilizer algebraic structure (that we use to decide
the equivalence of these codes).

Finally, it should be emphasized that in the present article, we made
the choice to work only over finite fields, which permits a simpler
treatment since any central simple algebra is isomorphic to a matrix
algebra in this setting (trivial Brauer group). However, some
literature exists on codes over infinite fields
\cite{augot2014generalization,
  augot2013rank,augot2018generalized,augot2020} and the similar
question of solving code equivalence over arbitrary fields makes
sense and is left as an open question.
Such a treatment will probably represent a slightly harder task while
involving non trivial central simple algebras.

\medskip

{\noindent \bf Our Reduction. } We can deduce from our reduction that
any algorithm solving a ``particular'' instance of the matrix code
equivalence problem (codes are generated by matrices of rank one)
provides an algorithm solving the monomial equivalence
problem. Therefore, this reduction gives a new manner to solve the
monomial equivalence problem. For instance we could model the matrix
equivalence problem into a system of polynomial equations that we
would solve with Gr\"obner bases techniques. This approach, thanks to
our reduction, would give a new modelling of the monomial equivalence
problem. It may be interesting to study the complexity of this
approach as Gr\"obner and to compare with the results of \cite{S17}
where permutation equivalence of codes is treating by solving a system
of polynomial equations.

\bibliographystyle{alpha}

\begin{thebibliography}{GMRZ13}

\bibitem[AAB{\etalchar{+}}17]{AABBBDGZ17}
Carlos {Aguilar Melchor}, Nicolas Aragon, Slim Bettaieb, Lo{\"{\i}}c Bidoux,
  Olivier Blazy, Jean-Christophe Deneuville, Philippe Gaborit, and Gilles
  Z{\'e}mor.
\newblock Rank quasi cyclic {(RQC)}.
\newblock First round submission to the NIST post-quantum cryptography call,
  November 2017.

\bibitem[ABD{\etalchar{+}}19]{ABDGHRTZABBBO19}
Nicolas Aragon, Olivier Blazy, Jean-Christophe Deneuville, Philippe Gaborit,
  Adrien Hauteville, Olivier Ruatta, Jean-Pierre Tillich, Gilles Z{\'{e}}mor,
  Carlos {Aguilar Melchor}, Slim Bettaieb, Lo{\"i}c Bidoux, Bardet Magali, and
  Ayoub Otmani.
\newblock {ROLLO} (merger of {Rank-Ouroboros, LAKE and LOCKER}).
\newblock Second round submission to the NIST post-quantum cryptography call,
  March 2019.

\bibitem[ABG{\etalchar{+}}19]{ABGHZ19}
Nicolas Aragon, Olivier Blazy, Philippe Gaborit, Adrien Hauteville, and Gilles
  Z{\'{e}}mor.
\newblock Durandal: a rank metric based signature scheme.
\newblock In {\em Advances in Cryptology - {EUROCRYPT} 2019 - 38th Annual
  International Conference on the Theory and Applications of Cryptographic
  Techniques, Darmstadt, Germany, May 19-23, 2019, Proceedings, Part {III}},
  volume 11478 of {\em LNCS}, pages 728--758. Springer, 2019.

\bibitem[ACLN20]{augot2020}
Daniel Augot, Alain Couvreur, Julien Lavauzelle, and Alessandro Neri.
\newblock Rank-metric codes over arbitrary {G}alois extensions and rank
  analogues of {R}eed--{M}uller codes.
\newblock ArXiv:2006.14489, June 2020.

\bibitem[ALR13]{augot2013rank}
Daniel Augot, Pierre Loidreau, and Gwezheneg Robert.
\newblock Rank metric and {G}abidulin codes in characteristic zero.
\newblock In {\em 2013 IEEE International Symposium on Information Theory},
  pages 509--513. IEEE, 2013.

\bibitem[ALR18]{augot2018generalized}
Daniel Augot, Pierre Loidreau, and Gwezheneg Robert.
\newblock Generalized {G}abidulin codes over fields of any characteristic.
\newblock {\em Des., Codes Cryptogr.}, 86(8):1807--1848, 2018.

\bibitem[Aug14]{augot2014generalization}
Daniel Augot.
\newblock Generalization of {G}abidulin codes over fields of rational
  functions.
\newblock In {\em 21st International Symposium on Mathematical Theory of
  Networks and Systems (MTNS 2014)}, 2014.

\bibitem[Ber68]{B68}
Elwyn~R. Berlekamp.
\newblock Factoring polynomials over finite fields.
\newblock In E.~R. Berlekamp, editor, {\em Algebraic Coding Theory}, chapter~6.
  McGraw-Hill, 1968.

\bibitem[Ber70]{B70}
Elwyn~R. Berlekamp.
\newblock Factoring polynomials over large finite fields.
\newblock {\em Math. Comp.}, 24:713--735, 1970.

\bibitem[Ber03]{B03}
Thierry~P. Berger.
\newblock Isometries for rank distance and permutation group of {G}abidulin
  codes.
\newblock {\em IEEE Trans. Inform. Theory}, 49(11):3016--3019, 2003.

\bibitem[BG13]{BG13}
Aleams Barra and Heide Gluesing{-}Luerssen.
\newblock Mac{W}illiams extension theorems and the local-global property for
  codes over rings.
\newblock {\em CoRR}, abs/1307.7159, 2013.

\bibitem[BOS19]{BOS19}
Magali Bardet, Ayoub Otmani, and Mohamed Saeed{-}Taha.
\newblock Permutation code equivalence is not harder than graph isomorphism
  when hulls are trivial.
\newblock In {\em Proc. IEEE Int. Symposium Inf. Theory - ISIT~2019}, pages
  2464--2468, July 2019.

\bibitem[Bre11]{B11c}
Murray~R. Bremner.
\newblock How to compute the {W}edderburn decomposition of a finite-dimensional
  associative algebra.
\newblock {\em Groups Complex. Cryptol.}, 3(1):47--66, 2011.

\bibitem[CZ81]{CZ81}
David~G. Cantor and Hans Zassenhaus.
\newblock A new algorithm for factoring polynomials over finite fields.
\newblock {\em Math. Comp.}, pages 587--592, 1981.

\bibitem[Del78]{D78}
Philippe Delsarte.
\newblock Bilinear forms over a finite field, with applications to coding
  theory.
\newblock {\em J. Comb. Theory, Ser. {A}}, 25(3):226--241, 1978.

\bibitem[DK94]{DK94}
Yurj~A. Drodz and Vladimir~V. Kirichenko.
\newblock {\em Finite dimensional algebras}.
\newblock Springer--Verlag Berlin Heidelberg, 1994.
\newblock Original Russian edition published by: Publisher of Kiev State
  University, Kiev 1980, Translated by V.~Dlab.

\bibitem[Feu09]{F09b}
Thomas Feulner.
\newblock The automorphism groups of linear codes and canonical representatives
  of their semilinear isometry classes.
\newblock {\em Adv. Math. Commun.}, 3(4):363--383, 2009.

\bibitem[FR85]{FR85}
Katalin Friedl and Lajos R\'{o}nyai.
\newblock Polynomial time solutions of some problems of computational algebra.
\newblock In {\em Proceedings of the Seventeenth Annual ACM Symposium on Theory
  of Computing}, STOC '85, pages 153--162, New York, NY, USA, 1985. Association
  for Computing Machinery.

\bibitem[Gab85]{G85}
Ernest~M. Gabidulin.
\newblock Theory of codes with maximum rank distance.
\newblock {\em Problemy Peredachi Informatsii}, 21(1):3--16, 1985.

\bibitem[Gab08]{G08}
Ernst~M. Gabidulin.
\newblock Attacks and counter-attacks on the {GPT} public key cryptosystem.
\newblock {\em Des. Codes Cryptogr.}, 48(2):171--177, 2008.

\bibitem[GMRZ13]{GMRZ13}
Philippe Gaborit, Ga{\'e}tan Murat, Olivier Ruatta, and Gilles Z{\'e}mor.
\newblock Low rank parity check codes and their application to cryptography.
\newblock In {\em Proceedings of the Workshop on Coding and Cryptography
  WCC'2013}, Bergen, Norway, 2013.

\bibitem[GO01]{GO01}
Ernst~M. Gabidulin and Alexei~V. Ourivski.
\newblock Modified {GPT} {PKC} with right scrambler.
\newblock {\em Electron. Notes Discrete Math.}, 6:168--177, 2001.

\bibitem[GPT91]{GPT91}
Ernst~M. Gabidulin, A.~V. Paramonov, and O.~V. Tretjakov.
\newblock Ideals over a non-commutative ring and their applications to
  cryptography.
\newblock In {\em Advances in Cryptology - EUROCRYPT'91}, number 547 in LNCS,
  pages 482--489, Brighton, April 1991.

\bibitem[GRSZ14]{GRSZ14a}
Philippe Gaborit, Olivier Ruatta, Julien Schrek, and Gilles Z{\'{e}}mor.
\newblock Ranksign: An efficient signature algorithm based on the rank metric
  (extended version on arxiv).
\newblock In {\em Post-Quantum Cryptography~2014}, volume 8772 of {\em LNCS},
  pages 88--107. Springer, 2014.

\bibitem[HP03]{HP03}
W.~Cary Huffman and Vera Pless.
\newblock {\em Fundamentals of error-correcting codes}.
\newblock Cambridge University Press, Cambridge, 2003.

\bibitem[HR14]{HR14}
Ishay Haviv and Oded Regev.
\newblock On the lattice isomorphism problem.
\newblock In {\em Proceedings of the Twenty-Fifth Annual ACM-SIAM Symposium on
  Discrete Algorithms}, SODA '14, pages 391--404, USA, 2014. Society for
  Industrial and Applied Mathematics.

\bibitem[Hua51]{H51}
Loo-Keng Hua.
\newblock A theorem on matrices over a field and its applications.
\newblock {\em J. Chinese Math. Soc.}, 1(2):109--163, 1951.

\bibitem[Leo82]{L82}
Jeffrey Leon.
\newblock Computing automorphism groups of error-correcting codes.
\newblock {\em IEEE Trans. Inform. Theory}, 28(3):496--511, 1982.

\bibitem[LN16]{LN16}
Dirk Liebhold and Gabriele Nebe.
\newblock Automorphism groups of gabidulin-like codes.
\newblock {\em CoRR}, abs/1603.09565, 2016.

\bibitem[LTZ17]{LTZ17}
Guglielmo Lunardon, Rocco Trombetti, and Yue Zhou.
\newblock On kernels and nuclei of rank metric codes.
\newblock {\em Journal of Algebraic Combinatorics}, 46, 2017.

\bibitem[Mac62]{M62}
Florence~J. MacWilliams.
\newblock {\em Combinatorial properties of elementary abelian groups}.
\newblock PhD thesis, Radcliffe College, 1962.

\bibitem[Mor14]{M14}
Katherine Morrison.
\newblock Equivalence for rank-metric and matrix codes and automorphism groups
  of gabidulin codes.
\newblock {\em IEEE Trans. Inform. Theory}, 60(11):7035--7046, 2014.

\bibitem[NPHT20]{NPH20}
Alessandro Neri, Sven Puchinger, and Anna-Lenna Horlemann-Trautmann.
\newblock Equivalence and characterizations of linear rank-metric codes based
  on invariants.
\newblock {\em Linear Algebra and its Applications}, pages 418--469, 2020.

\bibitem[Ore33]{O33}
Oystein Ore.
\newblock On a special class of polynomials.
\newblock {\em Trans. Amer. Math. Soc.}, 35(3):559--584, 1933.

\bibitem[PR97]{PR97}
Erez Petrank and Ron. Roth.
\newblock Is code equivalence easy to decide?
\newblock {\em IEEE Trans. Inform. Theory}, 43(5):1602--1604, 1997.

\bibitem[R{\'o}n90]{R90}
Lajos R{\'o}nyai.
\newblock Computing the structure of finite algebras.
\newblock {\em J. Symbolic Comput.}, 9(3):355--373, 1990.

\bibitem[Sen97]{S97a}
Nicolas Sendrier.
\newblock On the dimension of the hull.
\newblock {\em SIAM J. Discrete Math.}, 10(2):282--293, 1997.

\bibitem[Sen00]{S00}
Nicolas Sendrier.
\newblock Finding the permutation between equivalent linear codes: The support
  splitting algorithm.
\newblock {\em IEEE Trans. Inform. Theory}, 46(4):1193--1203, 2000.

\bibitem[SKK10]{SKK10}
Danilo Silva, Frank~R. Kschischang, and Ralf K{\"{o}}tter.
\newblock Communication over finite-field matrix channels.
\newblock {\em {IEEE} Trans. Information Theory}, 56(3):1296--1305, 2010.

\bibitem[SS13]{SS13}
Nicolas Sendrier and Dimitris~E. Simos.
\newblock The hardness of code equivalence over and its application to
  code-based cryptography.
\newblock In {\em Post-Quantum Cryptography~2013}, volume 7932 of {\em LNCS},
  pages 203--216. Springer, 2013.

\bibitem[ST17]{S17}
Mohamed~Ahmed Saeed-Taha.
\newblock {\em Algebraic Approach for Code Equivalence}.
\newblock PhD thesis, Normandy University, France, 2017.

\end{thebibliography}
\newcommand{\etalchar}[1]{$^{#1}$}

\end{document}